\documentclass[11pt,fullpage,letterpaper]{article} 
  \usepackage[margin=1in]{geometry}
  \usepackage{listings} 
\usepackage{mathrsfs} 
\usepackage{kpfonts,comment}
\usepackage[T1]{fontenc}

\usepackage{kz_style}

 \newcommand{\myversion}{2}
\newenvironment{proofsketch}{\begin{proof}[Proof Sketch]}{\end{proof}}

\title{
\LARGE  \bf  Joint Communication-Control Strategy   Optimization 
with Partially Nested Information Structures: \\ 
The Linear-Quadratic Case}  
\author{ 
Haoyi You$^\dag$ \and \qquad\qquad  Kaiqing Zhang\thanks{The authors are affiliated with the University of Maryland, College Park, MD, USA, 20742. Emails: 
        {\tt\small \{yuriiyou,~kaiqing\}@umd.edu}.}
        }
\date{}

\usepackage{setspace}
\renewcommand{\thesection}{\Roman{section}}

\usepackage{titlesec}

\titleformat{\section}
  {\centering\normalfont\Large\bfseries}
  {\thesection.}{0.5em}{}

\begin{document} 
\maketitle

\begin{abstract}%
In this paper, we formalize a joint communication-control strategy optimization (JCCO) problem in multi-agent linear systems with quadratic costs, {under the common-information-based (CIB) framework from decentralized stochastic control.} 
For computational tractability, we focus on such JCCO problems with partially nested (PN) information structures (ISs). 
In particular, with a baseline communication protocol that leads to a PN IS, we establish a series of conditions under which the partial nestedness is preserved under the (additional) communication strategies to be optimized, while violating them  may cause {nonlinearity} of the optimal strategies in general, with open-loop communication strategies.  
We then develop a dynamic-programming-based approach to compute the optimal control strategies of JCCO with open-loop communication strategies, which yields a set of closed-form Riccati Equations.  As a byproduct of independent interest, such an approach also offers a way to solve decentralized linear-quadratic control with PN ISs and \emph{output} feedback, under the CIB framework.  
Finally, we extend such an approach to JCCOs with closed-loop communication strategies, yielding a more tractable dynamic program than an infinite-dimensional CIB-belief-based one. 
\end{abstract}

\section{Introduction}
The design of communication protocols for control has been extensively studied in both the control theory and multi-agent learning literature, 
from different perspectives and under different models. For example, 
\cite{zhang2006communication,peng2013event} investigated the design of communication channels with capacity constraints in networked control systems; 
\cite{matni2015communication,matni2016regularization} studied communication architecture and control strategy co-design via norm minimization; \cite{yuksel2013jointly,fu2012lack,maity2021optimal} investigated the joint optimization of the quantization and control policies, with the quantized signals being communicated to the controller; and 
\cite{2016learningtocommunicate,learningpropogation} developed heuristic learning-to-communicate  algorithms to jointly learn the control and communication {strategies} for accumulated  reward maximization.

In this paper, we formalize a joint communication-control strategy optimization (JCCO) problem in multi-agent linear systems with quadratic costs. Motivated by the empirical studies \cite{2016learningtocommunicate,learningpropogation}, we model communication as some \emph{actions/strategies} to be optimized \emph{jointly} with the control strategies, under an \emph{optimal control}  formulation to minimize some accumulated costs. The agents may follow some fixed \emph{baseline}  communication protocols for information sharing, and then share their private information through \emph{additional sharing}, following their communication strategies. We formalize the problem under the common-information-based (CIB) framework from decentralized stochastic control \cite{ashutosh2013team,CIBLQgames}, which has also been adopted in several recent studies on control-communication strategy co-optimization \cite{ashutoshcommunicate1,ashutoshcommunicate2,LTC}. In comparison, \cite{ashutoshcommunicate1,ashutoshcommunicate2} focused specifically on {sharing instantaneous observations and on systems} with \emph{decoupled} state dynamics. Our formalism in \cite{LTC} is more general and allows the sharing of private-information {histories in discrete-space problems, together with computational and sample complexity analyses}. 
We here focus on an instantiation of our  formalism in \cite{LTC} in the continuous-space, linear-quadratic (LQ) setting, which yields fundamentally different technical challenges as detailed below. 

To facilitate tractable computation of the optimal strategies, 
we focus on deriving \emph{dynamic-programming} (DP)-based approaches to solve  JCCO. To this end, we concentrate on such co-optimization problems with partially nested (PN)
information structures (ISs) \cite{ho1972team}. Indeed, with a \emph{fixed}  open-loop communication strategy (and thus a fixed IS), partial nestedness is known to be a favorable IS that not only ensures the \emph{linearity} of an optimal control strategy, but also yields a 
\emph{convex reformulation} of the decentralized LQ control problem \cite{ho1972team}. However, as pointed out in  \cite{QCstatefeedback}, the reformulated convex program can be too large to be computationally efficient, which precisely motivated the DP-based solution therein. Moreover, the DP-based approach can provide more insights into the structure of the optimal controller \cite{QCstatefeedback}, by identifying the proper \emph{sufficient statistics} for decision-making \cite{QCstatefeedback,ashutosh2013team,AA,nayyar2015structural}.  

Yet, the results in  \cite{QCstatefeedback,nayyar2015structural} do not apply to our setting, as they focused on the special cases with \emph{factorized} states, and with either \emph{state-feedback} \cite{QCstatefeedback} or \emph{multi-tree} coupling and communication graphs \cite{nayyar2015structural}.  Under the CIB framework \cite{ashutosh2013team}, \cite{AA} considered the output-feedback setting, and showed that when control strategies are \emph{restricted to be linear}, and the private-information component of the strategy is \emph{fixed}, the resulting problem is linear-quadratic-Gaussian (LQG) control, which can be solved via Riccati Equations. However, it remains unclear how to solve the overall decentralized LQ control problem by further optimizing over the private-information component. 
Since given a \emph{fixed} communication strategy, our JCCO problem reduces to a decentralized LQ control one, we need to advance these results in order to fully address JCCO. 
We thus make the following contributions: 

\paragraph{Contributions.}  (i) We formalize \emph{joint communication-control strategy optimization} as an optimal control problem with linear systems and quadratic costs, under the common-information-based framework  \cite{ashutosh2013team}. (ii) For better computational tractability, 
we focus on JCCO with 
\emph{partially nested} baseline ISs  \cite{ho1972team}, and identify structural conditions under which partial nestedness is preserved under additional information sharing, while violating them may cause nonlinearity or non-existence of the optimal control strategies under  \emph{open-loop}  communication strategies. (iii) We then develop a closed-form dynamic program, i.e., a set of Riccati Equations, to solve for the optimal control strategy with respect to {fixed} {open-loop}  communication strategies in JCCO. This program is also of independent interest: it may be viewed as a new way to solve decentralized linear-quadratic control with PN ISs and \emph{output} feedback, under the CIB framework, which thus advances the results in \cite{AA}.    
The key is to identify a novel connection between the (strictly) PN IS \cite{ho1972team} and the strategy-independent CIB belief (SI-CIB) condition \cite{CIBLQgames}. (iv) We then extend such an approach to JCCO with \emph{closed-loop} communication strategies, yielding reduced-dimensional sufficient statistics and a more tractable dynamic program than an infinite-dimensional belief-based one when applying the CIB framework to general LQ settings directly. 

\section{Preliminaries} 
\noindent\textbf{Notation.}  Random variables are denoted by bold upper case letters, and their realizations are denoted by the corresponding non-bold upper case letters. 
For any matrix $X$, we use 
$X^\dag$ to denote the 
pseudo-inverse of $X$ if $X\succeq 0$. For any two integers $0\le a<b$, we denote $[a:b]:=\{a,a+1,\cdots,b\}$, and denote $[a]=[1:a]$. 
For any vector space $\cX$, we denote by $\cP(\cX)$  the space of all probability distributions over $\cX$.

\subsection{Problem formulation}

For a team of $n>1$ agents, a \emph{joint communication-control  strategy  optimization} problem with \emph{output feedback} can be described by the following tuple: $\cD=\langle H, \cX,\{\cY_i\}_{i\in[n]},\{\cU_i\}_{i\in[n]}, \{\cM_{i,h}\}_{i\in[n],h\in[H]}, \{A_h\}_{h\in[H]},
\{B_{i,h}\}_{i\in[n],h\in[H]}, \{E_{i,h}\}_{i\in[n],h\in[H]}, \{Q_h^1\}_{h\in[H+1]},\{Q_h^2\}_{h\in[H]},\allowbreak\{\cK_h\}_{h\in[H]}\rangle$, where $H$ is the time horizon and   $\bX_h\in\cX=\RR^{d_x}$ is the state. At each timestep $ h\in[H]$, each agent $i\in[n]$ receives a noisy observation $\bY_{i,h}\in\cY_i=\RR^{d_y^i}$ of the state $\bX_h$, and chooses a control action $\bU_{i,h}\in\cU_i=\RR^{d_u^i}$. 
At timestep $h\in[H]$, 
we denote by $\bU_h=\begin{bmatrix}
    \bU_{1,h}^\top&\bU_{2,h}^\top&\cdots&\bU_{n,h}^\top
\end{bmatrix}^\top$ the joint control action of all the $n$ agents, and by $\cU=\RR^{\sum_{i=1}^nd_u^i}$ the joint control action space; we denote by $\bY_h=\begin{bmatrix}
    \bY_{1,h}^\top&\bY_{2,h}^\top&\cdots&\bY_{n,h}^\top
\end{bmatrix}^\top$ the joint observation, and by $\cY=\RR^{\sum_{i=1}^nd_y^i}$ the joint observation space. 
The system evolves as follows for each timestep $h\in[H]$:
\vspace{-6pt}
\begin{equation}\label{equ:system_dynamics}
        \bX_{h+1}=A_h\bX_h+\sum_{i=1}^nB_{i,h}\bU_{i,h}+\bW_{0,h}, \qquad\bY_{i,h}=E_{i,h}\bX_h+\bW_{i,h}, \forall i\in[n],
\end{equation}
where for each $i\in[0:n], h\in[H]$, $\bW_{i,h}$ 
is a Gaussian random variable with distribution $\cN(\bm{0}, \Sigma_{i,h})$ for some covariance matrix $\Sigma_{i,h}\succeq 0$, while $A_h, E_{i,h}$  are matrices of appropriate dimensions. We define $B_h:=\begin{bmatrix}
    B_{1,h}&B_{2,h}&\cdots&B_{n,h}
\end{bmatrix}$ and $E_h:=\begin{bmatrix}
    E_{1,h}^\top&E_{2,h}^\top&\cdots&E_{n,h}^\top
\end{bmatrix}^\top$. 
Here, we assume that the initial state $\bX_1$ follows a Gaussian distribution $\cN(\bm{0}, \Sigma_{1})$  with covariance matrix $\Sigma_{1}\succeq 0$, and that $\bX_1$ and $\{\bW_{i,h}\}_{i\in[0:n],h\in[H]}$ are mutually independent. 

At timestep $h\in[H]$, each agent will share part of her information with other agents. The shared information $\bZ_h:=\bZ_h^b\cup \bZ_h^a$ consists of two parts, the \emph{baseline-sharing} part $\bZ_h^b$, which originates  from some existing sharing protocol, and the \emph{additional-sharing}  part $\bZ_h^a$, which is \emph{decided/learned} by agents, with joint additional-sharing information $\bZ_h^a:=\cup_{i=1}^n \bZ_{i,h}^a$. The baseline-sharing part is introduced for generality (i.e., $\bZ_h^b$ may be set as  $\emptyset$), and some benign information structures to be introduced later may require a certain amount of baseline sharing; see \cite{liu2023tractable,LTC} for concrete examples.  
At timestep $h$, the common information among all the agents is thus defined as the union of all the shared information so far: $\bC_{h^-}=\cup_{t=1}^{h-1}\bZ_t\cup \bZ_h^b, \bC_{h^+}=\cup_{t=1}^{h}\bZ_t$, where  $\bC_{h^-}$ and $\bC_{h^+}$ denote the common information \emph{before} and \emph{after}  additional sharing, respectively. The private information of agent $i$ at timestep $h$ {before} and {after}  additional sharing is denoted by $\bP_{i,h^-},\bP_{i,h^+}$, respectively, where $\bP_{i,h^-}\subseteq\{\bY_{1:h},\bU_{1:h-1}\}\backslash\bC_{h^-},\bP_{i,h^+}\subseteq\{\bY_{1:h},\bU_{1:h-1}\}\backslash\bC_{h^+}$. Then, we define  $\bI_{i,h^-}:=\bP_{i,h^-}\cup \bC_{h^-}, \bI_{i,h^+}:=\bP_{i,h^+}\cup\bC_{h^+}$ as the  information available to agent $i$ before and after additional sharing, respectively. We denote by $\bP_{h^-}:=[\bP_{1,h^-}^\top~\cdots~\bP_{n,h^-}^\top]^\top$ the joint private information at timestep $h$ before additional sharing, and similarly by $\bP_{h^+}$ the joint private information after additional sharing.
We denote by $\cC_{h^-}, \cC_{h^+},\cP_{i,h^-},\cP_{i,h^+},\cP_{h^-},\cP_{h^+}, \cI_{i,h^-},\cI_{i,h^+}$ the sets of all possible values of the respective random variables $\bC_{h^-}, \bC_{h^+},\bP_{i,h^-},\bP_{i,h^+},\bP_{h^-},\bP_{h^+},\bI_{i,h^-},\bI_{i,h^+}$.

At each timestep $h$, in addition to the control action, each agent $i$ needs to choose a communication action $\bM_{i,h}\in\cM_{i,h}$ to determine what information $\bZ_{i,h}^a$ she will share, as specified below. We use $\bM_{h}:=(\bM_{1,h},\cdots,\bM_{n,h})\in\cM_h$ to denote the joint communication action 
at timestep $h$. 

We assume that the evolution of common and private information in $\cD$ follows the rules below, where we adopt the convention that any quantity at timestep $0$ is empty/null. 
\begin{assumption}[Information Evolution]\label{ass: evolution rule} ~
\begin{enumerate}[(a)] 
        \item (Baseline sharing) The baseline sharing evolves as $\bZ_h^b=\chi_h(\bP_{(h-1)^+},\bU_{h-1},\bY_{h})$ for some fixed projection function $\chi_h$.
        \item (Additional sharing) For each $i\in[n]$,  the additional sharing evolves as $\bZ_{i,h}^a=\phi_{i,h}(\bM_{i,h},\bP_{i,h^-})$ for some function $\phi_{i,h}$, where for each realization $M_{i,h}\in \cM_{i,h}$, $\phi_{i,h}(M_{i,h}, \cdot)$ is a fixed projection function. The joint additional sharing $\bZ_h^a$ is thus generated by $\bZ_h^a=\phi_h(\bM_h,\bP_{h^-})$ for some  function $\phi_h$.
        \item For each $i\in[n], \bP_{i,h^-}=\zeta_{i,h}(\bP_{i,(h-1)^+},\bU_{i,h-1}, \bY_{i,h})$  for some fixed projection function $\zeta_{i,h}$, and the joint private information thus evolves as $\bP_{h^-}=\zeta_{h}(\bP_{(h-1)^+},\bU_{h-1},\bY_h)$ for some projection function $\zeta_{h}$.
        \item For each $i\in[n], \bP_{i,h^+}=\bP_{i,h^-}\backslash\bZ_{i,h}^a$.
        \item For each $i\in[n]$ and $h\in[H]$, $\bY_{i,h}\in\bI_{i,h^-}$ and $\bI_{i,h^-}\subseteq\bI_{i,h^+}$; for $h\in[H-1]$, $\bI_{i,h^+}\subseteq\bI_{i,(h+1)^-}$.
    \end{enumerate}
\end{assumption}

Note that (a) and (c) on baseline sharing follow from those in \cite{CIBLQgames,liu2023tractable,LTC}. (b) and (d) on additional sharing dictate how the communication action affects the additional sharing. For example, a common choice of $(\cM_{i,h},\phi_{i,h})$ is that $\cM_{i,h}=\{0,1\}^{\max\limits_{P_{i,h^-}\in \cP_{i,h^-}}|P_{i,h^-}|}$,  where $|P_{i,h^-}|$ denotes the number of elements in $P_{i,h^-}$, and for any $P_{i,h^-}\in\cP_{i,h^-},M_{i,h}\in\cM_{i,h}$, $\phi_{i,h}(\bP_{i,h^-}=P_{i,h^-},\bM_{i,h}=M_{i,h})$ consists of the $k$-th element $(k\in[|P_{i,h^-}|])$ of $P_{i,h^-}$ if and only if the $k$-th element of $M_{i,h}$ is 1. Condition (e) means that each agent has full memory of the available information, and has a \emph{closed-loop} (baseline) information structure with $\bY_{i,h}\in \bI_{i,h^-}$, which are standard assumptions also made in  \cite{LTC,yuksel2023stochastic}.

\subsection{Objective and strategies}
At each timestep $h\in[H]$, agents will incur two types of costs, communication cost $\kappa_h$ and control cost $c_h$. The control cost inherits the cost of a decentralized LQG problem without communication, see e.g., \cite{AA,QCstatefeedback},  and is defined as $c_h=\bX_h^\top Q_h^1\bX_h+\bU_h^\top Q_h^2\bU_h$, with a terminal cost of $c_{H+1}=\bX_{H+1}^\top Q_{H+1}^1 \bX_{H+1}$.  Here, for every $h\in[H]$, $Q_h^1,Q_h^2\succeq0$ and $Q_{H+1}^1\succeq0$ are symmetric matrices of appropriate dimensions. 
The communication cost is determined by the additional sharing and is defined as $\kappa_h=\cK_h(\bM_h)$ for some function $\cK_h:\cM_h\rightarrow \RR_+$, depending on the communication action $\bM_h$ chosen from a finite set $\cM_h$.

At each timestep $h$, each agent $i$  needs to choose a communication action $\bM_{i,h}$ and a control action $\bU_{i,h}$,  based on her communication strategy $g_{i,h}^m$ and control strategy $g_{i,h}^a$, respectively. 
The class of \emph{open-loop} communication strategies is defined as $\cG_{i,h}^m:=\{g_{i,h}^m:\{\ast\}\rightarrow \cM_{i,h}\}$. We identify each such strategy with its single value $g_{i,h}^m(\ast)$, so $\bM_{i,h}$ is pre-selected. In this case, we consider the control strategies $g_{i,h}^a\in \cG_{i,h}^a:=\{g_{i,h}^a:\cI_{i,h^+}\rightarrow \cU_{i}\}$. Also, we may extend the setting to the class of \emph{closed-loop} communication strategies,  $\cG_{i,h}^m:=\{g_{i,h}^m:\cI_{i,h^-}\times \cM_{1:h-1}\rightarrow \cM_{i,h}\}$ where $\bM_{i,h}$ is thus chosen based on $\bI_{i,h^-}$ and $\bM_{1:h-1}$. The communication actions $\bM_{1:h}$ are publicly observed. In this case, we consider the control strategies $g_{i,h}^a\in \cG_{i,h}^a:=\{g_{i,h}^a:\cI_{i,h^+}\times \cM_{1:h}\rightarrow \cU_{i}\}$ for generality. 
We will focus on the open-loop communication strategies throughout and extend our approach to the closed-loop ones in \S\ref{sec: closed-loop}. 
We use $g_h^m:=(g_{1,h}^m,\cdots,g_{n,h}^m)$ and $g_h^a:=(g_{1,h}^a,\cdots,g_{n,h}^a)$  to denote the joint communication and control strategies, respectively.  
We denote by $\cG_h^a,  \cG_h^m$ the spaces of these joint  strategies.

The total cost (objective) of all the agents is defined as
\begin{equation}
    J_{\cD}(g_{1:H}^m,g_{1:H}^a):=\EE\left[\sum_{h=1}^H(c_h+\kappa_h)+c_{H+1}\bigggiven g_{1:H}^m,g_{1:H}^a\right],
\end{equation}
\normalsize
which is the expected accumulated sum of the control and communication costs over $H$ timesteps. With this objective, we define the solution concept of \emph{team optimality} below. 
\begin{definition}
    \label{def: optimal_strategy_open-loop}
    Given a JCCO problem $\cD$,  a strategy $(g_{1:H}^{m,\ast},g_{1:H}^{a,\ast})$ is  \emph{team-optimal} if $\forall g_{1:H}^m\in\cG_{1:H}^m,g_{1:H}^a\in\cG_{1:H}^a$, 
    \begin{equation*}
        J_{\cD}(g_{1:H}^m,g_{1:H}^a)\ge  J_{\cD}(g_{1:H}^{m,\ast},g_{1:H}^{a,\ast}),
   \end{equation*}
   \normalsize
   with $g_{1:H}^{m,\ast}, g_{1:H}^{a,\ast}$ being referred to as the \emph{optimal communication} and \emph{control}  strategies, respectively.
\end{definition}

Note that the team-optimal strategies depend on the (communication) strategy space being used (i.e., open-loop versus closed-loop), which will be clear later from the context.

\subsection{Information structures}
Information structure is a well-studied notion in decentralized stochastic control that captures \emph{who knows what and when}  \cite{witsenhausen1971information,aditya2012information}.  For any JCCO  problem $\cD$, as the additional sharing via communication will also affect the IS and is not determined beforehand, when we discuss the IS of $\cD$, we will refer to that of the problem \emph{without additional sharing}, which is essentially a decentralized LQG control problem (with potential baseline information sharing), denoted by $\check{\cD}$. 
We refer to $\check{\cD}$ as the \emph{decentralized LQG control problem induced by $\cD$}, and define the IS of $\cD$ as that of $\check{\cD}$.

\emph{Partially nested} \cite{ho1972team} ISs are an  important subclass of ISs well studied in decentralized stochastic control. 
We extend such a categorization to JCCO. Formally, we call a JCCO problem $\cD$ PN if the 
induced {decentralized LQG control problem $\check{\cD}$} 
is PN. Namely, for any $i_1,i_2\in[n],h_1<h_2$, if there is no additional sharing and  $\bU_{i_1,h_1}$ influences $\bI_{i_2,h_2^-}$, then agent $(i_2,h_2)$ can access $\bI_{i_1,h_1^-}$, i.e., $\bI_{i_1,h_1^-}\subseteq \bI_{i_2,h_2^-}$.

\section{Linearity and Structural   Conditions for JCCO with Open-Loop Communication Strategies}
\label{sec: hardness and assumptions}

We now focus on the setting with \emph{open-loop}  communication strategies. Given any such fixed $g^m_{1:H}$ (and thus pre-selected $M_{1:H}$), the problem becomes a   decentralized LQG control one. 
However, it is known that the optimal control strategy of a decentralized LQG control problem may be \emph{nonlinear} in general \cite{witsenhausen1968counterexample}, making it computationally challenging. Fortunately, with partially nested ISs, the optimal control strategies can be linear 
\cite{ho1972team}\footnote{For completeness, we provide Corollary~\ref{cor:degenerate-PN-linearity} to establish the extension needed here for possibly singular Gaussian covariances and positive-semidefinite state and control cost matrices $\{Q_h^1\}_{h\in[H+1]}$ and $\{Q_h^2\}_{h\in[H]}$.}. Hence, it is natural to ask: when it comes to JCCO (with open-loop $g^m_{1:H}$), when would the optimal control strategy $g^a_{i,h}$ still be \emph{linear}, i.e., a linear function of the available information $\bI_{i,h^+}$, for all  $i\in[n],h\in[H]$?

To answer this question, first,  
if the IS of JCCO (i.e., the IS of baseline sharing) is not even PN, then in general it is hard for the IS of the decentralized LQG to be PN 
\emph{after} the optimal  additional sharing, especially with (high)  communication costs. This may thus further break the linearity of the optimal control strategies. 
To formalize this intuition, we introduce the following result, whose  proof, together with other omitted proofs for this section,  can be found in \S \ref{sec: appendix_hardness_result_proof}.  
\begin{lemma}
    There exists a JCCO problem $\cD$ with open-loop communication strategies that is not PN but satisfies Assumptions \ref{ass: evolution rule}, \ref{ass: useless action}, and \ref{ass: non-degeneracy}, such that the optimal control strategy is nonlinear. 
    \label{lemma: nonqc}
\end{lemma}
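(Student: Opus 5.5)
Since the statement is existential, the plan is to embed Witsenhausen's counterexample \cite{witsenhausen1968counterexample} into a JCCO problem $\cD$ whose baseline sharing is empty, and to make the communication cost so large that every team-optimal open-loop communication strategy chooses no additional sharing. The induced problem $\check{\cD}$ is then essentially Witsenhausen's problem. That problem is not PN, and its optimal control strategy is known to be nonlinear.

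\textbf{Construction.} I will use $n=2$ agents, horizon $H=2$, and a scalar state $d_x=1$. I set $\bX_1\sim\cN(0,\sigma^2)$ and $A_1=1$, $B_{1,1}=1$, $B_{2,1}=0$, with $\bW_{0,1}$ having zero (or negligible) variance. Agent 1 observes $\bY_{1,1}=\bX_1$ perfectly. Agent 2 has a degenerate observation at $h=1$ and observes $\bY_{2,2}=\bX_2+\bW_{2,2}$ at $h=2$, with $\bW_{2,2}\sim\cN(0,1)$. The state moves as $\bX_2=\bX_1+\bU_{1,1}$, and agent 2 only acts at $h=2$; agent 1's action at $h=2$ and agent 2's action at $h=1$ can be made irrelevant or penalized away.

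The costs are $k^2\bU_{1,1}^2$ at $h=1$ and $(\bX_2-\bU_{2,2})^2$ at $h=2$. This second cost is not literally of the form $\bX^\top Q^1\bX+\bU^\top Q^2\bU$, so I will augment the state: I either let $\bX_3=\bX_2-\bU_{2,2}$ carry a terminal cost, or enlarge the state so that the cross term fits a PSD block quadratic form. Either way the cost is a PSD quadratic in $(\bX_h,\bU_h)$, as the formulation requires.

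\textbf{Information and communication.} I take $\bZ^b_h=\emptyset$. Agent 1's private information contains $\bY_{1,1}$, and agent 2's private information contains only her own observations. The communication actions $\cM_{i,h}=\{0,1\}^{\cdot}$ select items to share, with $\cK_h(\bM_h)=K$ for every nonzero $\bM_h$ and $0$ otherwise.

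I then check Assumption~\ref{ass: evolution rule}(a)--(e), which is routine. I also check the two later assumptions (\ref{ass: useless action} and \ref{ass: non-degeneracy}); I expect these ask that actions not enter the information except through the state, and that the noise is non-degenerate in some sense. If non-degeneracy rules out the perfect observation $\bY_{1,1}=\bX_1$, I would add a small observation noise to it. I would also give $\bW_{0,1}$ a small positive variance, or add an independent channel, and argue by continuity that nonlinearity survives.

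The problem is not PN because $\bU_{1,1}$ influences $\bI_{2,2^-}$ through $\bY_{2,2}$, yet $\bY_{1,1}\notin\bI_{2,2^-}$.

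\textbf{Optimality argument.} For any open-loop strategy that shares something, the cost is at least $K$. With no sharing, the optimal Witsenhausen cost is at most $\sigma^2 k^2$ (take $\bU_{1,1}=-\bX_1$) or smaller. So for $K$ larger than that bound, every team-optimal $g^m$ is the no-sharing strategy, and the control problem is exactly Witsenhausen's.

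I then invoke the known result that, for suitable $(k,\sigma)$ (e.g., $k=0.2$, $\sigma=5$), nonlinear two-point quantizer strategies strictly beat the best linear strategy, and an optimal strategy exists (Witsenhausen). Hence the optimal control strategy is nonlinear.

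\textbf{Main obstacle.} The main difficulty will be reconciling Witsenhausen's setup with the paper's exact format and assumptions: the cross-term cost $(\bX_2-\bU_{2,2})^2$ written as a PSD quadratic, and the noise and observation non-degeneracy. If perturbations are needed, the key step is a robustness argument. The strict gap between the best linear cost and the best nonlinear cost persists under small perturbations of the noise covariances, because the linear-optimal cost and the quantizer cost depend continuously on those parameters.
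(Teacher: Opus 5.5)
Your construction is essentially the paper's. It embeds Witsenhausen's problem via $\bX_3=\bX_2-\bU_{2,2}$ with terminal cost $\bX_3^2$, uses null baseline sharing, and sets a communication cost exceeding the no-sharing cost, so every optimal open-loop schedule shares nothing and the control subproblem is exactly Witsenhausen's. Your perturbation fallback is unnecessary, and a continuity argument alone would not give existence of a nonlinear optimum. Assumption~\ref{ass: non-degeneracy} only asks that $E_{-i,h+1}B_{i,h}\neq\mathbf{0}$ when $B_{i,h}\neq\mathbf{0}$, which holds since $E_{2,2}B_{1,1}=1$. Assumption~\ref{ass: useless action} only requires that $\bU_{2,1}$ (with $B_{2,1}=0$) never enter later information. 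The model also permits singular covariances, so $\bY_{1,1}=\bX_1$ and $\Sigma_{0,1}=0$ are admissible as stated.
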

{
\ifnum\myversion=1
\begin{proofsketch}
In the proof, we can construct a $\cD$ with high enough communication cost, but satisfies Assumptions \ref{ass: evolution rule}, \ref{ass: useless action}, and \ref{ass: non-degeneracy}. Then, the optimal communication strategy leads to no additional sharing, and $\cD$ reduces to Witsenhausen's counterexample \cite{witsenhausen1968counterexample}, whose  optimal strategy is nonlinear.
\end{proofsketch}
\fi
}

Due to Lemma \ref{lemma: nonqc},  we focus on PN JCCO problems, in order to obtain linearity of the control strategies and thus computational tractability. However, even when the baseline sharing is PN, the additional sharing may \emph{break} the PN IS, and thus cause computational hardness again.    
Specifically, suppose agent $i_1$'s control action $\bU_{i_1,h_1}$  does not influence another agent $i_2$'s information $\bI_{i_2,h_2}$ in baseline sharing, but her control action $\bU_{i_1,h_1}$ is shared with others via additional sharing. Then, $\bU_{i_1,h_1}$ starts to influence $\bI_{i_2,h_2}$  and the PN IS may thus break. Here, the \emph{non-influence} of $\bU_{i_1,h_1}$ in baseline sharing may come from two cases: 1)  $\bU_{i_1,h_1}$ does not influence the underlying \emph{state} $\bX_{h_1+1}$; or 2) $\bU_{i_1,h_1}$ does influence the underlying state $\bX_{h_1+1}$, but the effect is \emph{not observed} by other agents. To avoid the impact of such non-influential actions, 
we make the following two assumptions, followed by justifications of their necessity regarding the \emph{existence/linearity} of the optimal control strategies.    

For case 1) above, 
we make the following assumption that such useless actions will \emph{not} be used or shared later.
In fact, for the linear systems considered in \eqref{equ:system_dynamics}, a useless control action will not even appear as a random variable in the system evolution, and it is natural to disregard it in the available information of later agents. Moreover, we highlight that such an assumption has been implicitly made in the common examples in decentralized stochastic control \cite{QCstatefeedback,liu2023tractable}.

\begin{assumption} \label{ass: useless action}
    For any $i\in[n],h\in[H]$, if $B_{i,h}=\mathbf{0}$, then $\bU_{i,h}\notin \bI_{j,(h')^-}, \bU_{i,h}\notin \bI_{j,(h')^+},{\forall h'\in[H]\text{ with }h'>h,\ j\in[n]}$. 
\end{assumption}

The significance of Assumption \ref{ass: useless action} in JCCO is reflected in two respects. First, it helps to preserve the PN IS (as to be shown formally in Theorem \ref{thm: preserve PN}). Second, without it, the optimal control strategy may either not exist or be nonlinear, as shown in the following lemma.

\begin{lemma}
    There exist PN JCCO problems $\cD_1,\cD_2$ with open-loop communication strategies that satisfy Assumptions \ref{ass: evolution rule}, \ref{ass: non-degeneracy} but not  Assumption \ref{ass: useless action},  such that the optimal control strategy of $\cD_1$ is nonlinear, and the team-optimal strategy of $\cD_2$ does not exist.
    \label{lemma: useless action}
\end{lemma}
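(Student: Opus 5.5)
The plan is to construct both examples explicitly and keep them as small as possible: two agents, short horizon, scalar spaces. In each, agent 1 has a control $\bU_{1,1}$ with $B_{1,1}=0$. Baseline sharing passes this useless action to agent 2 at the next step, so Assumption~\ref{ass: useless action} fails. Partial nestedness of $\check{\cD}$ then has to be checked. Because $B_{1,1}=0$, the action $\bU_{1,1}$ never enters the state, and the baseline protocol can be arranged to give $\bI_{1,1^-}\subseteq \bI_{2,2^-}$. For instance, agent 1 at $h=1$ observes only a noisy version of $\bX_1$, which is also shared. Under this arrangement the PN condition holds trivially. Assumption~\ref{ass: non-degeneracy} is not stated in the excerpt, so I will pick nondegenerate noise covariances ($\Sigma_{i,h}\succ 0$) and hope this meets whatever it requires. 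Communication costs are set so high that the optimal open-loop communication strategy is no additional sharing. The problem then reduces to a fixed decentralized LQG problem.

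For $\cD_1$ (nonlinearity), I will use the shared useless action as a noiseless signaling channel and embed Witsenhausen's counterexample \cite{witsenhausen1968counterexample}. The setup is as follows.
\begin{itemize}
\item Agent 1 privately observes $\bX_1\sim\cN(0,\sigma^2)$ at $h=1$; agent 2 does not see this observation.
\item Agent 1 chooses $\bU_{1,1}$ with $B_{1,1}=0$, and baseline sharing delivers $\bU_{1,1}$ to agent 2 at $h=2$.
\item The cost penalizes $k^2\bU_{1,1}^2$ (via $Q^2_1$) and $(\bX_2-\bU_{2,2})^2$, with $\bX_2=\bX_1$ and $\bU_{2,2}$ entering the terminal state.
\end{itemize}
With zero cost on $\bU_{1,1}$, agent 2 can exactly decode $\bX_1$ from $\bU_{1,1}=\bX_1$. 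The interesting case is when the signal is corrupted or costly. I will therefore adapt the construction so that $\bU_{1,1}$ enters the shared information additively with an observation of a noise term, or so that the cost trades off $\bU_{1,1}^2$ against the estimation error. This reproduces Witsenhausen's structure $\min \EE[k^2\gamma_1(x)^2+(x+\gamma_1(x)-\gamma_2(x+\gamma_1(x)+v))^2]$, whose optimum is known to be nonlinear. Matching the quadratic-cost form exactly, with state augmentation to carry $\bX_1$ forward, is routine bookkeeping.

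For $\cD_2$ (non-existence), the same signaling idea applies with no noise and no cost on $\bU_{1,1}$, but with agent 1 holding a private continuous observation that agent 2 needs. Agent 2 does not observe $\bU_{1,1}$ directly. It observes a quantity through which information leaks only in a way that yields an infimum that is not attained. One candidate is to set the communication cost on sharing $\bU_{1,1}$ to zero while the underlying signal costs $\epsilon$-penalties, so that any finite scaling $\bU_{1,1}=c\bX_1$ gives cost $a/(1+c^2)$-type values decreasing to an unattained infimum. Concretely, agent 2 sees $\bU_{1,1}+\bW$ with unit noise, $\bU_{1,1}$ is uncosted, and the cost is agent 2's estimation error of $\bX_1$. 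Linear signaling $c\bX_1$ gives error $\sigma^2/(1+c^2\sigma^2)\to 0$. For zero error, agent 2 would need to recover $\bX_1$ exactly from $\bU_{1,1}+\bW$. That is impossible because $\bW$ has full support, so some conditional variance always remains for any measurable $\gamma_1$. Hence the infimum $0$ is not attained.

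The main obstacle is this last step: proving that no (possibly nonlinear) strategy attains zero error. I would argue that exact recovery of $\bX_1$ from $\bU_{1,1}+\bW$ would require the laws of $\gamma_1(x)+\bW$ for different $x$ to be mutually singular. Gaussian shifts are mutually absolutely continuous, so this cannot happen. I also need to check that every construction is genuinely PN and satisfies Assumption~\ref{ass: evolution rule}.
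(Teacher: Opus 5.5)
Your plan has a structural flaw that defeats both parts of the lemma. If additional sharing is made prohibitively expensive, the optimal open-loop schedule shares nothing, and $\cD$ reduces to its baseline problem $\check{\cD}$, which is PN by hypothesis. Corollary~\ref{cor:degenerate-PN-linearity} (Ho--Chu) then gives a team-optimal control strategy that exists and is linear, whether or not Assumption~\ref{ass: useless action} holds. So neither nonlinearity nor non-existence can arise this way.

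Your way of violating Assumption~\ref{ass: useless action} makes matters worse. If baseline sharing delivers $\bU_{1,1}$ to agent~2, then $\bU_{1,1}$ influences $\bI_{2,2^-}$, and PN forces $\bI_{1,1^-}\subseteq\bI_{2,2^-}$. Agent~2 then already knows everything the signal could carry. Your $\cD_1$ (agent~2 receives $\bU_{1,1}$ but not $\bY_{1,1}$) is therefore not PN, and your proposed fix (also sharing agent~1's observation) makes the signal worthless.

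Finally, a noisy copy $\bU_{1,1}+\bW$ cannot reach agent~2 in this model. Shared items are exact labels, and observations depend on actions only through the state, so you would need $B_{1,1}\neq\mathbf{0}$. Then $\bU_{1,1}$ is not a useless action at all, and PN again forces $\bI_{1,1^-}\subseteq\bI_{2,2^-}$. Dropping PN instead puts you in the setting of Lemma~\ref{lemma: nonqc}, not this one. The Witsenhausen embedding for $\cD_1$ fails for the same reason. Your absolute-continuity argument is valid for the abstract noisy channel, but that channel does not exist here.

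The missing idea is that sharing the useless action must be \emph{free}, so that the optimal \emph{additional} sharing is what breaks partial nestedness. Your passing remark about zero cost for sharing $\bU_{1,1}$ was the right instinct, but it contradicts your global plan. In the paper, $\bU_{1,1}$ with $B_{1,1}=\mathbf{0}$ appears under baseline only in agent~1's own later information $\bI_{1,2^-}$. This violates Assumption~\ref{ass: useless action} while keeping $\check{\cD}$ PN, since $\bI_{1,1^-}\subseteq\bI_{1,2^-}$. Sharing $\bU_{1,1}$ at $h=2$ costs nothing, while every other sharing costs $1$.

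This channel is noiseless, so, as you noticed, a scalar state admits the linear encoder $\bU_{1,1}=\bX_1$; nonlinearity must come from dimension instead. Take $\bX_1\sim\cN(\mathbf{0},\II)$ in $\RR^2$, scalar $\bU_{1,1}$, $\bX_3=\bX_2-\bU_{2,2}=\bX_1-\bU_{2,2}$, and terminal cost $\bX_3^\top\bX_3$. A Borel isomorphism pushing $\cN(\mathbf{0},\II)$ to $\cN(0,1)$ achieves cost $0$. Any linear encoder--decoder composition has rank at most one, hence cost at least $1$ by Eckart--Young.

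For non-existence, use the scalar version with $\bX_1\sim\cN(0,1/4)$ and stage cost $\bU_{1,1}^2$. Signaling is necessary: without it the cost is at least $1/4$, versus $1/16$ with $\bU_{1,1}=\bY_{1,1}/2$ and $\bU_{2,2}=2\bU_{1,1}$. From any candidate optimum, halve $\bU_{1,1}$ and let agent~2 apply its old rule to $2\bU_{1,1}$. This leaves $\bU_{2,2}$ unchanged and strictly lowers the cost, so no optimum exists.

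For the record, Assumption~\ref{ass: non-degeneracy} requires $E_{-i,h+1}B_{i,h}\neq\mathbf{0}$ whenever $B_{i,h}\neq\mathbf{0}$. It is vacuous for these useless actions, and choosing $\Sigma_{i,h}\succ0$ is unrelated to it.
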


{
\ifnum\myversion=1
\begin{proofsketch}
In the proof, we can construct JCCO problems  where there is no communication cost for sharing some specific \emph{useless action}. 
Then, the optimal communication strategy will share such an action, which will be designed to encode (and thus implicitly share) some observation, as a \emph{compression} of the observation. The nonlinearity result then stems from the fact that there is no \emph{linear} bijection between a high-dimensional vector space and a compressed lower-dimensional one. The non-existence result originates from the fact that there is no optimal compression function that minimizes the second moment of the compressed information (i.e., the useless action).
\end{proofsketch}
\fi
}

For case 2) above, the PN IS may break due to the \emph{non-informativeness} of agents' observations. We thus make the following assumption.

\begin{assumption}
\label{ass: non-degeneracy}
    For any $h\in[H-1], i\in[n]$, if $B_{i,h}\neq \mathbf{0}$, then $E_{-i,h+1}B_{i,h}\neq \mathbf{0}$,  
    where $E_{-i,h+1}:=\begin{bmatrix}
        E_{1,h+1}^\top&\cdots&E_{i-1,h+1}^\top&E_{i+1,h+1}^\top&\cdots&E_{n,h+1}^\top
    \end{bmatrix}^\top$.
\end{assumption}

Assumption \ref{ass: non-degeneracy} means that, for any $i\in[n]$ and $h\in[H-1]$, if $B_{i,h}\neq\mathbf{0}$, then $E_{j,h+1}B_{i,h}\neq\mathbf{0}$ for at least one $j\neq i$. Thus, it does not require $E_{-i,h+1}$ to have full rank or the other agents to jointly observe the whole state $\bX_{h+1}$. For instance, if $B_{i,h}=\mathbf{0}$, then no condition is imposed on $E_{-i,h+1}$.
As shown below, Assumption \ref{ass: non-degeneracy} is necessary to ensure the linearity of the optimal control strategy of $\cD$.

\begin{lemma}\label{lemma: nondegeneracy}
    There exists a PN JCCO problem $\cD$ with open-loop communication strategies  that satisfies  Assumptions \ref{ass: evolution rule}, \ref{ass: useless action} but not Assumption \ref{ass: non-degeneracy},  such that the optimal control strategy of $\cD$ is nonlinear.
\end{lemma}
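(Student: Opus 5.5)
The plan is to build an explicit counterexample that embeds Witsenhausen's counterexample \cite{witsenhausen1968counterexample}. The construction must satisfy two requirements. First, the baseline information structure is PN because the relevant control action is ``not observed'' by anyone who could be influenced. Second, once additional sharing is available, the optimal open-loop communication strategy shares a control action whose effect on the state is invisible to the other agents. This sharing creates a Witsenhausen-type coupling.

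A cleaner variant is to keep no additional sharing at the optimum by making the communication cost prohibitively high. Then the resulting problem $\check{\cD}$ must itself be non-PN. Since the lemma requires $\cD$ to be PN, the mechanism must instead be that additional sharing is cheap or free, so that the optimal $g^m$ shares something that breaks partial nestedness.

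Concretely, take $n=2$ and $H=2$.
\begin{itemize}
\item At $h=1$, agent 1 observes $\bY_{1,1}=\bX_1$ with $\bX_1\sim\cN(0,\sigma^2)$ and applies $\bU_{1,1}$. The dynamics $\bX_2=\bX_1+\bU_{1,1}$ are embedded in a state vector so that $B_{1,1}\neq\mathbf 0$.
\item Choose $E_{2,2}$ with $E_{2,2}B_{1,1}=\mathbf 0$, so agent 2's observation at $h=2$ does not see the effect of $\bU_{1,1}$. This violates Assumption \ref{ass: non-degeneracy}.
\item Give agent 2 at $h=2$ the baseline observation $\bY_{2,2}=E_{2,2}\bX_2+\bW_{2,2}$. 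Since $E_{2,2}B_{1,1}=\mathbf 0$, the action $\bU_{1,1}$ does not influence $\bI_{2,2^-}$, so the baseline IS is PN.
\item Let agent 1's private information at $h=2$ contain a quantity equal to $\bX_1+\bU_{1,1}$. One way is to add a private coordinate that only agent 1 sees, written through the dynamics. Let the communication action allow sharing it at zero cost, $\cK_2\equiv0$.
\end{itemize}
Then, relative to $\check{\cD}$, agent 2 can only receive $\bX_1+\bU_{1,1}$ via additional sharing, plus noise if we route it through a noisy private observation. Assumption \ref{ass: useless action} holds since $B_{1,1}\neq 0$ and no action with zero $B$ is shared. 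The cost is chosen as $k^2\bU_{1,1}^2+(\bX_2-\bU_{2,2})^2$-type, encoded through $Q^1,Q^2$ with an appropriate state augmentation so that the terminal cost penalizes the estimation error of $\bX_2$ by agent 2.

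The key steps, in order, are as follows.
\begin{enumerate}
\item Write down the matrices $A_h,B_{i,h},E_{i,h},\Sigma$ and the projection functions $\chi_h,\zeta_{i,h},\phi_{i,h}$, and check Assumptions \ref{ass: evolution rule} and \ref{ass: useless action}, PN of $\check{\cD}$, and the failure of Assumption \ref{ass: non-degeneracy}.
\item Argue that sharing is weakly optimal: it is free, and extra information cannot hurt a team under optimal control. So there is an optimal open-loop $g^m$ that shares. After sharing, the induced control problem with the optimized $g^a$ is exactly Witsenhausen's problem. Agent 2 sees $\bX_1+\bU_{1,1}+\bV$ with Gaussian noise $\bV$, and agent 1 sees $\bX_1$.
\item For the not-sharing branch, show its optimal cost is strictly larger than Witsenhausen's optimal cost for suitable parameters, for instance because without sharing agent 2 has no information about $\bX_1$. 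The team-optimal strategy then must share.
\item Invoke Witsenhausen's result that for suitable $(k,\sigma)$ the optimal cost is strictly below the best linear cost. Hence no linear control strategy is optimal.
\end{enumerate}

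The main obstacle is the encoding step. Witsenhausen's terminal cost $(\bX_2-\bU_{2,2})^2$ and the noisy observation must be realized within the linear-quadratic template $c_h=\bX_h^\top Q_h^1\bX_h+\bU_h^\top Q_h^2\bU_h$. At the same time, the only channel carrying $\bX_1+\bU_{1,1}$ to agent 2 must be additional sharing of a private variable, not a baseline observation. The remedy is to augment the state with a copy coordinate and a third timestep, using $H=3$ if needed, so that $\bU_{2,\cdot}$ enters a state coordinate penalized at the terminal time. One must then check that this augmentation does not create a new baseline influence, which would break PN, or a useless action, which would break Assumption \ref{ass: useless action}.
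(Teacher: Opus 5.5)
Your plan is essentially the paper's proof. The paper takes Witsenhausen's problem with $n=H=2$ and makes agent~2 blind ($\bY_{2,1}=\bY_{2,2}=0$, so $E_{2,2}B_{1,1}=0$ and the baseline IS is PN). It gives agent~1 the private noisy observation $\bY_{1,2}=\bX_2+\bW_{1,2}$, whose additional sharing is free. It then notes that sharing this observation is weakly optimal, so the control subproblem is exactly Witsenhausen's and admits only nonlinear optimal strategies.

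Your ``encoding obstacle'' does not arise, because the model's terminal state already gives $\bX_3=\bX_2-\bU_{2,2}$ with $c_3=\bX_3^2$ at $H=2$, so no augmentation or third timestep is needed. What does need care is that the noisy $\bY_{1,2}$ be the only cheap disclosure; the paper charges $\sigma_0^2+1$ for sharing anything else at either timestep. Free access to $\bX_1$ (via $\bY_{1,1}$) or to a noiseless copy of $\bX_2$ would give a zero-cost linear solution, and free access to $\bU_{1,1}$ would make the infimum unattained.
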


{\ifnum\myversion=1
\begin{proofsketch}
In the proof, we construct a JCCO $\cD$ based on Witsenhausen's counterexample, but change the observation of the second agent to be null, which can only happen without Assumption \ref{ass: non-degeneracy}, then $\cD$ has PN IS. However, we allow such an agent to obtain such an observation via additional sharing, making $\cD$ reduce to the Witsenhausen's counterexample, where the optimal control strategy is nonlinear.
\end{proofsketch}
\fi
}

We have thus justified the importance of the above assumptions and the partial nestedness of baseline sharing for JCCO problems, by showing that missing any one of them (while keeping others) may cause the nonlinearity (or even non-existence) of the optimal control strategy.  

On the other hand, for any fixed open-loop communication strategy $g_{1:H}^m\in \cG_{1:H}^m$, we denote by $\cD(g_{1:H}^m)$ the subproblem of finding the optimal control strategy with respect to $g_{1:H}^m$, where 
    we add $~\bar{\phantom{x}}~$ to the notation in  $\cD(g_{1:H}^m)$ to distinguish it from the notation in the original JCCO problem.   We remove $+$ in the timestep index for all the notation in $\cD(g_{1:H}^m)$ for convenience, as the IS is now \emph{fixed} and only the control problem will be analyzed. Then, we can show that if $\cD$ satisfies all the assumptions above, the optimal control strategy of $\cD$ is linear. Indeed,  under any fixed open-loop  $g_{1:H}^m\in \cG_{1:H}^m$, the subproblem $\cD(g_{1:H}^m)$ can be shown to be a \emph{decentralized LQG} problem with PN information structures (see \S \ref{sec: DecLQG} for a formal definition).
\begin{theorem}
    Consider a PN JCCO problem $\cD$ that satisfies Assumptions \ref{ass: evolution rule}, \ref{ass: useless action}, and \ref{ass: non-degeneracy}. For any open-loop communication strategy $g_{1:H}^m\in \cG_{1:H}^m$,  the subproblem $\cD(g_{1:H}^m)$ is a decentralized LQG control  problem with a PN IS, and its information evolution rules satisfy:  for each   $i\in[n]$ and $h\in[H]$, 
        \begin{align*}
            &\bar{\bC}_{h}=\bar{\bC}_{h-1}\cup\bar{\bZ}_h,~~~ \bar{\bZ}_h=\bar{\chi}_h(\bar{\bP}_{h-1},\bar{\bU}_{h-1},\bar{\bY}_{h}), ~~~\bar{\bY}_{i,h}\in \bar{\bI}_{i,h},\notag\\
            &\bar{\bP}_{i,h}=\bar{\zeta}_{i,h}(\bar{\bP}_{i,h-1},\bar{\bU}_{i,h-1},\bar{\bY}_{i,h}),\quad\bar{\bI}_{i,h-1
            }\subseteq \bar{\bI}_{i,h}, 
        \end{align*}
    \normalsize
    for some projection functions $\bar{\chi}_h$ and $\{\bar{\zeta}_{i,h}\}_{i\in[n]}$. 
    Furthermore, there exists  an   optimal strategy $(g_{1:H}^{m,\ast},g_{1:H}^{a,\ast})$ of $\cD$  with open-loop communication strategies such that $g_{1:H}^{a,\ast}$ is linear.
    \label{thm: preserve PN}
\end{theorem}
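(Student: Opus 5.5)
Fix an open-loop $g^m_{1:H}$, so $M_{1:H}$ is a fixed realization. We construct $\cD(g^m_{1:H})$ explicitly and then verify three things in turn: the stated evolution rules, partial nestedness, and linearity.

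\textbf{Step 1: information evolution.} Set $\bar{\bI}_{i,h}:=\bI_{i,h^+}$, $\bar{\bC}_h:=\bC_{h^+}$, $\bar{\bP}_{i,h}:=\bP_{i,h^+}$, and $\bar{\bZ}_h:=\bZ_h^b\cup\bZ_h^a$. With $M_{i,h}$ fixed, $\phi_{i,h}(M_{i,h},\cdot)$ is a fixed projection. By Assumption~\ref{ass: evolution rule}(a),(c), both $\bZ^b_h$ and $\bP_{h^-}$ are projections of $(\bP_{(h-1)^+},\bU_{h-1},\bY_h)$. Composing these gives $\bar{\bZ}_h=\bar\chi_h(\bar{\bP}_{h-1},\bar{\bU}_{h-1},\bar{\bY}_h)$ for a projection $\bar\chi_h$. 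Part (d) gives $\bar{\bP}_{i,h}=\zeta_{i,h}(\cdot)\setminus\phi_{i,h}(M_{i,h},\zeta_{i,h}(\cdot))$, which is again a projection $\bar\zeta_{i,h}$ of $(\bar{\bP}_{i,h-1},\bar{\bU}_{i,h-1},\bar{\bY}_{i,h})$. Part (e) yields $\bar{\bY}_{i,h}\in\bar{\bI}_{i,h}$ and the perfect recall $\bar{\bI}_{i,h-1}\subseteq\bI_{i,h^-}\subseteq\bar{\bI}_{i,h}$. The dynamics and cost are unchanged, and $\kappa_h$ is a constant, so $\cD(g^m_{1:H})$ is a decentralized LQG problem.

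\textbf{Step 2: partial nestedness (the main obstacle).} Suppose $\bU_{i_1,h_1}$ influences $\bar{\bI}_{i_2,h_2}$ with $h_1<h_2$. We must show $\bar{\bI}_{i_1,h_1}\subseteq\bar{\bI}_{i_2,h_2}$. An action affects later information only in two ways: through the state, via $B_{i_1,h_1}$, or by being contained in some information set.

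First, suppose $B_{i_1,h_1}=\mathbf{0}$. Then the state is unaffected. By Assumption~\ref{ass: useless action}, $\bU_{i_1,h_1}$ belongs to no later information set. Hence there is no influence, and this case is vacuous.

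Otherwise $B_{i_1,h_1}\neq\mathbf{0}$. Assumption~\ref{ass: non-degeneracy} gives some $j\neq i_1$ with $E_{j,h_1+1}B_{i_1,h_1}\neq\mathbf{0}$. Since $\bY_{j,h_1+1}\in\bI_{j,(h_1+1)^-}$, the action $\bU_{i_1,h_1}$ influences $\bI_{j,(h_1+1)^-}$ already in the baseline problem $\check\cD$. Partial nestedness of $\check\cD$ then gives $\bI_{i_1,h_1^-}\subseteq\bI_{j,(h_1+1)^-}$.

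This is not yet enough, because we need the $+$ sets and agent $i_2$ rather than $j$. The plan is an induction on $h_2$ over the chain of influence. Any influence on $\bar{\bI}_{i_2,h_2}$ passes either through the state, so that some $\bY_{i_2,t}$ with $t\le h_2$ depends on $\bX_{h_1+1}$, or through shared or private items that already depended on $\bU_{i_1,h_1}$.

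For the state route, note that influence in $\check\cD$ through observations is independent of the sharing. The influence relation of $\check\cD$ restricted to state paths therefore gives $\bI_{i_1,h_1^-}\subseteq\bI_{i_2,h_2^-}$.

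For the sharing route, every shared item is a projection of some agent's earlier private information. That agent was already influenced, so by induction she knows $\bar{\bI}_{i_1,h_1}$.

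The remaining gap is the additional-sharing part $\bZ^a_{i_1,h_1}$ of $\bar{\bI}_{i_1,h_1}\setminus\bI_{i_1,h_1^-}$. Since $\bZ^a_{h_1}\subseteq\bC_{h_1^+}\subseteq\bC_{h_2^+}$, it is common, hence in $\bar{\bI}_{i_2,h_2}$. What is left is $\bI_{i_1,h_1^-}$, which is handled above.

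I expect the most care in formalizing ``influences'' as a precedence relation. In particular, I need to argue that additional sharing creates no new precedence edges. Each newly shared item is a component of some $\bP_{i,h^-}$, whose own dependence on $\bU_{i_1,h_1}$ already existed in $\check\cD$, except for shared raw actions. Those are excluded by Assumption~\ref{ass: useless action} when $B=\mathbf{0}$, and handled by the non-degeneracy argument when $B\neq\mathbf{0}$.

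\textbf{Step 3: linearity.} For each fixed $M_{1:H}$ in the finite set $\cM_{1:H}$, $\cD(g^m_{1:H})$ is a PN decentralized LQG problem. Corollary~\ref{cor:degenerate-PN-linearity} supplies a linear optimal control strategy, even with singular covariances and semidefinite costs, with some optimal value $V(M_{1:H})$. The total cost is $V(M_{1:H})+\sum_h\cK_h(M_h)$. Minimizing this over the finite set $\cM_{1:H}$ attains its minimum at some $M^\ast$. Pairing $M^\ast$ with its linear optimal control strategy gives a team-optimal pair with linear $g^{a,\ast}_{1:H}$.
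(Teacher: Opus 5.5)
Your Steps 1 and 3 match the paper. Step 2, however, has a genuine gap: the induction on influence chains does not close.

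Consider the sharing route. Some agent $k\neq i_2$ has an observation $\bY_{k,t}$ that depends on $\bU_{i_1,h_1}$ through the state, and she additionally shares it. Then $\bU_{i_1,h_1}$ influences $\bar{\bI}_{i_2,h_2}$, even though it may not influence $(i_2,h_2)$ in $\check\cD$. Your inductive hypothesis gives at best $\bar{\bI}_{i_1,h_1}\subseteq\bar{\bI}_{k,t}$, and nothing at all when $t=h_2$. But agent $k$ shares only the item $\bY_{k,t}$, not her whole information set, so nothing about $\bar{\bI}_{i_1,h_1}$ transfers to agent $i_2$.

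For the same reason, the claim you planned to formalize is false. Additional sharing \emph{does} create new precedence edges; this example creates exactly one. What must be shown is that the nestedness requirement also holds for such new edges.

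The missing idea is one line away from what you already have. You derived $\bI_{i_1,h_1^-}\subseteq\bI_{j,(h_1+1)^-}$ in $\check\cD$ with $j\neq i_1$, and then set it aside as ``not yet enough''. It is enough once you add Assumption~\ref{ass: evolution rule}(c).
\begin{itemize}
\item By (c), each agent's private information is built only from her own observation and action labels.
\item So no private label of agent $i_1$ can lie in $\bP_{j,(h_1+1)^-}$.
\item Hence $\bI_{i_1,h_1^-}\subseteq\bC_{(h_1+1)^-}$. In words, non-degeneracy plus baseline partial nestedness force everything an influential agent knows to become common at the very next step.
\item The remaining labels of $\bar{\bI}_{i_1,h_1}$ came through additional sharing and are common, as you noted.
\end{itemize}
Together these give $\bar{\bI}_{i_1,h_1}\subseteq\bar{\bC}_{h_1+1}\subseteq\bar{\bC}_{h_2}\subseteq\bar{\bI}_{i_2,h_2}$ for every $(i_2,h_2)$ with $h_2>h_1$. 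This holds regardless of the route by which the influence arises, so no induction over influence chains is needed. This is exactly the paper's argument.
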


Theorem \ref{thm: preserve PN} provides a way to solve $\cD$ by finding the optimal control strategy $g_{1:H}^{a,\ast}$ with respect to any fixed open-loop communication strategy $g_{1:H}^m$, i.e., solving 
$\cD(g_{1:H}^m)$, and then optimizing over the open-loop communication strategies $g_{1:H}^m$. We will thus  investigate how to solve the problem $\cD(g_{1:H}^m)$ next.

\ifnum\myversion=1
\begin{proofsketch}
    For any two agents $(i_1,h_1), (i_2,h_2), h_1\le h_2$ in $\cD(g_{1:H}^m)$. If $\overline{B}_{i_1,h_1}=\mathbf{0}$, then agent $(i_1,h_1)$ cannot influence the states. Also, due to Assumption \ref{ass: useless action}, $\overline{\bU}_{i_1,h_1}$ cannot lie in any information. Therefore, agent $(i_1,h_1)$ cannot influence agent $(i_2,h_2)$. If $B_{i_1,h_1}\neq\mathbf{0}$, then from Assumptions \ref{ass: non-degeneracy} and \ref{ass: evolution rule}(e),  agent $(i_1,h_1)$ will influence state $\bX_{h_1+1}$ and further influence $\bI_{j,h_1+1}$ in $\cD$ for some $j\neq i_1$ even there is no additional sharing. Partial nestedness gives $\bI_{i_1,h_1^-}\subseteq\bI_{j,h_1+1}$. Since $j\neq i_1$, every label private to agent $i_1$ can enter agent $j$'s information only through common information due to Assumption \ref{ass: evolution rule}, then it holds that $\bI_{i_1,h_1^-}\subseteq \bC_{h_1+1}$. Moreover, every label in $\overline{\bI}_{i_1,h_1}\backslash\bI_{i_1,h_1^-}$ was revealed through additional sharing and is common. Then, $\overline{\bI}_{i_1,h_1}\subseteq \overline{\bC}_{h_1+1}\subseteq\overline{\bI}_{i_2,h_2}$. 
    Therefore, $\cD(g_{1:H}^m)$ has PN IS.

    Also, one can verify that $\cD(g_{1:H}^m)$ satisfies the information evolution rule with projection functions defined by
    \begin{align*}
        &\bar{\chi}_h(\bar{\bP}_{h-1},\bar{\bU}_{h-1},\bar{\bY}_{h}):=\chi_h(\bar{\bP}_{h-1},\bar{\bU}_{h-1},\bar{\bY}_{h})\cup\\
        &\qquad\qquad\phi_h(M_h,\zeta_h(\bar{\bP}_{h-1},\bar{\bU}_{h-1},\bar{\bY}_{h})),\\
        &\bar{\zeta}_{i,h}(\bar{\bP}_{i,h-1},\bar{\bU}_{i,h-1},\bar{\bY}_{i,h}):=\zeta_{i,h}(\bar{\bP}_{i,h-1},\bar{\bU}_{i,h-1},\bar{\bY}_{i,h})\\
        &\qquad\qquad\backslash\phi_{i,h}(M_{i,h},\zeta_{i,h}(\bar{\bP}_{i,h-1},\bar{\bU}_{i,h-1},\bar{\bY}_{i,h})).
        \end{align*}
\end{proofsketch}
\fi

\section{Dynamic Programming for   Decentralized LQG with Partially Nested Information Structures} 
\label{sec: open_loop}

{Since $\cD(g_{1:H}^m)$ is PN under the assumptions in Theorem \ref{thm: preserve PN}, it now suffices to develop an algorithm that can solve PN decentralized LQG control problems. For notational convenience, in this section, we will omit the superscript $a$ for the control strategies $\bar g^a_{1:H}$ for $\cD(g_{1:H}^m)$. 

The algorithm to be introduced can be viewed as a dynamic-programming approach for solving a class of PN  decentralized LQG problems with \emph{output} feedback  under the {common-information-based} framework \cite{ashutosh2013team,ashutosh2013game}, 
which thus advances the results in \cite{AA} and may 
be of independent interest.} Detailed proofs of results in this section are deferred to \S \ref{sec: appendix_open_loop_proof}. 
Specifically, under PN IS, the known dynamic-programming-based approaches and sufficient statistics in \cite{QCstatefeedback,nayyar2015structural} do not apply here, as they focused on the special settings with \emph{factorized} states, and with either \emph{state-feedback}  \cite{QCstatefeedback} or \emph{multi-tree}  coupling and
communication graphs \cite{nayyar2015structural}.

{On the other hand, under the {common-information-based}  framework \cite{ashutosh2013team}, 
\cite{AA} showed that} {if agents are restricted to using \emph{linear} control strategies, then the decentralized control problem can be reformulated as a \emph{centralized LQG} one when \emph{fixing} the private-information component of the strategies. 
However, it is not clear how to further \emph{optimize} the private-information component, which induces a  non-convex
optimization problem in general \cite{AA}. 
Furthermore, 
\cite{CIBLQgames} showed  that, 
if additionally the CIB beliefs from \cite{ashutosh2013team} are  \emph{strategy independent} 
\cite[Section 2.4]{CIBLQgames}, then the CIB beliefs can be further simplified to a finite-dimensional conditional mean of the state and private information in this linear-quadratic setting\footnote{Note that \cite{CIBLQgames} focused on a \emph{game} setting with strategic agents, instead of the \emph{team} setting in \cite{AA,QCstatefeedback} and the present paper.}.}

Inspired by \cite{CIBLQgames}, and to address the non-convexity issue in \cite{AA}, we exploit the PN IS to obtain a finite-dimensional, conditional-mean-based sufficient statistic, by establishing a new connection between the PN IS  and the SI-CIB condition. 

To this end, we adapt our technique in \cite{LTC} to this linear-quadratic setting, while addressing several fundamentally different  challenges. First, we expand the PN IS into a \emph{strictly} PN one, where the \emph{actions} of agents who affect some other agent will also be known by the affected agent (in addition to their \emph{information}). More formally, we construct a new problem $\tilde{\cD}(g_{1:H}^m)$, whose elements are almost identical to those in ${\cD}(g_{1:H}^m)$, except that the agents' information is now expanded as follows: for any $h\in[H]$
\begin{equation}
\begin{aligned}
\tilde{\bC}_h=\bar{\bC}_h\cup \{\bar{\bU}_{i,t}\given i\in[n],t<h, \bar{\bI}_{i,t}\subseteq \bar{\bC}_h, \bar{B}_{i,t}\neq \mathbf{0}\},\qquad 
    \forall j\in[n], \tilde{\bP}_{j,h}=\bar{\bP}_{j,h}\backslash\{\bar{\bU}_{j,t}\given t<h, \bar{\bI}_{j,t}\subseteq \bar{\bC}_h\},
    \label{eq: expansion}
\end{aligned}
\end{equation}
where we add $~\tilde{}~$ to the notation in $\tilde{\cD}(g_{1:H}^m)$, and the incremental common information is defined by $\tilde{\bZ}_h:=\tilde{\bC}_h\backslash\tilde{\bC}_{h-1},\forall h\in[H]$.  
Through such an expansion, we can leverage the new problem $\tilde{\cD}(g_{1:H}^m)$ to solve the original $\cD(g_{1:H}^m)$, as formalized in the following lemma.

\begin{lemma}
    Let $\cD$ be PN and satisfy  Assumptions \ref{ass: evolution rule}, \ref{ass: useless action}, and \ref{ass: non-degeneracy}, and  for any fixed open-loop $g_{1:H}^m\in\cG_{1:H}^m$, let $\cD(g_{1:H}^m)$ and $\tilde{\cD}(g_{1:H}^m)$ be the decentralized LQG control problems   constructed above. Then, there exists a function $\varphi$ such that for any linear optimal control strategy $\tilde{g}_{1:H}^{\ast}$ of the problem $\tilde{\cD}(g_{1:H}^m)$, $\bar{g}_{1:H}^{\ast}=\varphi(\tilde{g}_{1:H}^{\ast},\cD(g_{1:H}^m))$ is a linear optimal control strategy of $\cD(g_{1:H}^m)$, and  
    $J_{\cD(g_{1:H}^m)}(\bar{g}_{1:H}^{\ast})=J_{\tilde{\cD}(g_{1:H}^m)}(\tilde{g}_{1:H}^{\ast})$, where $J_{\cD(g_{1:H}^m)}(\bar{g}_{1:H}^{\ast})$ is the expected accumulated cost of $\cD(g_{1:H}^m)$ under the strategy $\bar{g}_{1:H}^{\ast}$, and $J_{\tilde{\cD}(g_{1:H}^m)}(\tilde{g}_{1:H}^{\ast})$ is defined similarly (cf. \S\ref{sec: DecLQG} for the formal definitions).
    \label{lemma: equivalence of PN and sPN}
\end{lemma}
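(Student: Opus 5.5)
The plan is to show two inequalities between the optimal costs of $\cD(g_{1:H}^m)$ and $\tilde{\cD}(g_{1:H}^m)$, and to build $\varphi$ by substituting actions out of the expanded information. First, since $\tilde{\bI}_{i,h}\supseteq\bar{\bI}_{i,h}$ for every $i,h$ by \eqref{eq: expansion}, every strategy of $\cD(g_{1:H}^m)$ is feasible in $\tilde{\cD}(g_{1:H}^m)$ and gives the same cost. Hence $\inf J_{\tilde{\cD}(g_{1:H}^m)}\le \inf J_{\cD(g_{1:H}^m)}$.

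For the reverse direction, the key observation is that every newly added action $\bar{\bU}_{i,t}$ with $\bar{\bI}_{i,t}\subseteq\bar{\bC}_h$ can be recomputed from $\bar{\bC}_h$ once the strategy is fixed. Indeed, agent $(i,t)$'s whole information is already common at time $h$. Theorem \ref{thm: preserve PN} guarantees the PN IS and the information evolution rules, so the set of such actions is monotone in $h$: if $\bar{\bI}_{i,t}\subseteq \bar{\bC}_h$, then $\bar{\bI}_{i,t}\subseteq\bar{\bC}_{h'}$ for all $h'\ge h$, because $\bar{\bC}_h\subseteq\bar{\bC}_{h'}$.

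Given a linear optimal $\tilde g^\ast$, I define $\bar g^\ast=\varphi(\tilde g^\ast,\cD(g_{1:H}^m))$ by induction on $t$. I set $\bar g^\ast_{i,h}(\bar{\bI}_{i,h}):=\tilde g^\ast_{i,h}(\tilde{\bI}_{i,h})$, where each action $\bar{\bU}_{j,t}$ appearing in $\tilde{\bI}_{i,h}\setminus\bar{\bI}_{i,h}$ is replaced by $\bar g^\ast_{j,t}(\bar{\bI}_{j,t})$. This substitution is well defined because $\bar{\bI}_{j,t}\subseteq\bar{\bC}_h\subseteq\bar{\bI}_{i,h}$ and $t<h$, so the recursion terminates.

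Linearity is preserved because compositions of linear maps are linear. An induction on $h$ then shows that, pathwise, the actions and states generated by $\bar g^\ast$ in $\cD(g_{1:H}^m)$ coincide with those generated by $\tilde g^\ast$ in $\tilde{\cD}(g_{1:H}^m)$. Both problems have the same primitives $(\bX_1,\bW)$ and dynamics, and the substituted values equal the true actions by the induction hypothesis. It follows that $J_{\cD(g_{1:H}^m)}(\bar g^\ast)=J_{\tilde{\cD}(g_{1:H}^m)}(\tilde g^\ast)$. Combined with the first inequality, this gives $J_{\cD(g_{1:H}^m)}(\bar g^\ast)\le\inf J_{\cD(g_{1:H}^m)}$, so $\bar g^\ast$ is optimal.

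The main obstacle is bookkeeping, which must respect the private/common split in \eqref{eq: expansion}. Two points need care. First, actions removed from $\tilde{\bP}_{j,h}$ are exactly agent $j$'s own past actions that are now common, so agent $j$ loses nothing relevant. Second, the expanded $\tilde{\cD}(g_{1:H}^m)$ must still be a well-posed decentralized LQG problem, meaning information is nondecreasing and $\tilde{\bZ}_h$ is a function of the previous private information, the previous actions and the new observations, so that the cost comparison makes sense. I would verify both using Assumption \ref{ass: evolution rule}(e) and the monotonicity of $\bar{\bC}_h$. I would also note that actions with $\bar B_{i,t}=\mathbf{0}$ are excluded by design, consistent with Assumption \ref{ass: useless action}.
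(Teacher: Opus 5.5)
Your proposal is correct and follows the paper's proof essentially step for step. The inclusion $\bar{\bI}_{i,h}\subseteq\tilde{\bI}_{i,h}$ gives $\min J_{\tilde{\cD}(g_{1:H}^m)}\le\inf J_{\cD(g_{1:H}^m)}$. The map $\varphi$ is then built forward in time by replacing each extra action $\tilde{\bU}_{j,t}$ with $\bar g^\ast_{j,t}(\bar{\bI}_{j,t})$; since $\bar{\bI}_{j,t}\subseteq\bar{\bC}_h\subseteq\bar{\bI}_{i,h}$, this keeps $\bar g^\ast_{i,h}$ linear in $\bar{\bI}_{i,h}$, reproduces the same pathwise actions, and closes the chain of inequalities.
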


Before proceeding further, we introduce some additional notation in $\tilde{\cD}(g_{1:H}^m)$. 
For any $h\in[H]$, we define  $\tilde{\bS}_h:=\begin{bmatrix}\tilde{\bX}_h^\top& \tilde{\bP}_{1,h}^\top& \cdots& \tilde{\bP}_{n,h}^\top
\end{bmatrix}^\top$ and use $\tilde{\cS}_h:=\tilde{\cX}\times \tilde{\cP}_{1,h}\times\cdots\times \tilde{\cP}_{n,h}$ to denote the space of $\tilde{\bS}_h$. Then, we use $\tilde{\bB}_h\in\cP(\tilde{\cS}_h)$ with $\tilde{\bB}_h(\db\tilde{\bS}_h):=\PP^{\tilde{\cD}(g_{1:H}^m)}(\db\tilde{\bS}_h\given \tilde{\bC}_h,\tilde{g}_{1:h-1})$ to denote the conditional probability measure
of $\tilde{\bS}_h$ given the common information $\tilde{\bC}_h$ and the past strategies $\tilde{g}_{1:h-1}$. Hence, we have $\tilde{\bB}_h=\Pi_h(\tilde{\bC}_h,\tilde{g}_{1:h-1})$ for some functional $\Pi_h$. The benefit of solving $\tilde{\cD}(g_{1:H}^m)$ instead of $\cD(g_{1:H}^m)$ is {that the former satisfies the strategy-independence condition for CIB beliefs under the stated assumptions, as shown below.} 
\begin{theorem}
    \label{thm: expansion}
    If $\cD$ is PN and satisfies Assumptions  \ref{ass: evolution rule}, \ref{ass: useless action}, and \ref{ass: non-degeneracy}, then for any fixed open-loop $g_{1:H}^m\in \cG_{1:H}^m$, $\tilde{\cD}(g_{1:H}^m)$ constructed as above is (strictly) PN and satisfies the \emph{SI-CIB condition}: for any $h\in[H]$, any realization $\tilde{C}_h\in\tilde{\cC}_h$, and any two control strategies $\tilde{g}_{1:h-1}, \tilde{g}_{1:h-1}'$ that can reach $\tilde{C}_h$, it holds that $\Pi_h(\tilde{C}_h,\tilde{g}_{1:h-1})=\Pi_h(\tilde{C}_h,\tilde{g}_{1:h-1}')$.
    Meanwhile, $\tilde{\cD}(g_{1:H}^m)$ has the information evolution rules as follows: for each  $i\in[n]$ and $h\in[H]$, 
    \begin{align}
        \label{equ: DecLQG_info_evolution}
            \tilde{\bC}_{h}&=\tilde{\bC}_{h-1}\cup\tilde{\bZ}_h, \tilde{\bZ}_h=\tilde{\chi}_h(\tilde{\bP}_{h-1},\tilde{\bU}_{h-1},\tilde{\bY}_{h}), \\
            \tilde{\bP}_{i,h}&=\tilde{\zeta}_{i,h}(\tilde{\bP}_{i,h-1},\tilde{\bU}_{i,h-1},\tilde{\bY}_{i,h}),\tilde{\bY}_{i,h}\in \tilde{\bI}_{i,h},\tilde{\bI}_{i,h-1
            }\subseteq \tilde{\bI}_{i,h}, \notag
    \end{align}
    for some projection functions $\tilde{\chi}_h$ and $\{\tilde{\zeta}_{i,h}\}_{i\in[n]}$. 
\end{theorem}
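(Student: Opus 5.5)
Three things have to be shown, in this order: the information evolution rules \eqref{equ: DecLQG_info_evolution}, strict partial nestedness of $\tilde{\cD}(g_{1:H}^m)$, and the SI-CIB condition.

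\textbf{Evolution rules.} The starting point is Theorem~\ref{thm: preserve PN}, which already supplies the evolution rules for $\cD(g_{1:H}^m)$ with projections $\bar\chi_h,\bar\zeta_{i,h}$. Consider the expansion \eqref{eq: expansion}. It moves into the common information exactly those actions $\bar\bU_{i,t}$ with $\bar B_{i,t}\neq\mathbf 0$ whose generating information $\bar\bI_{i,t}$ is already common at time $h$. It removes from private information those own past actions that have become common, together with useless ones.

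The membership condition $\bar\bI_{i,t}\subseteq\bar\bC_h$ is determined by the fixed IS, not by realizations. So the set of newly revealed actions at time $h$ is a fixed index set, and each such $\bar\bU_{i,t}$ is available either in $\tilde\bP_{i,h-1}$ or as $\tilde\bU_{i,h-1}$. Hence $\tilde\chi_h$ is $\bar\chi_h$ augmented by a fixed projection, and $\tilde\zeta_{i,h}$ is $\bar\zeta_{i,h}$ followed by a fixed deletion. Monotonicity $\tilde\bI_{i,h-1}\subseteq\tilde\bI_{i,h}$ follows because $\bar\bC_h$ is nondecreasing in $h$, so once an action is common it stays common. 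The property $\tilde\bY_{i,h}\in\tilde\bI_{i,h}$ is inherited.

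\textbf{Strict partial nestedness.} Take $(i_1,h_1)$ influencing $(i_2,h_2)$ with $h_1<h_2$. Following the sketch of Theorem~\ref{thm: preserve PN}, Assumption~\ref{ass: useless action} rules out $\bar B_{i_1,h_1}=\mathbf 0$. Assumption~\ref{ass: non-degeneracy} and PN then give $\bar\bI_{i_1,h_1}\subseteq\bar\bC_{h_1+1}$. By \eqref{eq: expansion}, this yields $\bar\bU_{i_1,h_1}\in\tilde\bC_{h_1+1}\subseteq\tilde\bI_{i_2,h_2}$. Strict PN therefore holds: both the information and the action of any influencing agent are known to the influenced agent.

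\textbf{SI-CIB condition.} Fix $h$ and any strategies $\tilde g_{1:h-1}$. The aim is to write $\tilde\bS_h$ as (a linear function of the primitive noises $\bX_1,\bW_{0:n,1:h-1}$) plus (a term that is a deterministic function of $\tilde\bC_h$ only). Unrolling \eqref{equ:system_dynamics} gives $\tilde\bX_h=(\text{noise part})+\sum_{t<h}\sum_i A_{h-1:t+1}\bar B_{i,t}\tilde\bU_{i,t}$. Every action with $\bar B_{i,t}\neq \mathbf 0$ has $\bar\bI_{i,t}\subseteq\bar\bC_{t+1}$ by the argument above, and is therefore in $\tilde\bC_{t+1}\subseteq\tilde\bC_h$. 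Useless actions contribute nothing.

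The components of $\tilde\bP_{j,h}$ are observations, which decompose the same way, and own actions. By Assumption~\ref{ass: useless action} together with the deletion in \eqref{eq: expansion}, the own actions still in private information must be useless actions, which never appear anywhere. So they are excluded; I need to check this against the definition carefully.

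Likewise each observation in $\tilde\bC_h$ equals a strategy-independent noise functional plus a known function of common actions. Subtracting that known function turns conditioning on $\tilde\bC_h$ into conditioning on a fixed set of linear functionals of the Gaussian noise, equivalently on the strategy-independent ``noise versions'' of the common observations. The conditional law is then the Gaussian conditional of the noise part, shifted by a $\tilde\bC_h$-measurable term determined by the realization $\tilde C_h$ and the matrices alone. It therefore does not depend on $\tilde g_{1:h-1}$.

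I expect this last step to be the main obstacle. The delicate points are, first, ensuring that the map from the observations in $\tilde\bC_h$ to their noise versions is invertible given the common actions. Second, handling reachability and degenerate covariances by working with regular conditional distributions of jointly Gaussian vectors, and using the pseudo-inverse formula so that singular $\Sigma$ causes no issue.
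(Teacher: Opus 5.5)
Your treatment of the evolution rules and of strict partial nestedness follows the paper's Parts~1--2. Assumption~\ref{ass: useless action} rules out $\bar B_{i,t}=\mathbf{0}$. Assumption~\ref{ass: non-degeneracy}, partial nestedness of $\cD(g_{1:H}^m)$, and the fact that another agent can hold agent $i$'s labels only through common information then give $\bar{\bI}_{i,t}\subseteq\bar{\bC}_{t+1}$. So every effective action enters $\tilde{\bC}_{t+1}$, and no action survives in $\tilde{\bP}_{j,h}$; the check you were unsure of goes through exactly this way.

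For SI-CIB your route is genuinely different. The paper changes one component $\tilde g_{i_1,h_1}$ at a time. It argues that conditioning on $\tilde{\bC}_h$ fixes both the input $\tilde{\bI}_{i_1,h_1}$ and the output $\tilde{\bU}_{i_1,h_1}$ of that component, or that the component is irrelevant when $\bar B_{i_1,h_1}=\mathbf{0}$, and then chains such single changes. That argument is structural and uses no linearity, but Gaussianity of $\tilde{\bB}_h$ and strategy independence of $\tilde\Sigma_h$ then need the separate argument of Lemma~\ref{lemma: SI-conditional-mean-covariance}. You instead split $\tilde{\bS}_h$ into a fixed linear functional of the primitive noise plus a fixed linear function of the actions recorded in $\tilde{\bC}_h$. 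This exploits linearity and yields an explicit strategy-independent kernel, namely a translated Gaussian conditional. It therefore gives SI-CIB, Gaussianity, and the deterministic covariance in one stroke, and it avoids the bookkeeping about intermediate strategies still reaching $\tilde C_h$.

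There are three points to tighten.

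\emph{The real crux.} The step that needs an argument is not the invertibility you flag; subtracting known action contributions is just a translation. What needs proof is why conditioning on $\tilde{\bC}_h$, which also contains arbitrary measurable actions, reduces to conditioning on the noise versions $\tilde{\mathbf O}_h$ of the common observations. It holds because each recorded action is $\tilde g_{i,t}(\tilde{\bI}_{i,t})$ with $\tilde{\bI}_{i,t}\subseteq\tilde{\bC}_{t+1}$. Induction on $t$ then makes every recorded action a measurable function of $\tilde{\mathbf O}_h$, so $\sigma(\tilde{\bC}_h)=\sigma(\tilde{\mathbf O}_h)$. This is where having the actor's information, not just the action, in the common information is essential: an action depending on private data would leak strategy-dependent information.

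\emph{Ordering in the evolution-rule paragraph.} Your claim that each newly revealed action is available in $\tilde{\bP}_{i,h-1}$ or as $\tilde{\bU}_{i,h-1}$ is unsupported at that point. Assumption~\ref{ass: evolution rule} does not force agents to retain their own past actions. The claim becomes correct once you first invoke the one-step revelation fact $\bar{\bI}_{i,t}\subseteq\bar{\bC}_{t+1}$, which shows that only time-$(h-1)$ actions are newly added at $h$. This is exactly how the paper argues it.

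\emph{Noise list.} Your list of primitive noises must include $\bW_{1:n,h}$, because time-$h$ observations enter both $\tilde{\bS}_h$ and $\tilde{\bC}_h$.
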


Furthermore, we define the \emph{prescription} \cite{ashutosh2013team} of each agent $i$ at timestep $h$ as $\tilde{\gamma}_{i,h}:\tilde{\cP}_{i,h}\rightarrow \tilde{\cU}_{i}$. 
We denote by $\tilde{\gamma}_{h}=(\tilde{\gamma}_{1,h}, \cdots, \tilde{\gamma}_{n,h})$ the joint prescription  of all the agents,  and by $\tilde{\Gamma}_{i,h}$ and $\tilde{\Gamma}_h$ the spaces of agent $i$'s and the joint prescriptions at timestep $h$, respectively. For notational convenience, we write $\tilde{\gamma}_h(\tilde{\bP}_h)=\begin{bmatrix}
    \tilde{\gamma}_{1,h}(\tilde{\bP}_{1,h})^\top&\cdots&\tilde{\gamma}_{n,h}(\tilde{\bP}_{n,h})^\top
\end{bmatrix}^\top$.

Following \cite{CIBLQgames}, under the SI-CIB condition, the belief $\tilde{\bB}_h$ in $\tilde{\cD}(g_{1:H}^m)$ is a Gaussian distribution with mean $\tilde{\bTheta}_h=\begin{bmatrix}
    \EE[\tilde{\bX}_h\given \tilde{\bC}_h]^\top&\EE[\tilde{\bP}_{1,h}\given \tilde{\bC}_h]^\top& \cdots& \EE[\tilde{\bP}_{n,h}\given \tilde{\bC}_h]^\top
\end{bmatrix}^\top$ and covariance $\tilde{\Sigma}_h$ that evolve over time, as formalized below.

\begin{lemma}[Adapted from Lemmas 2.2 \& 2.3 in \cite{CIBLQgames}] 
    Consider a decentralized LQG problem $\tilde{\cD}(g_{1:H}^m)$ constructed by the strict expansion in Theorem~\ref{thm: expansion}, that has information evolution rules in Equation  \eqref{equ: DecLQG_info_evolution} and satisfies the SI-CIB condition. Then the CIB belief $\tilde{\bB}_h$ admits a 
    Gaussian distribution $\cN(\tilde{\bTheta}_h,\tilde{\Sigma}_h)$, and the conditional mean $\tilde{\bTheta}_h$ and the conditional covariance $\tilde{\Sigma}_h$ of 
    $\tilde{\bB}_h$ evolve as follows: for any $h\in[2:H]$,
    \begin{align*}
        \tilde{\bTheta}_h=\tilde{\Psi}_h^1(\tilde{\bTheta}_{h-1},\tilde{\bZ}_h),\qquad  \tilde{\Sigma}_h=\tilde{\Psi}_h^2(\tilde{\Sigma}_{h-1}), 
    \end{align*}
    {where $\tilde{\Psi}_h^1$ is a fixed linear function, $\tilde{\Psi}_h^2$ is a fixed deterministic function, and neither depends on the strategies $\tilde g_{1:h-1}$. In particular, $\tilde\Sigma_h$ is deterministic and strategy-independent.}  
    \label{lemma: SI-conditional-mean-covariance}
\end{lemma}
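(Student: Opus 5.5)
The plan is to prove the lemma by induction on $h$, using the SI-CIB property from Theorem~\ref{thm: expansion}. The key consequence of SI-CIB is that we may evaluate $\tilde{\bB}_h$ under any convenient strategy that reaches $\tilde C_h$, and we take the all-zero strategy $\tilde g_{1:h-1}\equiv 0$. Strictly speaking, all-zero may not reach every realization. Linear strategies reach the same support as the zero strategy, however, because the support of the Gaussian is an affine shift of the noise support. So we fix zero control and treat general linear strategies by a deterministic shift, described below.

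\textbf{Base case and joint Gaussianity.} Under zero control, $\tilde{\bS}_h$ and $\tilde{\bC}_h$ are linear functions of the primitive Gaussians $(\bX_1,\{\bW_{i,t}\})$. This holds because $\tilde{\chi}_h$ and $\tilde{\zeta}_{i,h}$ are projections and the dynamics \eqref{equ:system_dynamics} are linear. Hence $(\tilde{\bS}_h,\tilde{\bC}_h)$ is jointly Gaussian, possibly degenerate, and the conditional law is $\cN(\tilde{\bTheta}_h,\tilde{\Sigma}_h)$. The mean $\tilde{\bTheta}_h$ is linear in $\tilde{\bC}_h$ (obtained from the Schur complement with a pseudo-inverse). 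The covariance $\tilde{\Sigma}_h$ does not depend on the realization. For $h=1$ this gives the claim directly.

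\textbf{Recursive step.} For the update, note that $\tilde{\bC}_h=\tilde{\bC}_{h-1}\cup\tilde{\bZ}_h$. By \eqref{equ: DecLQG_info_evolution}, the new pair $(\tilde{\bS}_h,\tilde{\bZ}_h)$ is a linear map of $(\tilde{\bS}_{h-1},\tilde{\bU}_{h-1},\bW_{0,h-1},\bW_{\cdot,h})$. The fresh noises are independent of $\tilde{\bC}_{h-1}$ and $\tilde{\bS}_{h-1}$.

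\textbf{Handling the control actions.} The control $\tilde{\bU}_{h-1}$ is exactly the term that could create strategy dependence. Here strict partial nestedness is used. Every action that enters $\tilde{\bZ}_h$ is itself in $\tilde{\bC}_h$ by construction \eqref{eq: expansion}. Every action that enters the state or a private component affects only quantities whose dependence is absorbed when we condition on common information. Concretely, we write $\tilde{\bS}_h=\mathcal{L}_h(\tilde{\bS}_{h-1},\text{noise})+\mathcal{B}_h\tilde{\bU}_{h-1}$, and the claim to prove is that $\mathcal{B}_h\tilde{\bU}_{h-1}$ is $\tilde{\bC}_h$-measurable. This is the main obstacle. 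I would prove it by combining Assumption~\ref{ass: non-degeneracy} with Theorem~\ref{thm: preserve PN}: any agent with $\bar B_{i,h-1}\neq\mathbf 0$ has $\bar{\bI}_{i,h-1}\subseteq\bar{\bC}_h$, so $\tilde{\bU}_{i,h-1}\in\tilde{\bC}_h$. Agents with $\bar B_{i,h-1}=\mathbf 0$ are excluded from all later information by Assumption~\ref{ass: useless action}.

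\textbf{Kalman-type update.} We then subtract the known term $\mathcal{B}_h\tilde{\bU}_{h-1}$, which is a deterministic shift given $\tilde{\bC}_h$. The update becomes a standard Gaussian conditioning of a linear-Gaussian innovation. Its prediction step is $\tilde{\bTheta}_{h-1}\mapsto\mathcal{L}_h\tilde{\bTheta}_{h-1}$ plus the known control term. Its correction step is a gain times $\tilde{\bZ}_h$ minus its predicted mean. The resulting update is linear in $(\tilde{\bTheta}_{h-1},\tilde{\bZ}_h)$, since the control term is a projection of $\tilde{\bZ}_h$, and this defines $\tilde{\Psi}^1_h$. The covariance recursion is the Riccati-type map $\tilde{\Sigma}_h=\tilde{\Psi}^2_h(\tilde{\Sigma}_{h-1})$, with gains built from pseudo-inverses. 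It involves only the model matrices and noise covariances, so it is deterministic and strategy-independent, which completes the induction.
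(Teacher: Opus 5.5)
Your overall route is the paper's: use SI-CIB to evaluate $\tilde{\bB}_h$ under a convenient strategy that makes $(\tilde{\bS}_h,\tilde{\bC}_h)$ jointly Gaussian. Then use strict partial nestedness to place every effective action $\tilde{\bU}_{i,h-1}$ in $\tilde{\bZ}_h$, so the update is a known-input, pseudo-inverse Kalman step that is linear in $(\tilde{\bTheta}_{h-1},\tilde{\bZ}_h)$ with a strategy-free covariance map.

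\textbf{The gap is the choice of comparison strategy.} SI-CIB only equates beliefs across strategies that reach the \emph{same} $\tilde C_h$. The zero strategy reaches only realizations whose action coordinates vanish. Your patch, that linear strategies reach the same support as the zero strategy, is false. Take an effective action $\tilde{\bU}_{i,t}$ in $\tilde{\bC}_h$, which is exactly what the strict expansion creates. A linear law $\tilde{\bU}_{i,t}=K\tilde{\bP}_{i,t}$ then reaches only realizations on the graph of $K$, not those with $\tilde U_{i,t}=0$. Nonlinear strategies are not handled at all, although the lemma asserts Gaussianity of $\tilde{\bB}_h$ under every admissible strategy. Relatedly, under zero control the coefficients of $\tilde C_h\mapsto\tilde\Theta_h$ on the action coordinates are undetermined, since those coordinates vanish almost surely. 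So the zero strategy cannot tell you how $\tilde\Theta_h$ depends on realized actions.

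The paper's fix is this. For a reachable $\tilde C_h$ with realized actions $\tilde U_{1:h-1}$, compare with the constant strategies $\tilde g'_{i,t}\equiv\tilde U_{i,t}$. These reach $\tilde C_h$ and make everything affine in the primitives. They also turn your ``known control term'' into a genuinely deterministic shift, which moves the conditional mean but not the covariance. Because the shift is read off from $\tilde{\bZ}_h$, the resulting update is a single fixed linear map.

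If you instead meant the recursive step to apply directly to an arbitrary strategy, without the SI-CIB reduction, then your key claim is mis-stated. $\tilde{\bC}_h$-measurability of $\mathcal{B}_h\tilde{\bU}_{h-1}$ is not enough in general. Conditioning on an action value conditions on a possibly nonlinear function of hidden private information. For example, a shared $\mathrm{sign}(\cdot)$ of a private Gaussian observation would give a truncated-Gaussian posterior.

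What actually carries the step is a fact you derived but used only to get the action into $\tilde{\bC}_h$. For every agent with $\bar B_{i,h-1}\neq\mathbf 0$, you have $\bar{\bI}_{i,h-1}\subseteq\bar{\bC}_h$, hence $\tilde{\bP}_{i,h-1}\subseteq\tilde{\bZ}_h$. The argument then runs in two stages:
\begin{itemize}
\item First condition on $\tilde{\bP}_{i,h-1}$. This is a linear function of $\tilde{\bS}_{h-1}$, so the conditioning is Gaussian.
\item The action $\tilde{\bU}_{i,h-1}=\tilde g_{i,h-1}(\tilde{\bC}_{h-1},\tilde{\bP}_{i,h-1})$ is then a constant, and conditioning on it is redundant.
\item What remains is affine-Gaussian in the prior error and the fresh noise, so the conditioning goes through.
\end{itemize}
With this made explicit, the direct induction works for all strategies, and it in fact proves SI-CIB rather than using it.
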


According to Lemma \ref{lemma: SI-conditional-mean-covariance}, we know that $\tilde{\bTheta}_h$ depends only on $\{\tilde{\bZ}_t\}_{t=1}^h$. Hence, 
there exists a fixed linear transformation $\tilde{\Psi}_h^3$ such that  $\tilde{\bTheta}_h=\tilde{\Psi}_h^3(\tilde{\bC}_h)$. 
Then, following 
\cite{CIBLQgames}, we can construct a new problem $\tilde{\cD}^\ddag(g_{1:H}^m)$: At timestep $h\in[H]$, the state of $\tilde{\cD}^\ddag(g_{1:H}^m)$ is defined as $\tilde{\bTheta}_h\in\tilde{\cS}_h$ and the action is defined as $\tilde{\gamma}_h\in\tilde{\Gamma}_h$. The stage cost is defined as
$c_h^\ddag(\tilde{\bTheta}_h,\tilde{\gamma}_h):=\EE\Big[\tilde{\bX}_h^\top\tilde{Q}_h^1\tilde{\bX}_h+\tilde{\gamma}_h(\tilde{\bP}_h)^\top\tilde{Q}_h^2\tilde{\gamma}_h(\tilde{\bP}_h)+\mathds{1}[h=H]\tilde{\bX}_{H+1}^{\top}\tilde Q_{H+1}^1\tilde{\bX}_{H+1}\Big],
$
where the expectation is taken over $[\tilde{\bX}_h^\top~\tilde{\bP}_h^\top]^\top\sim\cN(\tilde{\bTheta}_h,\tilde\Sigma_h)$ and, when $h=H$, also over the final state $\tilde{\bX}_{H+1}$ generated by the system dynamics.
The admissible strategy at timestep $h$ in $\tilde{\cD}^\ddag(g_{1:H}^m)$ is denoted by $g^\ddag_h:\tilde{\cC}_h\rightarrow \tilde{\Gamma}_h$, and $\cG_h^\ddag$ is the set of all admissible strategies.  The objective of $\tilde{\cD}^\ddag(g_{1:H}^m)$ is then {defined by}
\begin{align*}
J_{\tilde{\cD}^\ddag(g_{1:H}^m)}(g_{1:H}^\ddag):=\EE\left[\sum_{h=1}^Hc_h^\ddag(\tilde{\bTheta}_h,\tilde{\gamma}_h)\right]. 
\end{align*}  

As shown in the following theorem, the constructed problem $\tilde{\cD}^\ddag(g_{1:H}^m)$ is a Markov decision problem, and the optimal strategy of $\tilde{\cD}^\ddag(g_{1:H}^m)$ can be used to compute the team-optimal strategy of $\tilde{\cD}(g_{1:H}^m)$, and vice versa. 
\begin{theorem}
    \label{thm: virtual MDP and equivalence.}
    Let $\tilde{\cD}^\ddag(g_{1:H}^m)$ be the problem constructed from $\tilde{\cD}(g_{1:H}^m)$  satisfying the SI-CIB condition. Then, $\tilde{\cD}^\ddag(g_{1:H}^m)$ is a Markov decision problem with horizon $H$, state $\{\tilde{\bTheta}_h\}_{h\in[H]}$, control action $\{\tilde{\gamma}_h\}_{h\in[H]}$, and cost function $\{c_h^\ddag\}_{h\in[H]}$. Moreover, there exist bijections   $\varsigma_h:\tilde{\cG}_{h}\rightarrow \cG_h^\ddag$ for all $h\in[H]$ that can be defined as follows: for any strategy $\tilde{g}_h\in\tilde{\cG}_h$ and realization of common information $\tilde{C}_h\in\tilde{\cC}_h$, $\varsigma_h(\tilde{g}_h)(\tilde{C}_h)(\cdot)=\tilde{g}_h(\tilde{C}_h,\cdot)$. {These bijections also satisfy the following equality for every} strategy $\tilde{g}_{1:H}\in\tilde{\cG}_{1:H}$: $J_{\tilde{\cD}(g_{1:H}^m)}(\tilde{g}_{1:H})=J_{\tilde{\cD}^\ddag(g_{1:H}^m)}(g^\ddag_{1:H})$, where $g_h^\ddag=\varsigma_h(\tilde{g}_h), \forall h\in[H]$. 
\end{theorem}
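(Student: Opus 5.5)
We follow the standard common-information approach of \cite{ashutosh2013team,CIBLQgames}, specialized to the Gaussian conditional-mean statistic of Lemma~\ref{lemma: SI-conditional-mean-covariance}. The proof has two parts: first show that $\tilde{\cD}^\ddag(g_{1:H}^m)$ is a controlled Markov process, and then show that the maps $\varsigma_h$ are bijections that preserve cost.

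\textbf{Markov structure.} Fix $h$, a realization $\tilde C_h$, and a prescription $\tilde\gamma_h$. Since $\tilde\Sigma_h$ is deterministic and strategy-independent, the belief $\tilde{\bB}_h=\cN(\tilde{\bTheta}_h,\tilde\Sigma_h)$ is determined by $\tilde{\bTheta}_h$. By Lemma~\ref{lemma: SI-conditional-mean-covariance}, $\tilde{\bTheta}_{h+1}=\tilde\Psi^1_{h+1}(\tilde{\bTheta}_h,\tilde{\bZ}_{h+1})$. By \eqref{equ: DecLQG_info_evolution}, $\tilde{\bZ}_{h+1}=\tilde\chi_{h+1}(\tilde{\bP}_h,\tilde\gamma_h(\tilde{\bP}_h),\tilde{\bY}_{h+1})$, where $\tilde{\bY}_{h+1}$ is built from $\tilde{\bX}_{h+1}=A_h\tilde{\bX}_h+B_h\tilde\gamma_h(\tilde{\bP}_h)+\bW_{0,h}$ together with fresh noise. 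The noises $\bW_{\cdot,h}$ and $\bW_{\cdot,h+1}$ are independent of $(\tilde{\bC}_h,\tilde{\bS}_h)$. It follows that, conditioned on $(\tilde{\bC}_h,\tilde\gamma_{1:h})$, the law of $\tilde{\bZ}_{h+1}$ depends only on $\tilde{\bB}_h$ and $\tilde\gamma_h$, and hence only on $(\tilde{\bTheta}_h,\tilde\gamma_h)$. This gives
\[
\PP(\tilde{\bTheta}_{h+1}\in\cdot\given \tilde{\bTheta}_{1:h},\tilde\gamma_{1:h})=\PP(\tilde{\bTheta}_{h+1}\in\cdot\given \tilde{\bTheta}_{h},\tilde\gamma_{h}).
\]
The stage cost is handled by the tower property: $\EE[c_h\given \tilde{\bC}_h,\tilde\gamma_{1:h}]$ is an integral against $\tilde{\bB}_h$, which equals $c_h^\ddag(\tilde{\bTheta}_h,\tilde\gamma_h)$. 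At $h=H$ the terminal cost is included by also integrating over $\bW_{0,H}$.

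\textbf{Bijection and cost equality.} Injectivity and surjectivity of $\varsigma_h$ are immediate, because a map $\tilde{\cC}_h\times\tilde{\cP}_{h}\to\tilde{\cU}$ is exactly the curried form of a map $\tilde{\cC}_h\to\tilde\Gamma_h$. Measurability of the curried form needs a brief comment, and we restrict prescriptions to measurable (or linear, as is relevant later) maps. For cost equality, fix $\tilde g_{1:H}$ and set $g^\ddag_h=\varsigma_h(\tilde g_h)$. Both problems then generate the same joint law of $(\tilde{\bX}_{1:H+1},\tilde{\bP}_{1:H},\tilde{\bC}_{1:H},\tilde{\bU}_{1:H})$, since $\tilde{\bU}_{i,h}=\tilde g_{i,h}(\tilde{\bC}_h,\tilde{\bP}_{i,h})=\tilde\gamma_{i,h}(\tilde{\bP}_{i,h})$ with $\tilde\gamma_h=g^\ddag_h(\tilde{\bC}_h)$. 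Writing $J_{\tilde{\cD}}=\sum_h\EE\big[\EE[c_h\given\tilde{\bC}_h]\big]$ and applying the identity from the previous paragraph gives $J_{\tilde{\cD}}=\sum_h\EE[c^\ddag_h(\tilde{\bTheta}_h,\tilde\gamma_h)]=J_{\tilde{\cD}^\ddag}$.

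\textbf{Main obstacle.} The delicate point is that the strategies $g^\ddag_h$ in $\tilde{\cD}^\ddag$ depend on the full common information $\tilde{\bC}_h$, not merely on $\tilde{\bTheta}_h$. To call the problem a Markov decision problem on the state $\tilde{\bTheta}_h$, the controlled transition must be shown to be invariant to this extra history. This is exactly where the SI-CIB condition from Theorem~\ref{thm: expansion} enters. Without it, $\tilde{\bB}_h$ would depend on $\tilde g_{1:h-1}$, and conditioning on $\tilde{\bTheta}_h$ alone would not determine the transition.

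The plan is therefore to use the SI-CIB condition to show that $\PP(\tilde{\bS}_h\given\tilde{\bC}_h,\tilde g_{1:h-1})=\cN(\tilde{\bTheta}_h,\tilde\Sigma_h)$ for every admissible history-dependent policy, including policies that reach the same realization $\tilde C_h$ through different paths. The conditional transitions are then computed as described above. The remaining work is the routine Gaussian bookkeeping: verifying that $\tilde{\bZ}_{h+1}$, which is a projection of jointly Gaussian variables plus a possibly nonlinear function of the prescription, has a law determined by $(\tilde{\bTheta}_h,\tilde\gamma_h)$.
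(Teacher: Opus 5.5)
Your proposal is correct and follows essentially the same route as the paper. The Markov property comes from the SI-CIB Gaussian belief $\cN(\tilde{\bTheta}_h,\tilde\Sigma_h)$ with strategy-independent covariance, the update $\tilde{\bTheta}_{h+1}=\tilde\Psi^1_{h+1}(\tilde{\bTheta}_h,\tilde{\bZ}_{h+1})$, and fresh independent noise; the cost identity comes from the tower property together with the fact that $\tilde{\bC}_h$ has the same law under $\tilde g_{1:H}$ and under $(\varsigma_h(\tilde g_h))_{h\in[H]}$.

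One small slip: the noises that must be independent of $(\tilde{\bC}_h,\tilde{\bS}_h)$ are only $\bW_{0,h}$ and $\bW_{1:n,h+1}$, not all of $\bW_{\cdot,h}$. The observation noise $\bW_{1:n,h}$ enters $\tilde{\bY}_h$, which is part of $(\tilde{\bC}_h,\tilde{\bP}_h)$.
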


{By} Theorem \ref{thm: virtual MDP and equivalence.} {and Lemma \ref{thm: linear in M} below, an optimal strategy exists and may be chosen \emph{Markovian}, with $g_h^{\ddag,\ast}$ depending on the common record only through $\tilde{\bTheta}_h$. In particular,}
for any $h\in[H]$, we can write $g_h^{\ddag,\ast}$ as $g_h^{\ddag,\ast}(\tilde{\bC}_h)=\sG_h^\ast(\tilde{\Psi}_h^3(\tilde{\bC}_h))$ for some function $\sG_h^\ast$. Furthermore, if we define the strategy $\tilde{g}_h^\ast=\varsigma_h^{-1}(g_h^{\ddag,\ast}), \forall h\in[H]$, then $\tilde{g}_{1:H}^\ast$ is an optimal strategy of $\tilde{\cD}(g_{1:H}^m)$.  Therefore, there is no loss of optimality in restricting attention to $\tilde{g}_{i,h}$ that chooses action $\tilde{\bU}_{i,h}$ depending only on $\tilde{\bTheta}_h$ and $\tilde{\bP}_{i,h}$ for any $h\in[H],i\in[n]$ in solving the problem $\tilde{\cD}(g_{1:H}^m)$. We categorize this type of strategy as \emph{common-information-based Markovian (CIB-Markovian) strategy}.  Moreover, we can define the associated value functions for every $h\in[H]$ by
\begin{align*}
\tilde{V}_h(\tilde{\bTheta}_h):=\min_{\tilde{g}_{h:H}}\EE\left[\sum_{t=h}^H (\tilde{\bX}_t^\top \tilde{Q}_t^1\tilde{\bX}_t+\tilde{\bU}_t^\top \tilde{Q}_t^2\tilde{\bU}_t)+\tilde{\bX}_{H+1}^\top \tilde{Q}_{H+1}^1\tilde{\bX}_{H+1}\Biggiven \tilde{\bTheta}_h, \tilde{g}_{h:H}\right],
\end{align*}
which satisfy the following Bellman Equations:

\begin{equation}
\tilde{V}_h(\tilde{\bTheta}_h)=
\begin{cases}
\displaystyle\min_{\tilde{\gamma}_h\in\tilde{\Gamma}_h}\left\{c_h^\ddag(\tilde{\bTheta}_h,\tilde\gamma_h)+\EE[\tilde V_{h+1}(\tilde{\bTheta}_{h+1})\given\tilde{\bTheta}_h,\tilde\gamma_h]\right\},&h\in[H-1],\\[2mm]
\displaystyle\min_{\tilde{\gamma}_H\in\tilde{\Gamma}_H}c_H^\ddag(\tilde{\bTheta}_H,\tilde\gamma_H),&h=H.
\end{cases}
\label{equ: centralized Bellman Equation}
\end{equation}

Then, we can derive an approach to obtain the optimal strategy $\tilde{g}_{1:H}^\ast$ by dynamic programming: for all $h\in[H]$ and $\tilde{C}_h\in\tilde{\cC}_h$, $\tilde{g}_h^\ast(\tilde{C}_h,\cdot)$ minimizes the right-hand side of Equation   \eqref{equ: centralized Bellman Equation} at $\tilde{\bTheta}_h=\tilde\Psi_h^3(\tilde C_h)$. However, {there are uncountably many possible $\tilde{C}_h$ (or $\tilde{\Theta}_h$), making this approach computationally intractable without a closed-form solution.}

For computational tractability, we leverage the linearity property of the optimal strategy. Although it is known  that 
there exists a \emph{linear}  optimal strategy under PN IS (cf. Corollary \ref{cor:degenerate-PN-linearity}), and if it exists, the optimal strategy can be \emph{CIB-Markovian}  under the SI-CIB condition (cf. Theorem \ref{thm: virtual MDP and equivalence.}),  
the existence of an optimal strategy with both properties remains unclear. We show the existence of such an optimal strategy below.

    \begin{lemma}
        Suppose $\tilde{\cD}(g_{1:H}^m)$ is a decentralized LQG control problem constructed from a PN JCCO problem $\cD$ that satisfies Assumptions \ref{ass: evolution rule}, \ref{ass: useless action}, and \ref{ass: non-degeneracy}{. Then,} there exists an optimal linear strategy $\tilde{g}_{1:H}^\ast$, and matrices $\{\tilde{\cE}_{i,h}^\ast\}_{i\in[n], h\in[H]}, \{\tilde{\cF}_{i,h}^\ast\}_{i\in[n], h\in[H]}$ such that $\tilde{\gamma}_{i,h}^\ast(\cdot)=\tilde{\cE}_{i,h}^\ast\tilde{\Theta}_h+\tilde{\cF}_{i,h}^\ast\cdot$, $\tilde\gamma_h^\ast$ minimizes the right-hand side of Equation~\eqref{equ: centralized Bellman Equation} at every CIB mean $\tilde\Theta_h$ reachable under a preceding strategy, for each $h\in[H]$, and $\tilde{g}_{i,h}^\ast(\tilde{C}_h,\tilde{P}_{i,h})=\tilde{\cE}_{i,h}^\ast\tilde{\Psi}_h^3(\tilde{C}_h)+\tilde{\cF}_{i,h}^\ast\tilde{P}_{i,h}$ for any $i\in[n],h\in[H], \tilde{C}_h\in\tilde{\cC}_h,\tilde{P}_{i,h}\in \tilde{\cP}_{i,h}$. 
    \label{thm: linear in M}
    \end{lemma}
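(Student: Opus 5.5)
We will prove the lemma by backward induction on the Bellman recursion~\eqref{equ: centralized Bellman Equation}. The claim to propagate is that $\tilde V_h(\Theta)=\Theta^\top \Lambda_h\Theta+\lambda_h$ on the reachable set of CIB means, for some symmetric $\Lambda_h\succeq0$ and constant $\lambda_h$, and that the minimum at stage $h$ is attained by a prescription that is affine in the form $\tilde\gamma_{i,h}(\cdot)=\tilde\cE_{i,h}\Theta+\tilde\cF_{i,h}\,\cdot$ with matrices independent of $\Theta$.

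The key computation is as follows. Fix $h$ and $\Theta$, and write $\tilde\bS_h=\Theta+\bN_h$ with $\bN_h\sim\cN(0,\tilde\Sigma_h)$, where $\tilde\Sigma_h$ is strategy-independent by Lemma~\ref{lemma: SI-conditional-mean-covariance}. Any prescription in $L^2$ can be decomposed as $\tilde\gamma_{i,h}(\tilde\bP_{i,h})=\EE[\tilde\gamma_{i,h}(\tilde\bP_{i,h})]+\eta_i(\bN^{P_i}_h)$, where $\eta_i$ has zero mean. By Lemma~\ref{lemma: SI-conditional-mean-covariance}, the map $\tilde\bTheta_{h+1}=\tilde\Psi^1_{h+1}(\tilde\bTheta_h,\tilde\bZ_{h+1})$ is linear. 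Moreover, $\tilde\bZ_{h+1}$ is a projection of $(\tilde\bP_h,\tilde\bU_h,\tilde\bY_{h+1})$, and the dynamics are linear. Hence $\tilde\bTheta_{h+1}=\cA\Theta+\cB\,\EE[\tilde\bU_h]+(\text{linear in }\bN_h,\ \eta,\ \text{fresh noise})$. Substituting this into $c_h^\ddag+\EE[\tilde V_{h+1}]$, the quadratic forms separate into two parts: a quadratic in $(\Theta,\bar u)$ with $\bar u:=\EE[\tilde\bU_h]$, and a term depending only on the zero-mean parts $\eta$ and $\bN_h$. The cross terms vanish because $\eta$ and $\bN_h$ are zero-mean and independent of the deterministic quantities $\Theta$ and $\bar u$. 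The first part is minimized by $\bar u=\cK_h\Theta$, obtained through a pseudo-inverse (a Riccati-type step). This gives $\tilde\cE_{i,h}$ after accounting for the means of $\tilde\bP_{i,h}$: we set $\tilde\gamma_{i,h}(p)=\tilde\cF_{i,h}(p-\Theta^{P_i})+(\cK_h\Theta)_i$, and absorb $-\tilde\cF_{i,h}\Theta^{P_i}$ into $\tilde\cE_{i,h}$. The second part is a fixed decentralized static team problem: each agent $i$ chooses $\eta_i$ as a function of its own component of a jointly Gaussian vector, and the cost is quadratic. This part does not depend on $\Theta$. Its minimizer is therefore one fixed tuple $\eta^\ast$, independent of $\Theta$, which yields $\tilde V_h$ quadratic in $\Theta$ plus a constant and closes the induction.

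The main obstacle is showing that this static team subproblem admits a \emph{linear} minimizer $\eta_i^\ast(\bN^{P_i}_h)=\tilde\cF_{i,h}\bN^{P_i}_h$, and that the minimum is attained even with singular covariances and $Q^2_h$ only positive semidefinite. We plan to handle this in one of two ways. The first is to invoke Radner's/Ho--Chu's static team result via Corollary~\ref{cor:degenerate-PN-linearity}, applied to the static Gaussian team: the information is Gaussian and the cost quadratic in $\eta$, although the coefficient matrix coming from $\Lambda_{h+1}$ must be checked to be positive semidefinite. The second is to argue globally. By Corollary~\ref{cor:degenerate-PN-linearity}, $\tilde\cD(g^m_{1:H})$ (strictly PN by Theorem~\ref{thm: expansion}) has a linear optimal strategy $\hat g$. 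By Theorem~\ref{thm: virtual MDP and equivalence.}, the induced $g^\ddag$ is optimal in the MDP, so it attains the Bellman minimum at almost every reachable $\Theta$. Its prescription is linear, so its zero-mean part is a linear $\eta$ that attains the $\Theta$-independent static minimum. We then take $\tilde\cF_{i,h}^\ast$ from it. Together with the closed-form mean part, this gives a minimizer of the right-hand side of~\eqref{equ: centralized Bellman Equation} at \emph{every} reachable $\Theta$, not just almost every one. The definition $\tilde g^\ast_{i,h}(\tilde C_h,\tilde P_{i,h})=\tilde\cE^\ast_{i,h}\tilde\Psi^3_h(\tilde C_h)+\tilde\cF^\ast_{i,h}\tilde P_{i,h}$ is linear because $\tilde\Psi^3_h$ is linear. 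Optimality then follows from Theorem~\ref{thm: virtual MDP and equivalence.} and the verification theorem for the MDP.
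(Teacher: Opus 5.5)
Your proposal is correct, but it takes a genuinely different route from the paper.

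\textbf{The paper's route.} The paper works in two stages.
\begin{itemize}
\item It takes a linear team-optimal strategy from Corollary~\ref{cor:degenerate-PN-linearity} and freezes its private gains. It then uses the centralized-LQG reduction of \cite{AA}, with $\tilde\bTheta_h$ as the Kalman estimate of $(\tilde\bX_h,\tilde\bP_h)$ given $\tilde\bZ_{1:h}$, to make the common-information part linear in $\tilde\bTheta_h$. This yields global optimality.
\item It proves the pointwise Bellman property by backward induction. The key step projects each $\tilde\bU_{i,h}$ in $L^2$ onto affine functions of $\delta\tilde\bP_{i,h}$. Joint Gaussianity of $(\tilde\omega_h,\delta\tilde\bP_{i,h})$ makes the residual orthogonal to all of $\tilde\omega_h$ and to the fresh noise, so affine prescriptions dominate. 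Only then does it split the objective into the mean quadratic (pseudo-inverse gain) and the $\theta$-independent centered quadratic in block-diagonal $\tilde\cF_h$.
\end{itemize}

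\textbf{Your route.} You split mean and centered parts first, for arbitrary prescriptions. This is valid because only zero-mean-ness is used, and it immediately isolates a $\Theta$-independent static Gaussian team.
\begin{itemize}
\item Your Option~1 then needs exactly the Radner-type fact that the paper's projection argument proves by hand. Equivalently, it is the static-team core of Corollary~\ref{cor:degenerate-PN-linearity}, whose positive-semidefiniteness hypothesis follows from your $\Lambda_{h+1}\succeq0$.
\item Global optimality then comes from MDP verification via Theorem~\ref{thm: virtual MDP and equivalence.} rather than from \cite{AA}.
\end{itemize}

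\textbf{Comparison.} Your route is shorter, avoids constructing the auxiliary centralized system and its filter, and makes the $\Theta$-independence of $\tilde\cF_h^\ast$ transparent. The paper's route is more self-contained: it needs neither a separate static-team theorem nor any argument that optimal policies are almost-surely Bellman-optimal.

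\textbf{What Option~2 still needs.} The claim that the globally optimal linear $\hat g$ attains the Bellman minimum at almost every reachable $\Theta$ should be justified by an $\epsilon$-improvement argument.
\begin{itemize}
\item Replace stage $h$ by the closed-form mean gain plus a fixed near-optimal centered $\eta$. This is trivially measurable because it does not depend on $\Theta$.
\item Continue from $h+1$ with the Bellman-optimal strategies supplied by the induction hypothesis. The tail verification from that hypothesis also gives that $\hat g$'s cost-to-go is at least $\tilde V_{h+1}$.
\item Since $\hat g$ uses the same private gain at every $\tilde C_h$, any suboptimality of its centered part is a uniform positive gap. That would contradict optimality of $\hat g$.
\end{itemize}
Hence a single reachable point suffices, and your upgrade from ``almost every'' to ``every'' reachable mean goes through.
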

    
    Lemma \ref{thm: linear in M} shows that it suffices to compute the optimal strategy that is linear in $\tilde{\bTheta}_h$ and $\tilde{\bP}_{i,h}$. Then, for any $i\in[n],h\in[H]$, we can parameterize the strategy $\tilde{g}_{i,h}$ by two matrices $\tilde{\cE}_{i,h},\tilde{\cF}_{i,h}$, where $\tilde{g}_{i,h}(\tilde{\bC}_h,\tilde{\bP}_{i,h})=\tilde{\cE}_{i,h}\tilde{\Psi}_h^3(\tilde{\bC}_h)+\tilde{\cF}_{i,h}\tilde{\bP}_{i,h}$. Define $\tilde{\cE}_h^\ast:=\begin{bmatrix}
        (\tilde{\cE}_{1,h}^{\ast})^\top&\cdots&(\tilde{\cE}_{n,h}^{\ast})^\top
    \end{bmatrix}^\top$ and $\tilde{\cF}_h^\ast:=\diag(\tilde{\cF}_{1,h}^\ast,\cdots, \tilde{\cF}_{n,h}^\ast)$. With these definitions, we can find the optimal control strategy $\tilde{g}_{h}^\ast$ by solving for  the optimal matrices:
        \begin{align}
(\tilde{\cE}_h^\ast,\tilde{\cF}_h^\ast)\in\argmin_{\tilde{\cE}_h,\tilde{\cF}_h}
\begin{cases}
c_h^\ddag(\tilde\Theta_h,\tilde\gamma_h)+\EE[\tilde V_{h+1}(\tilde\bTheta_{h+1})\given\tilde\bTheta_h=\tilde\Theta_h,\tilde\cE_h,\tilde\cF_h],&h\in[H-1],\\
c_H^\ddag(\tilde\Theta_H,\tilde\gamma_H),&h=H,
\end{cases}\label{eq: objective_E_F}
        \end{align}
    
    As shown in the following theorem, such optimal matrices can be computed through a set of Riccati Equations with better computational tractability.  

    \begin{theorem}
        Suppose $\tilde{\cD}(g_{1:H}^m)$ is a decentralized LQG control problem constructed from a PN JCCO problem $\cD$ that satisfies Assumptions \ref{ass: evolution rule}, \ref{ass: useless action}, and \ref{ass: non-degeneracy}. For $h\in[H-1]$, the conditional mean evolves as $\tilde{\bTheta}_{h+1}=K_{h+1}^1\tilde{\bTheta}_h+K_{h+1}^2\tilde{\bP}_h+K_{h+1}^3\tilde{\bU}_h+K_{h+1}^4\tilde{\bY}_{h+1}$, where these matrices are computed by a centralized Kalman filter. For $h\in[H-1]$, define
        \begin{align*}
        &\tilde L_h^1=\tilde L_h^3:=K_{h+1}^3+K_{h+1}^4\tilde E_{h+1}\tilde B_h,\\
        &\tilde L_h^2:=K_{h+1}^1+K_{h+1}^2\II_{p,h}+K_{h+1}^4\tilde E_{h+1}\tilde A_h\II_{x,h},\\
        &\tilde L_h^4:=K_{h+1}^2\II_{p,h}+K_{h+1}^4\tilde E_{h+1}\tilde A_h\II_{x,h},
        \end{align*}
        and set $\tilde L_H^1=\tilde L_H^3:=\tilde B_H$ and $\tilde L_H^2=\tilde L_H^4:=\tilde A_H\II_{x,H}$.
        Then, 
        the value function $\tilde{V}_h(\tilde{\bTheta}_h)$ has the quadratic form $\tilde{V}_h(\tilde{\bTheta}_h)=\tilde{\bTheta}_h^\top \tilde{R}_h\tilde{\bTheta}_h+\tilde{c}_h$ for some matrix $\tilde{R}_h$ and constant $\tilde{c}_h$  that do not depend on $\tilde{\bTheta}_h$, which can be computed through the following backward recursion: 
        \begin{equation}
            \begin{aligned}
            \tilde{R}_h&=(\hat{\cE}_h^{\ast})^{\top}\tilde{Q}_h^2\hat{\cE}_h^\ast+(\tilde{L}_h^1\hat{\cE}_h^\ast+\tilde{L}_h^2)^\top \tilde{R}_{h+1}(\tilde{L}_h^1\hat{\cE}_h^\ast+\tilde{L}_h^2)+\II_{x,h}^\top\tilde{Q}_h^1\II_{x,h},\\
\tilde{R}_{H+1}&=\tilde{Q}_{H+1}^1,\\
            \hat{\cE}_h^\ast&=-(\tilde{Q}_h^2+(\tilde{L}_h^1)^\top\tilde{R}_{h+1}\tilde{L}_h^1)^\dag(\tilde{L}_h^{1})^\top\tilde{R}_{h+1}\tilde{L}_h^2,
            \end{aligned}
            \label{eq: centralized Riccati}
        \end{equation}
        where $\II_{x,h},\II_{p,h}$ are the projection matrices that project vectors in $\tilde{\cS}_h=\tilde{\cX}\times \tilde{\cP}_h$ to the corresponding vectors in $\tilde{\cX}$ and $\tilde{\cP}_h$, respectively. For $h\in[H]$, define
        \begin{equation*}
            \Sigma_t^y:=\diag(\tilde\Sigma_{1,t},\ldots,\tilde\Sigma_{n,t}),\qquad
            \tilde\Upsilon_h:=\begin{cases}
            K_{h+1}^4(\tilde E_{h+1}\Sigma_{0,h}\tilde E_{h+1}^\top+\Sigma_{h+1}^y)(K_{h+1}^4)^\top,&h\in[H-1],\\
            \Sigma_{0,H},&h=H.
            \end{cases}
        \end{equation*}
        Also, $\{\tilde{\cF}_{i,h}^{\ast}\}_{i\in[n]},\tilde{c}_h$ can be computed as follows:
        \begin{equation} 
            \begin{aligned}
            &\tilde{\cF}_{h}^\ast\in\argmin_{\tilde{\cF}_h=\diag(\tilde{\cF}_{1,h},\cdots,\tilde{\cF}_{n,h})}\tr(\tilde{\cF}_h\II_{p,h}\tilde{\Sigma}_h\II_{p,h}^\top \tilde{\cF}_h^\top\tilde{Q}_h^2)+\tr((\tilde{L}_h^3\tilde{\cF}_h\II_{p,h}+\tilde{L}_h^4)\tilde{\Sigma}_h(\tilde{L}_h^3\tilde{\cF}_h\II_{p,h}+\tilde{L}_h^4)^\top\tilde{R}_{h+1}),\\
            &\tilde{\cF}_h^\ast:=\diag(\tilde{\cF}_{1,h}^\ast,\cdots, \tilde{\cF}_{n,h}^\ast),\\
            &\tilde{c}_h=\tr(\tilde{\cF}_h^\ast\II_{p,h}\tilde{\Sigma}_h\II_{p,h}^\top (\tilde{\cF}_h^\ast)^\top\tilde{Q}_h^2)+\tr(\II_{x,h}\tilde{\Sigma}_h\II_{x,h}^\top \tilde{Q}_h^1)+\tr((\tilde{L}_h^3\tilde{\cF}_h^\ast\II_{p,h}+\tilde{L}_h^4)\tilde{\Sigma}_h(\tilde{L}_h^3\tilde{\cF}_h^\ast\II_{p,h}+\tilde{L}_h^4)^\top\tilde{R}_{h+1})\\
            &\quad+\tr(\tilde\Upsilon_h\tilde R_{h+1})+\tilde{c}_{h+1},\\
            &\tilde{c}_{H+1}=0.
            \end{aligned}
        \end{equation}
        Finally, $\tilde{\cE}_h^\ast:=\begin{bmatrix}(\tilde{\cE}_{1,h}^\ast)^\top&\cdots&(\tilde{\cE}_{n,h}^\ast)^\top
        \end{bmatrix}^\top$ is computed as
        \begin{equation}
            \tilde{\cE}_h^\ast=\hat{\cE}_h^\ast-\diag(\tilde{\cF}_{1,h}^\ast,\tilde{\cF}_{2,h}^\ast,\cdots,\tilde{\cF}_{n,h}^\ast)\II_{p,h}.
        \end{equation}
        \label{thm: dp_in_decLQG}
    \end{theorem}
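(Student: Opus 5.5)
We proceed by backward induction on $h$, proving simultaneously that $\tilde V_h(\tilde\Theta_h)=\tilde\Theta_h^\top\tilde R_h\tilde\Theta_h+\tilde c_h$ and that the minimizer in \eqref{eq: objective_E_F} is given by the stated $(\tilde\cE_h^\ast,\tilde\cF_h^\ast)$. By Lemma~\ref{thm: linear in M} we may restrict the minimization in \eqref{equ: centralized Bellman Equation} to linear prescriptions $\tilde\gamma_{i,h}(\cdot)=\tilde\cE_{i,h}\tilde\Theta_h+\tilde\cF_{i,h}\,\cdot$ without loss of optimality, so the Bellman step reduces to a finite-dimensional matrix optimization. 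The mean recursion $\tilde\bTheta_{h+1}=K^1_{h+1}\tilde\bTheta_h+K^2_{h+1}\tilde\bP_h+K^3_{h+1}\tilde\bU_h+K^4_{h+1}\tilde\bY_{h+1}$ follows from Lemma~\ref{lemma: SI-conditional-mean-covariance}: $\tilde\Psi^1_{h+1}$ is linear, and $\tilde\bZ_{h+1}$ is a projection of $(\tilde\bP_h,\tilde\bU_h,\tilde\bY_{h+1})$ by \eqref{equ: DecLQG_info_evolution}. Because the belief is Gaussian with strategy-independent covariance, these are Kalman-filter gains.

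For the inductive step, write $\tilde\bS_h=\tilde\Theta_h+\bm\epsilon_h$ with $\bm\epsilon_h\sim\cN(0,\tilde\Sigma_h)$ under the belief. Substitute $\tilde\bU_h=\tilde\cE_h\tilde\Theta_h+\tilde\cF_h\II_{p,h}\tilde\bS_h$, $\tilde\bY_{h+1}=\tilde E_{h+1}(\tilde A_h\II_{x,h}\tilde\bS_h+\tilde B_h\tilde\bU_h+\bW_{0,h})+\bW^y_{h+1}$, and $\tilde\bP_h=\II_{p,h}\tilde\bS_h$ into the mean recursion. Collecting terms gives
\[
\tilde\bTheta_{h+1}=\bigl(\tilde L^1_h(\tilde\cE_h+\tilde\cF_h\II_{p,h})+\tilde L^2_h\bigr)\tilde\Theta_h+(\tilde L^3_h\tilde\cF_h\II_{p,h}+\tilde L^4_h)\bm\epsilon_h+K^4_{h+1}(\tilde E_{h+1}\bW_{0,h}+\bW^y_{h+1}).
\]
This is exactly where the definitions of $\tilde L^1,\dots,\tilde L^4$ come from. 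The case $h=H$ is handled the same way with $\tilde\bX_{H+1}$ in place of $\tilde\bTheta_{H+1}$. The three noise terms are independent with zero mean. Hence, using the induction hypothesis, the right-hand side of \eqref{eq: objective_E_F} splits into (i) a quadratic in $\tilde\Theta_h$ involving only $\hat\cE_h:=\tilde\cE_h+\tilde\cF_h\II_{p,h}$, namely $\tilde\Theta_h^\top[\hat\cE_h^\top\tilde Q^2_h\hat\cE_h+(\tilde L^1_h\hat\cE_h+\tilde L^2_h)^\top\tilde R_{h+1}(\tilde L^1_h\hat\cE_h+\tilde L^2_h)+\II_{x,h}^\top\tilde Q^1_h\II_{x,h}]\tilde\Theta_h$; (ii) a trace term depending only on $\tilde\cF_h$; and (iii) the constants $\tr(\II_{x,h}\tilde\Sigma_h\II_{x,h}^\top\tilde Q_h^1)+\tr(\tilde\Upsilon_h\tilde R_{h+1})+\tilde c_{h+1}$.

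The key observation is that the change of variables $(\tilde\cE_h,\tilde\cF_h)\mapsto(\hat\cE_h,\tilde\cF_h)$ is a bijection, and $\hat\cE_h$ is unconstrained because $\tilde\cE_h$ is a full stacked matrix. Only $\tilde\cF_h$ carries the block-diagonal (decentralization) constraint. The problem therefore decouples. First, minimize (i) over $\hat\cE_h$ for all $\tilde\Theta_h$ at once; this is a standard LQ least-squares step. Since $\tilde Q^2_h+(\tilde L^1_h)^\top\tilde R_{h+1}\tilde L^1_h\succeq0$, the pseudo-inverse formula gives a minimizer, because $(\tilde L^1_h)^\top\tilde R_{h+1}\tilde L^2_h$ lies in its range (as $\tilde R_{h+1}\succeq0$ by induction). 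Plugging it in yields the Riccati update for $\tilde R_h$. Second, minimize (ii) over block-diagonal $\tilde\cF_h$. The minimum is attained because the objective is a convex quadratic that is bounded below, and the minimum value produces $\tilde c_h$. Recovering $\tilde\cE_h^\ast=\hat\cE_h^\ast-\tilde\cF_h^\ast\II_{p,h}$ gives the stated form.

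The main obstacle is the bookkeeping in the first step: verifying that the cross terms between $\tilde\Theta_h$ and $\bm\epsilon_h$, and between $\bm\epsilon_h$ and the fresh noise, vanish. This requires $\bm\epsilon_h$ to be independent of $\tilde\Theta_h$ and of $\bW_{0,h},\bW_{h+1}^y$, which holds by the Gaussian conditional structure and the SI-CIB condition (Theorem~\ref{thm: expansion}). A second point is that $\tilde L^1_h=\tilde L^3_h$ multiplies both $\tilde\cE_h\tilde\Theta_h$ and $\tilde\cF_h\II_{p,h}\tilde\Theta_h$, which is what makes the reparametrization through $\hat\cE_h$ work. Finally, the minimizer need only be optimal at reachable $\tilde\Theta_h$, consistent with Lemma~\ref{thm: linear in M}.
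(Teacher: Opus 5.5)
Your proposal is correct and follows the paper's proof essentially step for step. It uses the same substitution of the mean recursion to produce $\tilde L^1_h,\dots,\tilde L^4_h$, and the same bijective reparametrization $\hat\cE_h=\tilde\cE_h+\tilde\cF_h\II_{p,h}$. That reparametrization splits the Bellman objective into three parts: an unconstrained mean quadratic, solved for all $\tilde\Theta_h$ by the pseudo-inverse gain via the range inclusion of Lemma~\ref{lemma: kernel_inclusion}; a block-diagonal trace problem in $\tilde\cF_h$; and constants. The step $h=H$ is handled by the same terminal convention.

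The paper's Part 1 goes further than you do on the filter. It explicitly builds the singular Kalman filter on the augmented state $(\tilde\bX_h,\tilde\bP_h)$ with observations $\tilde\bZ_h$, using strict partial nestedness to justify treating the effective actions revealed in $\tilde\bZ_{h+1}$ as known inputs. You only obtain existence and linearity of the $K$ matrices from Lemma~\ref{lemma: SI-conditional-mean-covariance}, which is all the Riccati argument needs (and is also how the paper's Part 2 proceeds). One small omission: state explicitly that $\tilde R_h\succeq0$, since it is a sum of Gram terms, to close the induction hypothesis you invoke.
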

        
        \vspace{-13pt}\noindent\textbf{Solving JCCO with open-loop $g^m_{1:H}$:} 
        Now we are ready to solve the problem $\cD(g^m_{1:H})$. First, from Theorem \ref{thm: dp_in_decLQG}, we can solve for $\tilde{g}_{1:H}^\ast$ as $\tilde{g}_{i,h}^\ast(\tilde{\bC}_h,\tilde{\bP}_{i,h})=\tilde{\cE}_{i,h}^\ast\tilde{\Psi}_h^3(\tilde{\bC}_h)+\tilde{\cF}_{i,h}^\ast\tilde{\bP}_{i,h}$ for any $i\in[n],h\in[H]$, and obtain the optimal value of $\min_{\tilde{g}_{1:H}}J_{\tilde{\cD}(g_{1:H}^m)}(\tilde{g}_{1:H})=\EE[\tilde{V}_1(\tilde{\bTheta}_1)]:=\tilde{c}_0$. Note that  $\tilde{\bTheta}_1=\tilde{\Psi}_1^3(\tilde{\bC}_1)=K_1^1\tilde{\bC}_1, \tilde{\bC}_1=\tilde{\bZ}_1=\tilde{\chi}_1(\tilde{\bY}_1)=K_1^2\tilde{\bY}_1, \tilde{\bY}_1=\tilde{E}_1\tilde{\bX}_1+\tilde{\bW}_{1:n,1}$, where $\tilde{\bX}_1\sim \cN(\mathbf{0},\Sigma_1)$ for some matrices $K_1^1,K_1^2$. Therefore, we can write         $\tilde{c}_0=\tr(\tilde{E}_1^\top (K_1^{2})^\top (K_1^{1})^\top\tilde{R}_1K_1^1K_1^2\tilde{E}_1\Sigma_1)+\tr((K_1^{2})^\top (K_1^{1})^\top\tilde{R}_1K_1^1K_1^2\diag(\Sigma_{1,1},\cdots,\Sigma_{n,1}))+\tilde{c}_1$. Then,  by Lemma \ref{lemma: equivalence of PN and sPN}, we can obtain an optimal control strategy of $\cD(g^m_{1:H})$ as  $\bar{g}_{1:H}^{\ast}=\varphi(\tilde{g}_{1:H}^{\ast},\cD(g_{1:H}^m))$ such that $J_{\cD(g_{1:H}^m)}(\bar{g}_{1:H}^{\ast})=J_{\tilde{\cD}(g_{1:H}^m)}(\tilde{g}_{1:H}^{\ast})$. If we assign $g_{1:H}^{\ast}=\bar{g}_{1:H}^{\ast}$, then $g_{1:H}^{\ast}$ is an optimal control strategy with respect to the communication strategy $g_{1:H}^m$ in $\cD$. Let $M_{1:H}$ be the communication actions chosen by $g_{1:H}^m$ such that $M_h=g^m_h${. Then,} $J_\cD(g_{1:H}^m, g_{1:H}^{\ast})=J_{\tilde{\cD}(g_{1:H}^m)}(\tilde{g}_{1:H}^{\ast})+\sum_{h=1}^H \cK_h(g^m_h)$. Therefore, we can finally solve the JCCO problem by solving $g_{1:H}^{m,\ast}\in\argmin_{g_{1:H}^m}J_{\tilde{\cD}(g_{1:H}^m)}(\tilde{g}_{1:H}^{\ast})+\sum_{h=1}^H \cK_h(g^m_h)$.

\section{Extension to JCCO with Closed-loop Communication Strategies} 
\label{sec: closed-loop}

In this section, we extend our study to the JCCO problem with  \emph{closed-loop}  communication strategies, i.e., for any  
$h\in[H],i\in[n], g_{i,h}^m:\cI_{i,h^-}\times\cM_{1:h-1}\rightarrow \cM_{i,h}$, with associated (closed-loop) control strategies $g_{i,h}^a:\cI_{i,h^+}\times \cM_{1:h}\rightarrow \cU_{i}$. Since each agent may influence others through the additional sharing determined by the communication strategy $g_{i,h}^m$, while others may not access the input of $g_{i,h}^m$, PN IS may break. Therefore, the SI-CIB condition does not hold in general  and the common-information-based beliefs need not be Gaussian distributions that can be characterized by the finite-dimensional conditional mean and covariance. 
For the recursion at $h=1$, identify $(\bX_1,\bP_{0^+})=(\bX_1,\emptyset)$ with $\bX_1$ and set
$\bB_{0^+}:=\cN(\bm0,\Sigma_1)$, $\bTheta_{0^+}:=\bm0$, and $\Sigma_{0^+}:=\Sigma_1$.  Since the strict expansion has the same initial law, set $\tilde{\bB}_{0^+}:=\bB_{0^+}$, $\tilde{\bTheta}_{0^+}:=\bTheta_{0^+}$, $\tilde{\Sigma}_{0^+}:=\Sigma_{0^+}$, and $\gamma_0^a=\tilde\gamma_0^a:=\emptyset$.
Fortunately, we can still apply the approach designed in \cite{LTC}, which expanded the problem into a new one that satisfies the SI-CIB condition, and then solved it via dynamic programming.
To guarantee the SI-CIB condition, \cite{LTC} introduced the following assumption. 
\begin{assumption}
    \label{ass: limit_communication}
    The communication strategies take common information and past communication actions as input, i.e., $
        \forall i\in[n],h\in[H], g_{i,h}^m:\cC_{h^-}\times \cM_{1:h-1}\rightarrow \cM_{i,h}.$
\end{assumption}

To this end, we first introduce some additional notation. We denote by $\cP_{h^-}(M_{1:h-1})\subseteq \cP_{h^-}$ the set of all possible $P_{h^-}$ given $M_{1:h-1}$. Similarly, we denote by $\cP_{h^+}(M_{1:h})\subseteq \cP_{h^+}$ the set of all possible $P_{h^+}$ given $M_{1:h}$. Then, given any $M_{1:h-1}$, we can define $\bB_{h^-}\in\cP(\cX\times\cP_{h^-}(M_{1:h-1}))$ as the conditional probability measure of state $\bX_h$ and private information $\bP_{h^-}$, given the past strategies $g_{1:h-1}^m,g_{1:h-1}^a$, communication actions $M_{1:h-1}$, and common information $\bC_{h^-}$ before additional sharing. Similarly, we define $\bB_{h^+}\in\cP(\cX\times\cP_{h^+}(M_{1:h}))$ as the conditional probability measure of $\bX_h$ and $\bP_{h^+}$ given $g_{1:h}^m,g_{1:h-1}^a, \bM_{1:h}$, and $\bC_{h^+}$. Hence, we can write $\bB_{h^-}=\Pi_{h^-}(\bC_{h^-},\bM_{1:h-1}, g_{1:h-1}^m,g_{1:h-1}^a)$ and $\bB_{h^+}=\Pi_{h^+}(\bC_{h^+},\bM_{1:h}, g_{1:h}^m,g_{1:h-1}^a)$ for some functionals $\Pi_{h^-}$ and $\Pi_{h^+}$, respectively. Then, we can extend the SI-CIB condition as follows.
\begin{definition}[SI-CIB Condition of $\cD$]
We say that a JCCO problem $\cD$ satisfies the \emph{strategy-independent common-information-based belief} condition  if 
\begin{itemize}
    \item For any $h\in[H]$, any realization $C_{h^-}\in \cC_{h^-}, M_{1:h-1}\in \cM_{1:h-1}$ and any two strategies $(g_{1:h-1}^m,g_{1:h-1}^a)$ and $(g_{1:h-1}^{m,'},g_{1:h-1}^{a,'})$ that can reach $C_{h^-}$ and $M_{1:h-1}$, it holds that $\Pi_{h^-}(C_{h^-},M_{1:h-1}, g_{1:h-1}^m,g_{1:h-1}^a)=\Pi_{h^-}(C_{h^-},M_{1:h-1}, g_{1:h-1}^{m,'},g_{1:h-1}^{a,'}).$ 
    \item For any $h\in[H]$, any realization $C_{h^+}\in \cC_{h^+}, M_{1:h}\in \cM_{1:h}$ and any two strategies $(g_{1:h}^m,g_{1:h-1}^a)$ and $(g_{1:h}^{m,'},g_{1:h-1}^{a,'})$ that can reach $C_{h^+}$ and $M_{1:h}$, it holds that $
        \Pi_{h^+}(C_{h^+},M_{1:h}, g_{1:h}^m,g_{1:h-1}^a)=\Pi_{h^+}(C_{h^+},M_{1:h}, g_{1:h}^{m,'},g_{1:h-1}^{a,'}).$
\end{itemize}
\label{def: SI-CIB-closed_loop}
\end{definition}

Under Assumption \ref{ass: limit_communication}, for both timesteps  $h^-$ and $h^+$, we can expand the problem $\cD$ into a new one by adding some actions that affect other agents into the common information, similarly to Equation  \eqref{eq: expansion}. After the strict expansion, the new problem satisfies the SI-CIB condition, and the team-optimal strategy of the expanded JCCO problem can be used to obtain that of the original one. 
More importantly, 
satisfying the SI-CIB condition ensures that  $\tilde{\bB}_{h^-}$ and $\tilde{\bB}_{h^+}$ are Gaussian distributions, as shown below. 
\begin{theorem}
    \label{theorem: closed_loop satisfying SI-CIB and Gaussian}
    Let $\tilde{\cD}$ be the problem obtained by the above strict expansion from a JCCO problem $\cD$ with PN IS that satisfies Assumptions \ref{ass: evolution rule}, \ref{ass: useless action}, \ref{ass: non-degeneracy}, and \ref{ass: limit_communication}. Then,  
    $\tilde{\cD}$ is a JCCO problem satisfying the SI-CIB condition. Moreover, for any $h\in[H]$, $\tilde{\bB}_{h^-},\tilde{\bB}_{h^+}$ are Gaussian distributions. Formally, $\tilde{\bB}_{h^-}=\cN(\tilde{\bTheta}_{h^-},\tilde{\Sigma}_{h^-}), \tilde{\bB}_{h^+}=\cN(\tilde{\bTheta}_{h^+},\tilde{\Sigma}_{h^+})$, and
    \begin{equation}
        \begin{aligned}
\tilde{\bTheta}_{h^-}=\tilde{\Psi}_{h^-}^1(\tilde{\bTheta}_{(h-1)^+},\tilde{\bM}_{1:h-1},\tilde{\bZ}_h^b),\quad \tilde{\Sigma}_{h^-}=\tilde{\Psi}_{h^-}^2(\tilde{\bM}_{1:h-1}),\\
\tilde{\bTheta}_{h^+}=\tilde{\Psi}_{h^+}^1(\tilde{\bTheta}_{h^-},\tilde{\bM}_{1:h},\tilde{\bZ}_h^a),\quad \tilde{\Sigma}_{h^+}=\tilde{\Psi}_{h^+}^2(\tilde{\bM}_{1:h}),
    \label{equ: closed-loop evolution of mean and covariance}
        \end{aligned}
    \end{equation}
    
    \noindent  for some functions $\tilde{\Psi}_{h^-}^1,\tilde{\Psi}_{h^+}^1,\tilde{\Psi}_{h^-}^2,\tilde{\Psi}_{h^+}^2$ that do not depend on the strategies $\tilde{g}_{1:h}^m,\tilde{g}_{1:h}^a$. 
\end{theorem}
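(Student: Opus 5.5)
The plan is to prove the two claims jointly by forward induction on the half-steps $1^-,1^+,2^-,\dots,H^+$. The inductive statement is: conditioned on any realized $\tilde{\bM}_{1:h}$, the CIB belief is Gaussian with a mean that is an affine/linear function of the common record and a covariance that depends only on $\tilde{\bM}_{1:h}$. Once the belief is shown to be a fixed (strategy-independent) function of $(\tilde{\bC},\tilde{\bM})$, the SI-CIB property in Definition~\ref{def: SI-CIB-closed_loop} follows immediately. The base case is the convention $\tilde{\bB}_{0^+}=\cN(\bm0,\Sigma_1)$, which contains no strategy at all.

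\textbf{Key reduction.} By Assumption~\ref{ass: limit_communication}, every communication action $\tilde{\bM}_{i,h}$ is a function of $(\tilde{\bC}_{h^-},\tilde{\bM}_{1:h-1})$. Conditioning on the common information and the communication history therefore makes $\tilde{\bM}_{1:h}$ a known constant. Given this realized sequence, the additional-sharing maps $\phi_{i,h}(M_{i,h},\cdot)$ are fixed projections, exactly as in the open-loop case. The information evolution along the realized path is thus that of the open-loop subproblem $\cD(g^m_{1:H})$ with $g^m$ frozen at the realized $M_{1:H}$, which is PN by Theorem~\ref{thm: preserve PN}. I will apply the strict expansion \eqref{eq: expansion} path-wise, including the $h^-$/$h^+$ split. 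Theorem~\ref{thm: expansion} then gives strict partial nestedness along each communication path. The consequence is that every control action entering $(\tilde{\bX}_h,\tilde{\bP}_{h^\pm})$ either belongs to the common information or is a function of private variables whose conditional law we track.

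\textbf{Inductive step at $h^-$.} Assume $\tilde{\bB}_{(h-1)^+}=\cN(\tilde{\bTheta}_{(h-1)^+},\tilde{\Sigma}_{(h-1)^+})$ with strategy-independent $\tilde\Sigma$. Write $\tilde{\bX}_h$ as $A\tilde{\bX}_{h-1}+\sum_i B_i\tilde{\bU}_{i,h-1}+\bW_0$. Strict nestedness lets me split $\sum_i B_i\tilde{\bU}_{i,h-1}$ into two kinds of terms. Terms with $B_{i,h-1}\neq\mathbf 0$ are placed in $\tilde{\bC}_{h}$ by the expansion, either at $h^-$ via $\tilde{\bZ}_h^b$ (using Assumption~\ref{ass: non-degeneracy} and PN as in the proof of Theorem~\ref{thm: preserve PN}) or at latest at $h^+$, and so can be subtracted as known quantities. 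Terms with $B=\mathbf 0$ vanish and, by Assumption~\ref{ass: useless action}, never appear in any information set. After this subtraction, $(\tilde{\bX}_h,\tilde{\bP}_{h^-},\tilde{\bZ}^b_h)$ is an affine function of $(\tilde{\bX}_{h-1},\tilde{\bP}_{(h-1)^+})$ and fresh Gaussian noise, with coefficients fixed by the system matrices and the projections (which are determined by $\tilde{\bM}_{1:h-1}$), plus known common terms. Standard Gaussian conditioning (the Kalman update, as in Lemma~\ref{lemma: SI-conditional-mean-covariance}) then yields $\tilde{\bTheta}_{h^-}$ linear in $(\tilde{\bTheta}_{(h-1)^+},\tilde{\bZ}_h^b)$, with the affine shift from known actions absorbed, and a covariance that depends only on $\tilde{\Sigma}_{(h-1)^+}$ and the projections, hence only on $\tilde{\bM}_{1:h-1}$.

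\textbf{Inductive step at $h^+$.} Here $\tilde{\bZ}^a_h=\phi_h(M_h,\tilde{\bP}_{h^-})$ is a fixed linear projection of a jointly Gaussian vector once $M_h$ is known. $M_h$ itself is a function of the conditioning variables, so conditioning on the event $\{\tilde{\bM}_h=M_h\}$ carries no further information. Conditioning on $\tilde{\bZ}^a_h$ is again a Gaussian update, which gives \eqref{equ: closed-loop evolution of mean and covariance}.

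\textbf{Main obstacle.} I expect the hardest step to be showing that the control actions entering the state at $h$ are indeed subtracted as common knowledge, with no leftover private-action dependence. Such dependence would make the belief depend on $\tilde g^a$, for example through a nonlinear prescription applied to private information. My first attempt is to reuse the argument in the proof sketch of Theorem~\ref{thm: preserve PN}. That argument shows $\bI_{i,(h-1)^-}\subseteq\bC_{h}$ whenever $B_{i,h-1}\neq\mathbf 0$, with the additional shares at $(h-1)^+$ already common. Consequently the expanded $\tilde{\bU}_{i,h-1}$ lies in $\tilde{\bC}_{h^-}$ or $\tilde{\bC}_{h^+}$. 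If an action becomes common only at $h^+$, I will carry it in the $h^-$ belief as part of the private state, treated as an unknown but strategy-determined function. I then need to check that at $h^-$ it is measurable with respect to the common information together with the agent's own private information, and that its actual value is only ever conditioned on after it is revealed. This check is delicate and may require placing such actions into $\tilde{\bC}_{h^-}$ in the expansion itself.
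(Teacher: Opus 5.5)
Your overall route matches the paper's. You condition on the common history and the messages, and you note that under Assumption~\ref{ass: limit_communication} the communication rules contribute only an indicator on conditioned variables. You then use the strict expansion so that control rules impose only compatibility constraints, and you read off Gaussian beliefs whose covariances depend only on the message sequence.

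\textbf{The gap.} The gap is the step you yourself flag as the main obstacle, and it is load-bearing. You never establish that an effective action $\tilde{\bU}_{i,h-1}$ (with $B_{i,h-1}\neq\mathbf{0}$) lies in $\tilde{\bC}_{h^-}$ rather than only in $\tilde{\bC}_{h^+}$. Your proposed fallback cannot work. If such an action were still private at $h^-$, it would be $\tilde g^a_{i,h-1}$ applied to private Gaussian variables. Carrying it in $(\tilde{\bX}_h,\tilde{\bP}_{h^-})$ as a ``strategy-determined'' component makes $\tilde{\bB}_{h^-}$ depend on $\tilde g^a_{i,h-1}$, and makes it non-Gaussian for nonlinear rules. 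Both claims at $h^-$ would then fail. You also cannot alter the expansion, because the theorem concerns the specific $\tilde\cD$ of \eqref{eq: closed_loop_expansion}.

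\textbf{The missing idea.} The problematic case never occurs. You lose sight of this because you route through Theorems~\ref{thm: preserve PN} and~\ref{thm: expansion}, whose open-loop index collapses $h^-$ and $h^+$ (there $\bar{\bC}_h=\bC_{h^+}$). Instead, use baseline partial nestedness directly at the pre-sharing level:
\begin{itemize}
\item Assumption~\ref{ass: non-degeneracy} gives some $j\neq i$ with $E_{j,h}B_{i,h-1}\neq\mathbf{0}$. So $\bU_{i,h-1}$ influences $\bY_{j,h}\in\bI_{j,h^-}$, and partial nestedness yields $\bI_{i,(h-1)^-}\subseteq\bI_{j,h^-}$.
\item Since $j\neq i$, Assumption~\ref{ass: evolution rule} forces $\bI_{i,(h-1)^-}\subseteq\bC_{h^-}$. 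Additional sharing only moves labels into common information earlier, so this persists.
\item Also $\bI_{i,(h-1)^+}\setminus\bI_{i,(h-1)^-}\subseteq\bZ^a_{h-1}\subseteq\bC_{h^-}$.
\end{itemize}
Hence $(i,h-1)\in\Xi_{h^-}$ under every communication strategy. Both $\tilde{\bU}_{i,h-1}$ and its entire input $\tilde{\bI}_{i,(h-1)^+}$ therefore lie in $\tilde{\bC}_{h^-}$; this is Part~1 of the proof of Lemma~\ref{lemma: closed_loop_expansion_equivalence}. It also shows that your path-wise expansion coincides with \eqref{eq: closed_loop_expansion}.

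\textbf{Why the input matters.} The revelation of the input is what actually justifies your step of subtracting effective controls ``as known quantities.'' Revealing only the action value would amount to conditioning on the event $\{\tilde g^a_{i,h-1}(\cdot)=U\}$ over private Gaussian variables. That is a strategy-dependent, non-Gaussian likelihood, so a known-input Kalman update would wrongly treat a private, state-correlated action as exogenous. With the input also in $\tilde{\bC}_{h^-}$, the rule reduces to a compatibility indicator on conditioned variables, and the Gaussian update is valid.

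With this inserted, your induction goes through. It is then essentially the paper's argument, made explicit through Kalman-type updates instead of the generic belief-recursion lemma.
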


Theorem \ref{theorem: closed_loop satisfying SI-CIB and Gaussian} provides a way to solve JCCO with closed-loop communication strategies via dynamic programming. Specifically, we can conduct dynamic programming over the spaces of $(\tilde M_{1:h},\tilde\Theta_{h^+})\in \tilde{\cM}_{1:h}\times(\tilde{\cX}\times\tilde{\cP}_{h^+}(\tilde M_{1:h}))$ and the spaces of $(\tilde M_{1:h-1},\tilde\Theta_{h^-})\in \tilde{\cM}_{1:h-1}\times(\tilde{\cX}\times\tilde{\cP}_{h^-}(\tilde M_{1:h-1}))$, to solve for the 
optimal strategies $\tilde g_h^{a,\ast}$ and $\tilde g_h^{m,\ast}$, respectively.  
The overall  algorithm, under its stated attainment assumption, is tabulated as Algorithm \ref{main algorithm} in \S\ref{appendix:closed_loop}.

\section{Numerical Example and Experimental Results}

In this section, we demonstrate the proposed method through concrete JCCO examples. We first present a numerical example that illustrates the implementation of our method step by step. We then evaluate the approach experimentally and analyze its performance.
\subsection{Numerical example}
We illustrate the method using a linear-quadratic JCCO problem $\cD$ with $n=2$, $H=2$, and open-loop communication strategies. 
The baseline sharing of $\cD$ is \emph{one-step measurement delay}; that is, for any $i\in[2]$ and $h\in[2]$, $\bC_{h^-}=\{\bY_{1:h-1}\}$ and $\bP_{i,h^-}=\{\bY_{i,h}\}$. The communication action spaces are $\cM_{i,h}=\{0,1\}$ for all $i\in[2]$ and $h\in[2]$, where $\bM_{i,h}=1$ indicates $\bZ_{i,h}^a=\{\bY_{i,h}\}$ and $\bM_{i,h}=0$ indicates $\bZ_{i,h}^a=\emptyset$. The communication cost is $\cK_h(\bM_h)=0.2(\bM_{1,h}+\bM_{2,h})$. The system is defined by $\cX=\cY_i=\cU_i=\RR$ for all $i\in[2]$, with the following data:
\vspace{-2mm}
\begin{align*}
    &A_1=1.25, A_2=1.15, B_{1,1}=1,B_{2,1}=0.9,B_{1,2}=0.85, B_{2,2}=1.1, E_{1,1}=1.1,E_{2,1}=0.8,E_{1,2}=0.9, E_{2,2}=1.2,\\
    &\Sigma_{0,1}=1.5,\Sigma_{0,2}=2,\Sigma_{1,1}=1,\Sigma_{2,1}=1.5,\Sigma_{1,2}=2, \Sigma_{2,2}=0.8,\Sigma_1=2,Q_1^1=5,Q_2^1=6,Q_3^1=8, Q_1^2=Q_2^2=\II_2.
\end{align*}
\vspace{-6mm}

{\noindent The problem $\cD$ has PN IS and satisfies Assumptions \ref{ass: evolution rule}, \ref{ass: useless action}, and \ref{ass: non-degeneracy}. We solve it as follows.}

\vspace{7pt}
\noindent\textbf{Step 1:}
There are 16 possible communication strategies with $g_{i,h}^m\in\{0,1\}$ for all $h\in[2]$ and $i\in[2]$. For each $g_{1:2}^m$, we construct the associated decentralized LQG problem $\cD(g_{1:2}^m)$. 

\vspace{7pt}
\noindent\textbf{Step 2:} We expand $\cD(g_{1:2}^m)$ to $\tilde{\cD}(g_{1:2}^m)$ and compute the corresponding filter matrices and covariance matrices from the information structure of $\tilde{\cD}(g_{1:2}^m)$. To illustrate the computation, consider $g_{1,1}^m=g_{2,2}^m=1$ and $g_{2,1}^m=g_{1,2}^m=0$. This choice yields $\tilde{\bC}_1=\{\tilde{\bY}_{1,1}\},\tilde{\bC}_2=\{\tilde{\bY}_1,\tilde{\bU}_1,\tilde{\bY}_{2,2}\},\tilde{\bP}_{2,1}=\{\tilde{\bY}_{2,1}\},\tilde{\bP}_{1,2}=\{\tilde{\bY}_{1,2}\}$, and $\tilde{\bP}_{1,1}=\tilde{\bP}_{2,2}=\emptyset$.
For timestep $h=2$, $\{K_2^j\}_{j\in[4]}$ and $\tilde{\Sigma}_2$ are given by
\begin{equation}
\begin{aligned}
        &K_2^1=E_{1,2}^{\mathrm{ext}}\left(1-\frac{\mathring{P}_2E_{2,2}^2}{E_{2,2}^2\mathring{P}_{2}+\Sigma_{2,2}}\right)A_1\mathring{P}_1\frac{E_{1,1}^2\Sigma_1+\Sigma_{1,1}}{\Sigma_1\Sigma_{1,1}}
        \begin{bmatrix}
            1&0
        \end{bmatrix},K_2^2=E_{1,2}^{\mathrm{ext}}\left(1-\frac{\mathring{P}_2E_{2,2}^2}{E_{2,2}^2\mathring{P}_{2}+\Sigma_{2,2}}\right)A_1\mathring{P}_1\frac{E_{2,1}}{\Sigma_{2,1}},\\
        &K_2^3=E_{1,2}^{\mathrm{ext}}\left(1-\frac{\mathring{P}_2E_{2,2}^2}{E_{2,2}^2\mathring{P}_{2}+\Sigma_{2,2}}\right)B_1, K_2^4=E_{1,2}^{\mathrm{ext}}\frac{\mathring{P}_2E_{2,2}
        \begin{bmatrix}
            0&1
        \end{bmatrix}}{E_{2,2}^2\mathring{P}_{2}+\Sigma_{2,2}}, \tilde{\Sigma}_2=\begin{bmatrix}\frac{\mathring{P}_2\Sigma_{2,2}}{E_{2,2}^2\mathring{P}_{2}+\Sigma_{2,2}}&\frac{E_{1,2}\mathring{P}_2\Sigma_{2,2}}{E_{2,2}^2\mathring{P}_{2}+\Sigma_{2,2}}\\
           \frac{E_{1,2}\mathring{P}_2\Sigma_{2,2}}{E_{2,2}^2\mathring{P}_{2}+\Sigma_{2,2}}&\Sigma_{1,2}+\frac{E_{1,2}^2\mathring{P}_2\Sigma_{2,2}}{E_{2,2}^2\mathring{P}_{2}+\Sigma_{2,2}}
       \end{bmatrix},
\end{aligned}
\end{equation}
where $\mathring{P}_1=(\frac{1}{\Sigma_{1}}+\frac{E_{1,1}^2}{\Sigma_{1,1}}+\frac{E_{2,1}^2}{\Sigma_{2,1}})^{-1}$, $\mathring{P}_2=A_1^2\mathring{P}_1+\Sigma_{0,1}$ and $E_{1,2}^{\mathrm{ext}}=\begin{bmatrix}
    1&E_{1,2}
    \end{bmatrix}^\top$. The initial quantities $K_1^1,K_1^2$, and $\tilde{\Sigma}_1$ are computed similarly.

\vspace{7pt}
\noindent\textbf{Step 3:} Apply the backward recursion and finite-dimensional convex quadratic minimizations in Theorem \ref{thm: dp_in_decLQG} to compute $\tilde{c}_0$ and $\tilde{g}_{1:2}^\ast$ for the selected $g_{1:2}^m$, and define
$f(g_{1:2}^m):=\min_{g_{1:2}^a}J_{\cD}( g_{1:2}^m,g_{1:2}^a)=\tilde{c}_0+\cK_1(g_1^m)+\cK_2(g_2^m)$. 

\vspace{7pt}
\noindent\textbf{Step 4:} Compute $g_{1:2}^{m,\ast}\in\argmin_{g_{1:2}^m\in \cG_{1:2}^m}f(g_{1:2}^m)$. The optimizer is
    $g_{1,1}^{m,\ast}=g_{2,2}^{m,\ast}=1$ and $g_{2,1}^{m,\ast}=g_{1,2}^{m,\ast}=0$. The optimal strategy $\tilde{g}_{1:2}^\ast$ for the problem $\tilde{\cD}(g_{1:2}^{m,\ast})$ is\footnote{All numerical values are rounded to three decimal places.}
    \begin{align*}
    &\tilde{g}_{1,1}^\ast(\tilde{\bI}_{1,1})=-0.410\tilde{\bY}_{1,1},\tilde{g}_{2,1}^{\ast}(\tilde{\bI}_{2,1})=-0.219\tilde{\bY}_{1,1}-0.292\tilde{\bY}_{2,1},\\
    &\tilde{g}_{1,2}^\ast(\tilde{\bI}_{1,2})=-0.038\tilde{\bY}_{1,1}-0.019\tilde{\bY}_{2,1}-0.060\tilde{\bU}_{1,1}-0.054\tilde{\bU}_{2,1}-0.196\tilde{\bY}_{1,2}-0.200\tilde{\bY}_{2,2}, \\
    &\tilde{g}_{2,2}^\ast(\tilde{\bI}_{2,2})=-0.079\tilde{\bY}_{1,1}-0.038\tilde{\bY}_{2,1}-0.123\tilde{\bU}_{1,1}-0.110\tilde{\bU}_{2,1}-0.410\tilde{\bY}_{2,2}.
    \end{align*}
   The optimal objective of $\tilde{\cD}(g_{1:2}^{m,\ast})$ is $\tilde{c}_0=45.947$. We then compute $g_{1:2}^{a,\ast}=\varphi(\tilde{g}_{1:2}^\ast,\cD(g_{1:2}^{m,\ast}))$, where $g_1^{a,\ast}=\tilde{g}_1^\ast$, and
   \small
    \begin{align*}
        &g_{1,2}^{a,\ast}(\bI_{1,2^+})=-0.002\bY_{1,1}-0.003\bY_{2,1}-0.196\bY_{1,2}-0.200\bY_{2,2},\\
        &g_{2,2}^{a,\ast}(\bI_{2,2^+})=-0.004\bY_{1,1}-0.006\bY_{2,1}-0.410\bY_{2,2},
    \vspace{-2mm}
    \end{align*}
    \normalsize 
    The optimal objective of $\cD$ is $J_\cD(g_{1:2}^{m,\ast},g_{1:2}^{a,\ast})=46.347$.
\subsection{Experimental results}

\begin{figure}[!t]
    \centering
        \begin{subfigure}{0.48\linewidth}
            \centering
            \includegraphics[width=\linewidth]{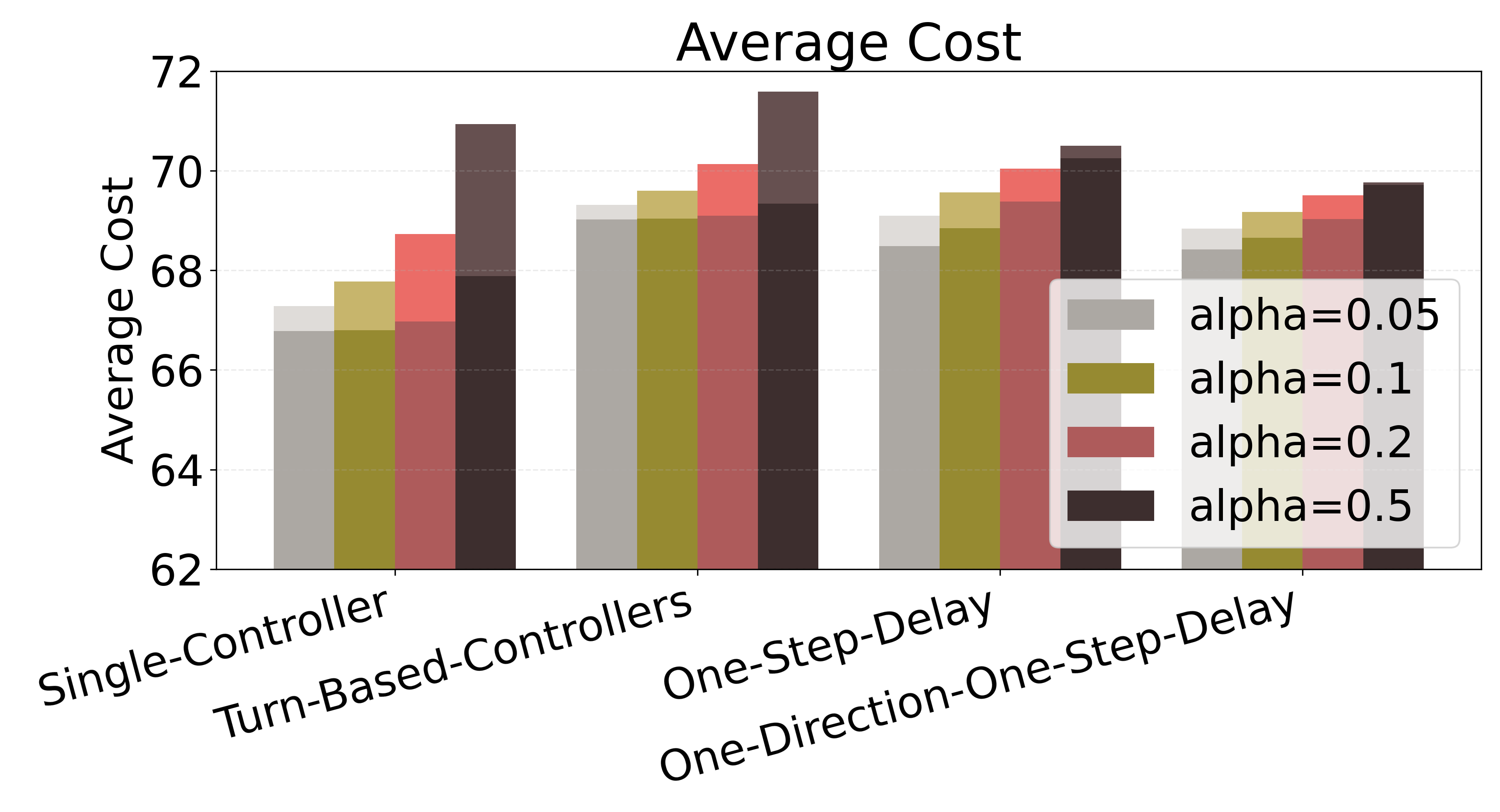}
            \caption{}
        \end{subfigure}
        \begin{subfigure}{0.48\linewidth}
            \centering
            \includegraphics[width=\linewidth]{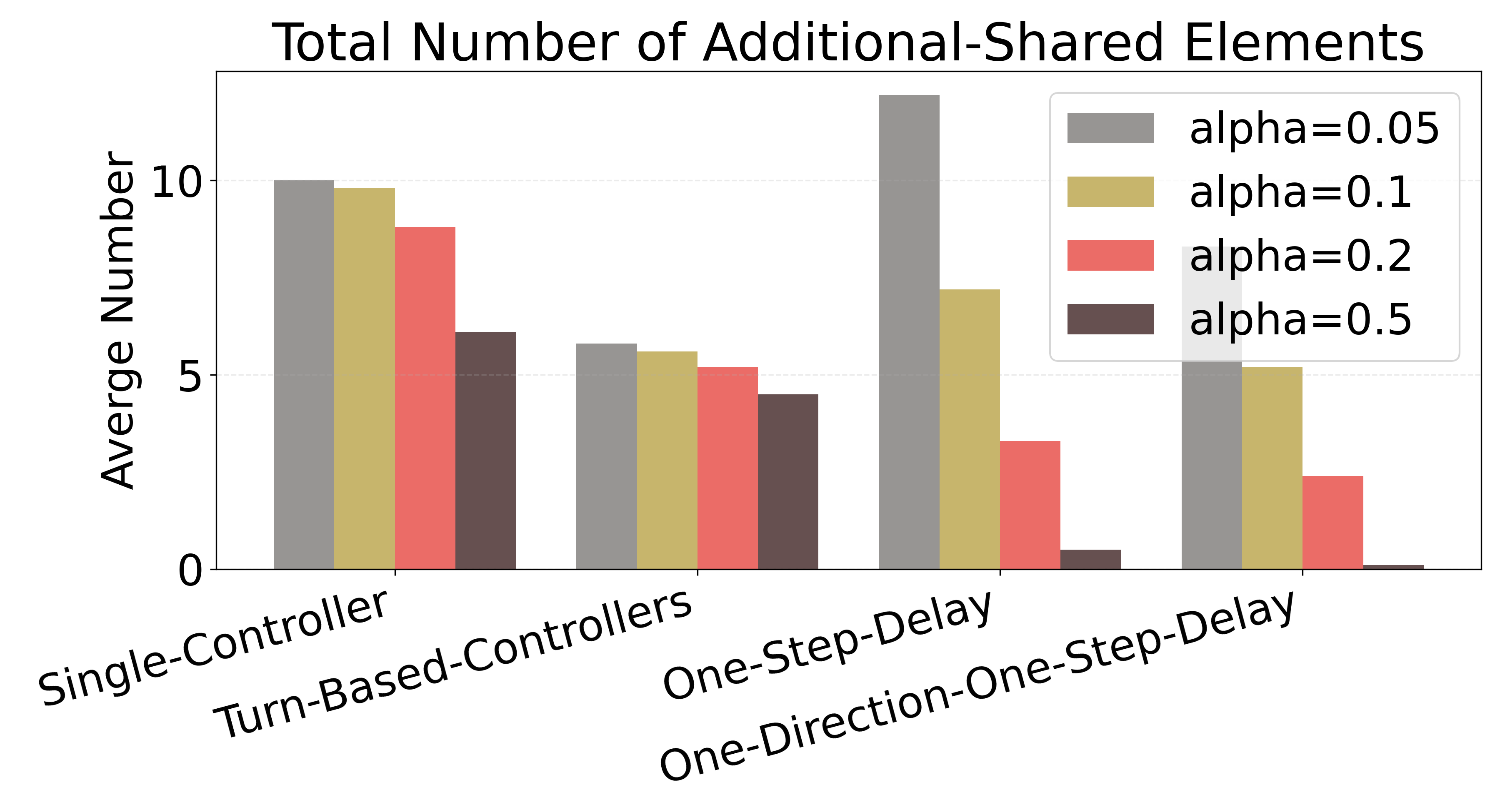}
            \caption{}
        \end{subfigure}
    \caption{Average values over 10 random seeds under different baseline sharing protocols and values of $\alpha$. Figure (a): For each bar, the dark and light portions correspond to the control cost and communication cost, respectively. Figure (b): Each value represents the total number of additionally shared elements.}
    \label{fig:bar}
\end{figure}
\begin{figure}[!t]
    \centering
        \begin{subfigure}{0.46\linewidth}
            \centering
            \includegraphics[width=\linewidth]{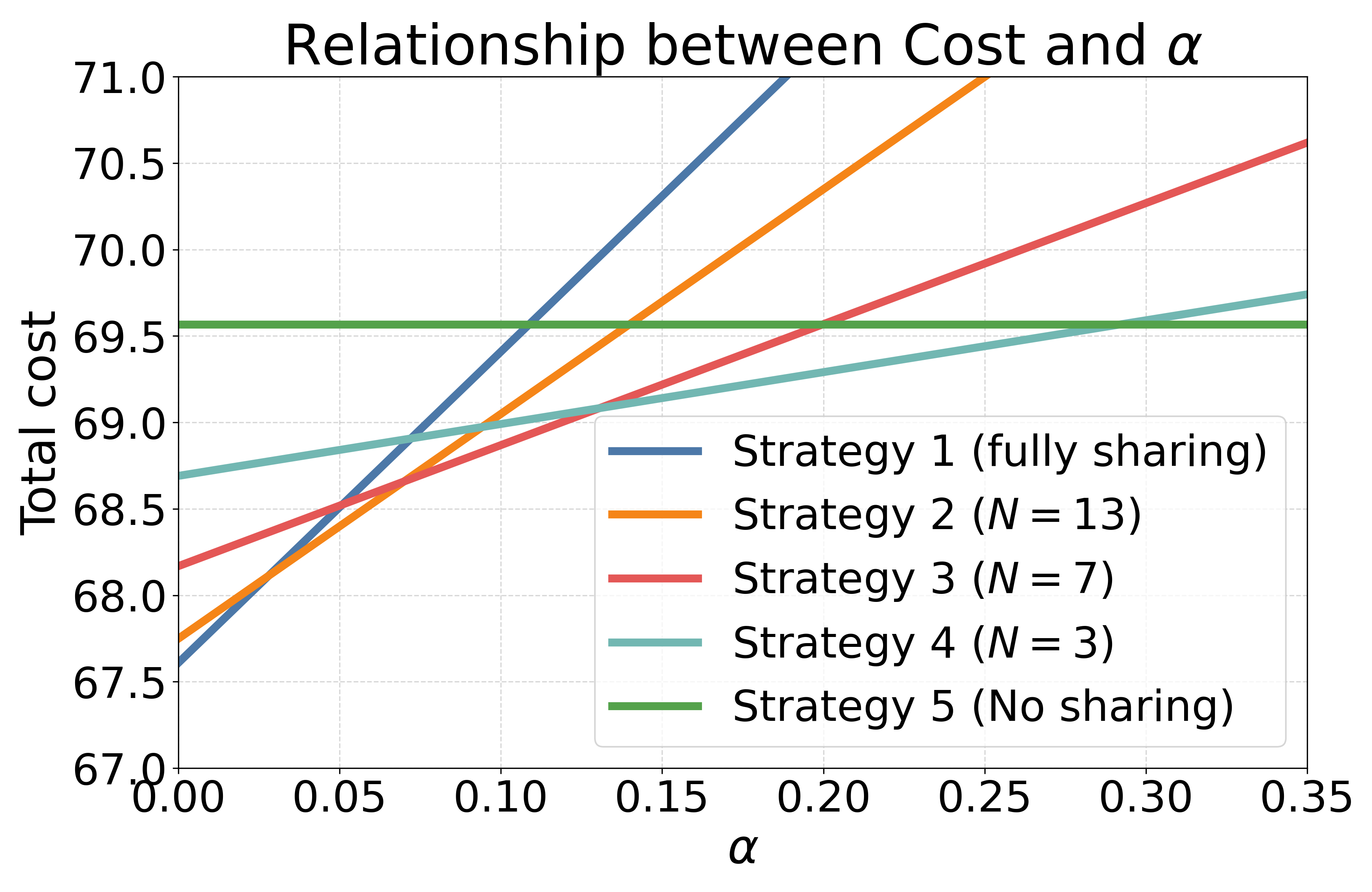}
            \caption{}
        \end{subfigure}
        \begin{subfigure}{0.46\linewidth}
            \centering
            \includegraphics[width=\linewidth]{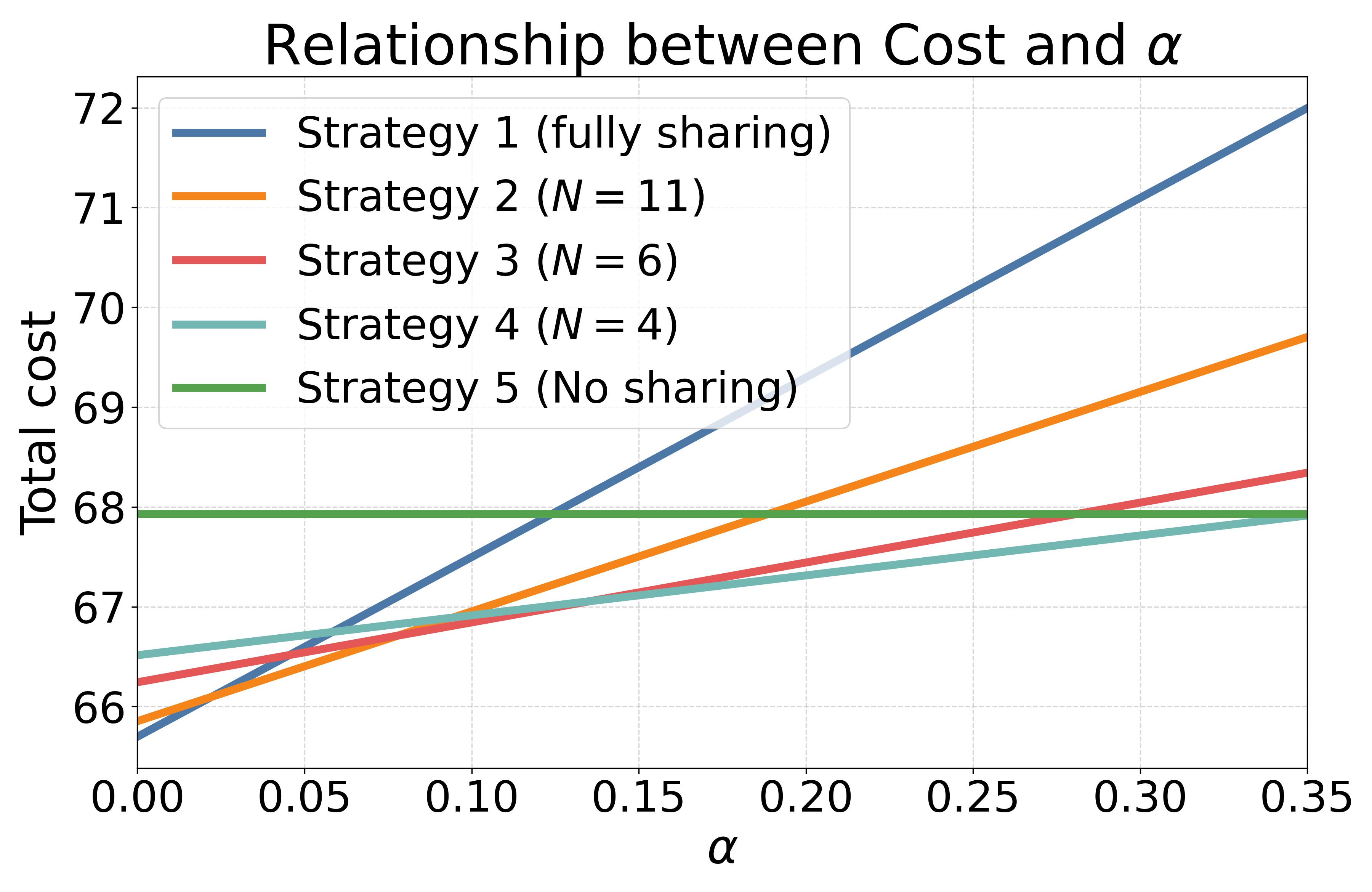}
            \caption{}
        \end{subfigure}
    \caption{Total cost under different values of $\alpha$ and communication strategies. Each line corresponds to a different communication strategy, with $N$ representing the number of elements shared through additional sharing. Figure (a): One-Step-Delay baseline sharing. Figure (b): One-Direction-One-Step-Delay baseline sharing.}
    \label{fig:line}
\end{figure}

We evaluate our approach on JCCO problems with the following four information structures. In the descriptions below, labeled observations and actions with nonpositive time indices are omitted, and $\bC_{0^+}=\bU_0=\emptyset$.
\begin{itemize}
    \item Single-Controller: Only agent 1 can control the underlying state $\bX_h$, i.e., $B_{i,h}=\mathbf{0}$ for all $h\in[H]$ and $i\in[2:n]$. Agent 1 shares all of its information with the other agents, while the remaining agents share their information with a $d$-step delay for some $d>1$; that is, $\bC_{h^-}=\bC_{(h-1)^+}\cup \{\bY_{1,h},\bU_{1,h-1},\bY_{2:n,h-d}\}$, $\bP_{1,h^-}=\emptyset$, and $\bP_{i,h^-}=\bP_{i,(h-1)^+}\cup \{\bY_{i,h}\}\setminus\{\bY_{i,h-d}\}$ for all $h\in[H]$ and $i\in[2:n]$. In the experiments, we set $n=3, H=6$.
    \item Turn-based-Controllers: For each $h\in[H]$ with the unique representation $h=kn+i$, where $k\in\mathbb{Z}_{\geq0}$ and $i\in[n]$, only agent $i$ can control the state at timestep $h$, i.e., $B_{j,h}=\mathbf{0}$ for all $j\neq i$. For $h<H$, at timestep $h+1$, agent $i$ shares its past observations and actions with the other agents; that is, $\bC_{(h+1)^-}=\bC_{h^+}\cup\{\bY_{i,h-n+1:h},\bU_{i,h}\}$. In the experiments, we set $n=3, H=6$.
    \item One-Step-Delay: Through baseline sharing, each agent shares its past observations and actions with the other agents, i.e., $\bC_{h^-}=\bC_{(h-1)^+}\cup \{\bY_{h-1},\bU_{h-1}\}$ and $\bP_{i,h^-}=\{\bY_{i,h}\}$ for all $h\in[H]$ and $i\in[n]$. In the experiments, we set $n=3, H=7$.
    \item One-Direction-One-Step-Delay: Agent 1 shares all of its information with the other agents, while the remaining agents share their past observations and actions through baseline sharing; that is, $\bC_{h^-}=\bC_{(h-1)^+}\cup \{\bY_{1,h},\bY_{2:n,h-1}, \bU_{h-1}\},\bP_{1,h^-}=\emptyset, \bP_{i,h^-}=\{\bY_{i,h}\}$ for all $h\in[H]$ and $i\in[2:n]$. In the experiments, we set $n=3, H=7$.
\end{itemize}
For additional sharing, each agent decides whether to share each element of its private information with the other agents. The communication cost is $\alpha$ times the number of elements shared additionally, where $\alpha\in \{0.05,0.1,0.2,0.5\}$. We set $\cX=\cU_i=\cY_i=\RR$ for all $i\in[n]$. The system matrices are sampled using a random seed $\beta$. Each nonzero block of the dynamics matrices $A_{1:H},B_{1:H},E_{1:H}$ is sampled uniformly from $[0.8,1.3]$, $[0.5,1.2]$, and $[0.7,1.3]$, respectively. The noise covariance matrices $\{\Sigma_{i,h}\}_{i\in[0:n],h\in[H]}$ are set to be block diagonal, and each diagonal block is sampled uniformly from $[0.5,2.0]$. The covariance matrix $\Sigma_1$ is also set to be block diagonal, with each diagonal block sampled uniformly from $[1.0,3.0]$. We run experiments for all four information structures, the four values of $\alpha$ in $\{0.05,0.1,0.2,0.5\}$, and 10 random seeds $\beta=1,\ldots,10$.

The results are presented in Figure \ref{fig:bar}, which shows that lower communication costs encourage agents to share more information, thereby reducing both the control cost and the total objective value. 
We also evaluate the total cost under different values of $\alpha$ and different communication strategies, each paired with its optimal control strategy, as presented in Figure \ref{fig:line}. The results show that when $\alpha$ is small, strategies with more additional sharing can attain lower total costs. Conversely, when $\alpha$ is large, the communication penalty makes strategies with less additional sharing preferable.

\section{Concluding Remarks}

In this paper, we formalized joint communication-control strategy optimization in decentralized multi-agent LQG systems. We identified structural conditions under which additional information sharing preserves partial nestedness and showed that excluding them can yield either nonlinearity or  non-existence of the optimal control strategies. For each fixed open-loop communication strategy, we derived Riccati-type dynamic-programming recursions for the optimal controller, which, of independent interest, can also be applied to decentralized LQG problems with partially nested information structures and output feedback, thus advancing the results in \cite{AA}. We then further extended the approach to JCCO with closed-loop communication strategies. 
Our work opens several important directions for future research, including extensions to non-Gaussian noise and non-cooperative settings, as well as relaxations of the assumptions for specific system structures such as those with factorized states.

\section*{Acknowledgement}
The authors acknowledge the valuable feedback from the anonymous reviewers of  IEEE CDC 2026, and acknowledge the support from the Army Research Office grant  W911NF-24-1-0085, the NSF CAREER Award 2443704, the AFOSR YIP Award FA 9550-25-1-0258, a JP Morgan Faculty Research Award,  and an AI Safety
Research Award from Coefficient Giving. 

\bibliographystyle{IEEEtran}
\bibliography{IEEEabrv,reference}

@article{fu2012lack,
  title={Lack of separation principle for quantized linear quadratic {G}aussian control},
  author={Fu, Minyue},
  journal={IEEE Transactions on Automatic Control},
  volume={57},
  number={9},
  pages={2385--2390},
  year={2012},
  publisher={IEEE}
}

@article{maity2021optimal,
  title={Optimal controller synthesis and dynamic quantizer switching for linear-quadratic-Gaussian systems},
  author={Maity, Dipankar and Tsiotras, Panagiotis},
  journal={IEEE Transactions on Automatic Control},
  volume={67},
  number={1},
  pages={382--389},
  year={2021},
  publisher={IEEE}
}

@article{peng2013event,
  title={Event-triggered communication and $H_\infty$ control co-design for networked control systems},
  author={Peng, Chen and Yang, Tai Cheng},
  journal={Automatica},
  volume={49},
  number={5},
  pages={1326--1332},
  year={2013},
  publisher={Elsevier}
}

@article{matni2016regularization,
  title={Regularization for design},
  author={Matni, Nikolai and Chandrasekaran, Venkat},
  journal={IEEE Trans. Autom. Control},
  volume={61},
  number={12},
  pages={3991--4006},
  year={2016},
  publisher={IEEE}
}

@article{zhang2006communication,
  title={Communication and control co-design for networked control systems},
  author={Zhang, Lei and Hristu-Varsakelis, Dimitrios},
  journal={Automatica},
  volume={42},
  number={6},
  pages={953--958},
  year={2006},
  publisher={Elsevier}
}

@inproceedings{nayyar2015structural,
  title={Structural results for partially nested {LQG} systems over graphs},
  author={Nayyar, Ashutosh and Lessard, Laurent},
  booktitle={2015 American Control Conference (ACC)},
  pages={5457--5464},
  year={2015},
  organization={IEEE}
}

@inproceedings{LTC,
  title={Principled Learning-to-Communicate with
Quasi-Classical Information Structures},
  author={Xiangyu Liu and Haoyi You and Kaiqing Zhang},
  booktitle={2025 64th IEEE Conference on Decision and Control (CDC)},
  year={2025},
  organization={IEEE}
}

@article{witsenhausen1971information,
  title={On information structures, feedback and causality},
  author={Witsenhausen, Hans S},
  journal={SIAM Journal on Control},
  volume={9},
  number={2},
  pages={149--160},
  year={1971},
  publisher={SIAM}
}

@article{yuksel2013jointly,
  title={Jointly optimal {LQG} quantization and control policies for multi-dimensional systems},
  author={Y{\"u}ksel, Serdar},
  journal={IEEE Trans. Autom. Control},
  volume={59}, 
  pages={1612--1617},
  year={2013},
  publisher={IEEE}
}

@article{QCstatefeedback,
  title={Optimal decentralized state-feedback control with sparsity and delays},
  author={Lamperski, Andrew and Lessard, Laurent},
  journal={Automatica},
  pages={143--151},
  year={2015},
  publisher={Elsevier}
}

@article{ho1972team,
  title={Team decision theory and information structures in optimal control problems -- Part {I}},
  author={Ho, Yu-Chi and Chu, Kai-Ching},
  journal={IEEE Trans. Autom. Control},
  volume={17},
  pages={15--22},
  year={1972},
  publisher={IEEE}
}

@article{CIBLQgames,
  title={Common information based {M}arkov perfect equilibria for linear-Gaussian games with asymmetric information},
  author={Gupta, Abhishek and Nayyar, Ashutosh and Langbort, C{\'e}dric and Basar, Tamer},
  journal={SIAM Journal on Control and Optimization},
  volume={52},
  number={5},
  pages={3228--3260},
  year={2014},
  publisher={SIAM}
}

@book{yuksel2023stochastic,
  title={Stochastic Teams, Games, and Control under Information Constraints},
  author={Y{\"u}ksel, Serdar and Ba\c{s}ar, Tamer},
  year={2023},
  publisher={Springer Nature}
}

@article{ashutosh2013team,
  title={Decentralized stochastic control with partial history sharing: A common information approach},
  author={Nayyar, Ashutosh and Mahajan, Aditya and Teneketzis, Demosthenis},
  journal={IEEE Trans. Autom. Control},
  volume={58},
  number={7},
  pages={1644--1658},
  year={2013},
  publisher={IEEE}
}

@article{ashutosh2013game,
  title={Common information based {Markov} perfect equilibria for stochastic games with asymmetric information: Finite games},
  author={Nayyar, Ashutosh and Gupta, Abhishek and Langbort, Cedric and Ba{\c{s}}ar, Tamer},
  journal={IEEE Trans. Autom. Control},
  volume={59},
  pages={555--570},
  year={2013},
  publisher={IEEE}
}

@article{liu2023tractable,
  title={Partially Observable Multiagent Reinforcement Learning with Information Sharing},
  author={Liu, Xiangyu and Zhang, Kaiqing},
  journal={SIAM Journal on Control and Optimization},
  volume={64},
  number={2},
  pages={673--697},
  year={2026},
  publisher={SIAM}
}

@inproceedings{aditya2012information,
  title={Information structures in optimal decentralized control},
  author={Mahajan, Aditya and Martins, Nuno C and Rotkowitz, Michael C and Y{\"u}ksel, Serdar},
  booktitle={IEEE Conf. on Dec. and Control},
  year={2012}
}

@inproceedings{2016learningtocommunicate,
  title={Learning to communicate with deep multi-agent reinforcement learning},
  author={Foerster, Jakob and Assael, Ioannis Alexandros and De Freitas, Nando and Whiteson, Shimon},
  booktitle={NeurIPS},
  year={2016}
}

@article{ashutoshcommunicate1,
  author  = {Sudhakara, S. and Kartik, D. and Jain, R. and Nayyar, A.},
  title   = {Optimal Communication and Control Strategies in a Cooperative Multiagent {MDP} Problem},
  journal = {IEEE Transactions on Automatic Control},
  volume  = {69},
  number  = {10},
  pages   = {6959--6966},
  year    = {2024}
}

@inproceedings{ashutoshcommunicate2,
  title={Optimal communication and control strategies for a multi-agent system in the presence of an adversary},
  author={Kartik, Dhruva and Sudhakara, Sagar and Jain, Rahul and Nayyar, Ashutosh},
  booktitle={IEEE Conf. on Dec. and Control},
  year={2022}
}

@inproceedings{learningpropogation,
  title={Learning multiagent communication with backpropagation},
  author={Sukhbaatar, Sainbayar and Szlam,Arthur and Fergus, Rob},
  booktitle={NeurIPS},
  year={2016}
}

@article{witsenhausen1968counterexample,
  title={A counterexample in stochastic optimum control},
  author={Witsenhausen, Hans S},
  journal={SIAM Journal on Control},
  volume={6},
  number={1},
  pages={131--147},
  year={1968},
  publisher={SIAM}
}

@article{matni2015communication,
  title={Communication Delay Co-Design in $\mathcal{H}_2$-Distributed Control Using Atomic Norm Minimization},
  author={Matni, Nikolai},
  journal={IEEE Transactions on Control of Network Systems},
  volume={4},
  number={2},
  pages={267--278},
  year={2015},
  publisher={IEEE}
}

@article{AA,
  title={Sufficient statistics for linear control strategies in decentralized systems with partial history sharing},
  author={Mahajan, Aditya and Nayyar, Ashutosh},
  journal={IEEE Trans. Autom. Control},
  volume={60},
  number={8},
  pages={2046--2056},
  year={2015},
  publisher={IEEE}
}

@article{psedoinverse1,
  title   = {Structure and Stability of Discrete-Time Optimal Systems},
  author  = {Rappaport, David and Silverman, Leonard M.},
  journal = {IEEE Transactions on Automatic Control},
  volume  = {16},
  number  = {3},
  pages   = {227--233},
  year    = {1971},
  month   = jun,
  doi     = {10.1109/TAC.1971.1099702}
}

@inproceedings{psedoinverse2,
  title={The generalised discrete algebraic Riccati Equation arising in LQ optimal control problems: Part I},
  author={Ferrante, Augusto and Ntogramatzidis, Lorenzo},
  booktitle={2012 IEEE 51st IEEE Conference on Decision and Control (CDC)},
  pages={6394--6399},
  year={2012},
  organization={IEEE}
}

@book{Linear_Algebra,
  author    = {Sheldon Axler},
  title     = {Linear Algebra Done Right},
  edition   = {4},
  publisher = {Springer},
  year      = {2024}
}

@article{eckart1936approximation,
  author  = {Eckart, Carl and Young, Gale},
  title   = {The Approximation of One Matrix by Another of Lower Rank},
  journal = {Psychometrika},
  volume  = {1},
  number  = {3},
  pages   = {211--218},
  year    = {1936},
  doi     = {10.1007/BF02288367}
}

@article{kalman-filter-pseudoinverse,
  title   = {On {Kalman} filtering for conditionally Gaussian systems with random matrices},
  author  = {Chen, Han-Fu and Kumar, P. R. and van Schuppen, J. H.},
  journal = {Systems \& Control Letters},
  volume  = {13},
  number  = {5},
  pages   = {397--404},
  year    = {1989},
  doi     = {10.1016/0167-6911(89)90106-0}
}
\clearpage
\onecolumn
\appendix
\section{Decentralized Linear-Quadratic-Gaussian Control Problem}
\label{sec: DecLQG}

For $n>1$ agents, a (cooperative) decentralized linear-quadratic-Gaussian (decentralized LQG) control problem $\check{\cD}$ is described by a tuple $\langle H, \cX,\{\cY_i\}_{i\in[n]},\{\cU_{i}\}_{i\in[n]}, \{A_h\}_{h\in[H]},\{B_{i,h}\}_{i\in[n],h\in[H]}, \{E_{i,h}\}_{i\in[n],h\in[H]}, \{Q_h^1\}_{h\in[H+1]},\{Q_h^2\}_{h\in[H]}\rangle$,  where $H$ is the time horizon and   $\bX_h\in\cX=\RR^{d_x}$ is the state. At each timestep $h\in[H]$, each agent $i\in[n]$ receives a noisy observation $\bY_{i,h}\in\cY_i=\RR^{d_y^i}$ of the state $\bX_h$, and chooses a control action $\bU_{i,h}\in\cU_i=\RR^{d_u^i}$.
At timestep $h\in[H]$, 
we denote by $\bU_h=\begin{bmatrix}
    \bU_{1,h}^\top&\bU_{2,h}^\top&\cdots&\bU_{n,h}^\top
\end{bmatrix}^\top$ the joint control action of all the $n$ agents, and by $\cU=\RR^{\sum_{i=1}^nd_u^i}$ the joint control action space; we denote by $\bY_h=\begin{bmatrix}
    \bY_{1,h}^\top&\bY_{2,h}^\top&\cdots&\bY_{n,h}^\top
\end{bmatrix}^\top$ the joint observation, and by $\cY=\RR^{\sum_{i=1}^nd_y^i}$ the joint observation space. 
The system evolves as follows: for each timestep $h\in[H]$
\begin{equation}\label{equ: LQG_system_dynamics}
    \begin{aligned}
        \bX_{h+1}&=A_h\bX_h+B_h\bU_h+\bW_{0,h}=A_h\bX_h+\sum_{i=1}^nB_{i,h}\bU_{i,h}+\bW_{0,h},\\
        \bY_{i,h}&=E_{i,h}\bX_h+\bW_{i,h}, \forall i\in[n],
    \end{aligned}
\end{equation}
where $\bW_{i,h}$ is a random variable taking values in some real-vector space $\cW_{i,h}$ for all $h\in[H], i=0,1,\cdots, n$.  $A_h, B_h, E_{i,h}$  are matrices of {appropriate} dimensions, with $B_h=\begin{bmatrix}
    B_{1,h}&B_{2,h}&\cdots&B_{n,h}
\end{bmatrix}$. Here, $\bW_{i,h}\sim\cN(\mathbf{0},\Sigma_{i,h})$ for some covariance matrices $\Sigma_{i,h}\succeq 0$  
for all $i\in[0:n],h\in[H]$ and $\bX_1\sim\cN(\mathbf{0},\Sigma_1)$ for some covariance matrix  $\Sigma_1\succeq 0$. The random variables $\bX_1$ and $\{\bW_{i,h}\}_{i\in[0:n],h\in[H]}$ are mutually independent. 

At timestep $h\in[H]$,  each agent can access some information $\bI_{i,h}\subseteq\{\bY_{1:h},\bU_{1:h-1}\}
$, and the collection of all possible such available information is denoted by $\cI_{i,h}$. We use $\bI_h$ to denote the joint available information at timestep $h$.
Meanwhile, agents may share part of their information with each other, and the shared information is denoted by $\bZ_h$. 
At timestep $h$, the common information among all agents is defined as the union of all the shared information so far: $\bC_h=\cup_{t=1}^{h}\bZ_t$. The private information of agent $i$ at timestep $h$ is thus defined as $\bP_{i,h}=\bI_{i,h}\backslash \bC_h$, 
and the joint private information is denoted by $\bP_h:=[\bP_{1,h}^\top~\cdots~\bP_{n,h}^\top]^\top$. We denote by $\cC_h, \cP_h, \cP_{i,h},\cI_{i,h},\cI_{h}$ the sets of values of the random variables $\bC_h, \bP_h, \bP_{i,h},\bI_{i,h},\bI_{h}$ for each $h\in[H],i\in[n]$.

Each agent $i$ at timestep $h$ chooses the control action $\bU_{i,h}$ based on some strategy $g_{i,h}:\cI_{i,h}\rightarrow \cU_i$. We denote by $g_h:=(g_{1,h},g_{2,h},\cdots,g_{n,h})$ the joint control strategy of all the agents, and by $g_{1:h}:=(g_{1},g_{2},\cdots,g_h), \forall h\in[H]$ the sequence of joint strategies from timesteps  $1$ to $h$. We use $\cG_{i,h}$ to denote the strategy space of $g_{i,h}$, and use $\cG_h,\cG_{1:h}$ to denote the corresponding joint strategy spaces.
At each timestep $h\in[H]$, the cost is defined as $c_h=\bX_h^\top Q_h^1\bX_h+\bU_h^\top Q_h^2\bU_h$, and the cost of final timestep is $c_{H+1}=\bX_{H+1}^\top Q_{H+1}^1\bX_{H+1}$. Here, $\forall h\in[H], Q_h^1, Q_h^2\succeq 0, Q_{H+1}^1\succeq 0$ are symmetric matrices with appropriate dimensions. The objective among all the agents is thus  defined as
\begin{align*}
    J_{\check{\cD}}(g_{1:H}):=\EE\left[\sum_{h=1}^Hc_h+c_{H+1}\Bigggiven g_{1:H}\right]. 
\end{align*}
With this objective, we define the notion of team optimality for the problem $\check{\cD}$ as follows.
\begin{definition}[Team optimality]
   We call a joint strategy  $g_{1:H}^\ast\in \cG_{1:H}$ a
   \emph{team-optimal strategy} of  the decentralized LQG control problem $\check{\cD}$ if
   \begin{equation}
       \forall g_{1:H}\in\cG_{1:H},\qquad J_{\check{\cD}}(g_{1:H})\ge  J_{\check{\cD}}(g_{1:H}^\ast).
   \end{equation}
\end{definition}

\begin{corollary}[\cite{ho1972team} with positive semidefinite matrices]
\label{cor:degenerate-PN-linearity}
Every finite-horizon decentralized LQG problem above with a partially
nested information structure admits a team-optimal linear strategy,
even when the primitive Gaussian covariance matrices and the state and
control cost matrices $\{Q_h^1\}_{h\in[H+1]}$ and
$\{Q_h^2\}_{h\in[H]}$ are positive semidefinite and possibly singular.
\end{corollary}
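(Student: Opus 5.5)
The plan is to follow the Radner/Ho--Chu route. First reduce the PN problem to a static team. Then show that the static team admits a linear optimum by a finite-dimensional Hilbert-space projection argument, which does not need invertibility of the covariances or positive definiteness of $Q_h^1,Q_h^2$.

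\textbf{Step 1 (reduction to a static team).} Under partial nestedness, for any strategy $g_{1:H}$, each agent $(i,h)$'s information $\bI_{i,h}$ is a deterministic, strategy-dependent invertible transform of a \emph{primitive} information vector $\bar{\bI}_{i,h}$. This vector consists of the observations in $\bI_{i,h}$ with the effect of all past controls subtracted, namely $E_{i,h}(\bX_h-\bX_h^{0})+\ldots$, where $\bX^0$ is the state trajectory under zero controls. The reason is that every control $\bU_{j,t}$ affecting $\bI_{i,h}$ is computed from $\bI_{j,t}\subseteq\bI_{i,h}$, so agent $(i,h)$ can reconstruct and subtract it recursively.
\begin{itemize}
\item The primitive variables $\bar{\bI}_{i,h}$ are linear functions of $(\bX_1,\bW)$, hence jointly Gaussian, possibly degenerate, and independent of the strategy.
\item Substituting $\bX_{h}=\bX_h^0+\sum_{t<h}\Phi_{h,t}B_t\bU_t$ turns the cost into
\[
\EE\big[\bU^\top \mathcal{Q}\bU+2\bU^\top \mathcal{S}\bxi+\bxi^\top\mathcal{R}\bxi\big],
\]
where $\bU$ stacks all controls, $\bxi=(\bX_1,\bW)$, and $\mathcal{Q}\succeq0$ because the cost is a sum of PSD quadratic forms.
\item Strategies of the form $g_{i,h}(\bI_{i,h})$ and of the form $\bar g_{i,h}(\bar{\bI}_{i,h})$ are in bijection and give the same cost, and linear strategies correspond to linear strategies.
\end{itemize}

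\textbf{Step 2 (static team in $L^2$).} Restrict attention to square-integrable $\bU_{i,h}=\bar g_{i,h}(\bar{\bI}_{i,h})$; strategies with infinite second moments either yield $+\infty$ cost or can be handled by truncation. The feasible set $\mathcal{H}=\prod_{i,h}L^2(\sigma(\bar{\bI}_{i,h}))$ is a closed subspace, and the objective $F(\bU)$ is a convex quadratic on it. $F$ is bounded below by $0$, but with $\mathcal{Q}$ only PSD it need not be coercive, so existence of a minimizer is the key issue.

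First establish stationarity. A point $\bU^\ast$ is optimal if and only if the person-by-person condition holds:
\[
\EE\big[(\mathcal{Q}\bU^\ast+\mathcal{S}\bxi)_{i,h}\,\big|\,\bar{\bI}_{i,h}\big]=0\quad\text{for all }(i,h).
\]
Next, look for a linear solution $\bU_{i,h}=F_{i,h}\bar{\bI}_{i,h}$. Because $\bar{\bI}$ and $\bxi$ are jointly Gaussian, the conditional expectation of a linear function of them is linear, so the stationarity condition becomes a finite linear system in the coefficient matrices $\{F_{i,h}\}$. The same system is the first-order condition for minimizing the convex quadratic $F$ over the finite-dimensional space of linear strategies, and Gaussian degeneracy only introduces redundant coordinates, which pseudo-inverses absorb.

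\textbf{Step 3 (main obstacle: existence without coercivity).} A convex quadratic on $\RR^N$ that is bounded below attains its minimum. To see this, write it as $x^\top Gx+2b^\top x+c$ with $G\succeq0$. Boundedness below forces $b\in\mathrm{range}(G)$, since otherwise the function tends to $-\infty$ along a direction in $\ker G$, and then $x=-G^\dag b$ is a minimizer. Applied to the linear-strategy subproblem, this gives a linear $\bU^\ast$ satisfying the first-order conditions within the linear class.

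It remains to upgrade these conditions to the full stationarity condition. For each $(i,h)$, the residual $(\mathcal{Q}\bU^\ast+\mathcal{S}\bxi)_{i,h}$ is jointly Gaussian with $\bar{\bI}_{i,h}$. Orthogonality to all linear functions of $\bar{\bI}_{i,h}$ therefore implies independence from $\bar{\bI}_{i,h}$, hence zero conditional mean, which is the person-by-person condition. Convexity of $F$ over $\mathcal{H}$ then yields global optimality, via $F(\bU)-F(\bU^\ast)=\EE[(\bU-\bU^\ast)^\top\mathcal{Q}(\bU-\bU^\ast)]\ge0$ once the cross term vanishes by the tower property.

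Finally, map the linear $\bar g^\ast$ back through the invertible linear transform of Step 1 to obtain a linear team-optimal strategy for the original dynamic problem. Throughout, care is needed that degenerate Gaussians only create almost-sure linear relations among the primitives, which do not affect any of the arguments above.
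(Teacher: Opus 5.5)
Your proposal is correct and takes essentially the same route as the paper's proof. The shared steps are: the Ho--Chu reduction to a static team; existence of a minimizer over linear static strategies, because a bounded-below quadratic with positive-semidefinite quadratic part has consistent normal equations; upgrading those normal equations to $\EE[\bG^\ast_{i,h}\mid\overline{\bI}_{i,h}]=0$ via joint Gaussianity; the expansion $J(g)-J(g^\ast)=\EE[\Delta\bU^\top\bar Q\,\Delta\bU]\ge0$; and reversing the reduction to recover a linear dynamic strategy.

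The only difference is cosmetic. You handle degenerate Gaussians by using that uncorrelated jointly Gaussian vectors are independent (plus pseudo-inverses), whereas the paper factors $\EE[\bW\bW^\top]=LL^\top$ and deletes linearly dependent coordinates of the static information.
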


\begin{proof}
We follow Theorems~1 and~2 in \cite{ho1972team}. Regarding each
\emph{agent-time} pair $(i,h)$ as a \emph{decision maker}, partial nestedness allows
the effects of preceding actions to be recursively subtracted from
each information vector, exactly as in Eqs.~(28) and~(29) of
\cite{ho1972team}. Denote the resulting static information by
$\widehat{\bI}_{i,h}$. This transformation and its reverse preserve
the actions and the value of $J_{\check{\cD}}$. It therefore remains
to establish linear optimality for the resulting static team.

Let $\bW$ stack $\bX_1$ and all the process and observation noises,
and let $\bX_{1:H+1}$ and $\bU_{1:H}$ denote the stacked state and
action vectors. Recursively substituting the dynamics gives fixed
matrices $\mathcal A$ and $\mathcal B$ such that
\[
    \bX_{1:H+1}
    =\mathcal A\bW+\mathcal B\bU_{1:H}.
\]
Consequently,
\[
    J_{\check{\cD}}
    =
    \EE\!\left[
        \bU_{1:H}^{\top}\bar Q\bU_{1:H}
        +2\bU_{1:H}^{\top}\bar S\bW
    \right]+c_0,
\]
where
\[
\begin{aligned}
    \bar Q
    &:=
    \mathcal B^\top
    \diag(Q_1^1,\ldots,Q_{H+1}^1)
    \mathcal B
    +\diag(Q_1^2,\ldots,Q_H^2)
    \succeq0,\\
    \bar S
    &:=
    \mathcal B^\top
    \diag(Q_1^1,\ldots,Q_{H+1}^1)
    \mathcal A,
\end{aligned}
\]
and $c_0$ is independent of the strategy.

Write
\[
    \EE[\bW\bW^\top]=LL^\top,
    \qquad
    \bW=L\boldsymbol{\xi},
    \qquad
    \boldsymbol{\xi}\sim\cN(\bm0,I),
\]
with $L$ having full column rank. Since
$\widehat{\bI}_{i,h}$ is linear in $\bW$, remove its linearly
dependent coordinates and denote the resulting information by $
    \overline{\bI}_{i,h}
    =\overline C_{i,h}\boldsymbol{\xi}.$ 
The vectors $\widehat{\bI}_{i,h}$ and $\overline{\bI}_{i,h}$
determine one another through fixed linear maps, while every nonempty
$\overline{\bI}_{i,h}$ has a positive-definite covariance matrix.

Consider static linear strategies 
$    \bU_{i,h}=K_{i,h}\overline{\bI}_{i,h}$. 
As a function of the entries of $\{K_{i,h}\}_{i,h}$,
$J_{\check{\cD}}$ is a finite-dimensional quadratic function with a
positive-semidefinite quadratic part. It is bounded below because
$J_{\check{\cD}}\geq0$. Hence,  its linear part vanishes on the null
space of its quadratic part, since otherwise, the objective would be
unbounded below along a null direction. Its normal equations are
therefore consistent, and a minimizing collection
$\{K_{i,h}^\ast\}_{i,h}$ exists. Let
    $\bU_{i,h}^\ast
    :=K_{i,h}^\ast\overline{\bI}_{i,h}$ 
and define
\[
    \bG^\ast
    :=\bar Q\bU_{1:H}^\ast+\bar S L\boldsymbol{\xi}.
\]
If $\bG_{i,h}^\ast$ denotes the block corresponding to
$\bU_{i,h}^\ast$, the normal equation for $K_{i,h}$ gives
$\EE[
        \bG_{i,h}^\ast
        \overline{\bI}_{i,h}^{\top}
    ]=0.$ 
Since $(\bG_{i,h}^\ast,\overline{\bI}_{i,h})$ is jointly centered
Gaussian, this implies $
    \EE[
        \bG_{i,h}^\ast
        \mid\overline{\bI}_{i,h}
    ]=0$. 
For empty effective information, the same conclusion follows from
centeredness.

Now let $g_{1:H}$ be any admissible static strategy and set
\[
    \Delta\bU_{i,h}
    :=g_{i,h}(\overline{\bI}_{i,h})-\bU_{i,h}^\ast.
\]
Expanding the quadratic objective and conditioning on
$\overline{\bI}_{i,h}$ gives
\[
\begin{aligned}
    J_{\check{\cD}}(g_{1:H})
    -J_{\check{\cD}}(g_{1:H}^\ast)
    =
    2\sum_{h=1}^H\sum_{i=1}^n
    \EE[
        \Delta\bU_{i,h}^{\top}\bG_{i,h}^\ast
    ]
    +\EE[
        \Delta\bU_{1:H}^{\top}
        \bar Q
        \Delta\bU_{1:H}
    ]=
    \EE[
        \Delta\bU_{1:H}^{\top}
        \bar Q
        \Delta\bU_{1:H}
    ]
    \geq0.
\end{aligned}
\]
Thus, the linear strategy is team-optimal for the static problem.

Finally, reverse the reduction in \cite{ho1972team} over the
precedence groups. For the first group, the static and original
information vectors coincide, so the corresponding actions are linear
in the original information. Suppose the actions in all preceding
groups have been reconstructed as linear functions of their original
information. Whenever a preceding action appears in the subtraction
defining the current static information, partial nestedness makes the
information determining that action available to the current decision
maker. Hence the preceding action, and therefore the current static
information, is linear in the current original information. The
current static linear action is consequently linear in the original
information. Induction gives a linear dynamic strategy. The forward
and reverse reductions generate the same actions and costs for every
realization, so the reconstructed strategy is team-optimal for the
original dynamic problem.
\end{proof}

\section{Deferred Details of  \S\ref{sec: hardness and assumptions}}
\label{sec: appendix_hardness_result_proof}

We start with the following auxiliary lemma. 

\begin{lemma}
    For any $n_1,n_2\in\NN$, let $Q_1\in\RR^{n_1\times n_1}$ and $Q_2\in\RR^{n_2\times n_2}$ be symmetric positive-semidefinite matrices, and let $A,B$ be matrices of compatible dimensions. Then, $\ker{(Q_1+B^\top Q_2B)}\subseteq\ker{(A^\top Q_2B)}$. \label{lemma: kernel_inclusion} 
\end{lemma}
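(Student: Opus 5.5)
Take any $v\in\ker(Q_1+B^\top Q_2B)$ and show $A^\top Q_2Bv=0$. The argument is a positive-semidefiniteness argument: a vector annihilated by a sum of PSD forms is annihilated by each summand, and then by the square root of $Q_2$.

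\textbf{Step 1.} From $(Q_1+B^\top Q_2B)v=0$, take the inner product with $v$:
\[
v^\top Q_1 v+(Bv)^\top Q_2(Bv)=0 .
\]
Both terms are nonnegative because $Q_1,Q_2\succeq0$, so each vanishes. In particular $(Bv)^\top Q_2(Bv)=0$.

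\textbf{Step 2.} Write $Q_2=Q_2^{1/2}Q_2^{1/2}$ with $Q_2^{1/2}\succeq0$ the symmetric square root. Then
\[
(Bv)^\top Q_2(Bv)=\|Q_2^{1/2}Bv\|^2 .
\]
So $Q_2^{1/2}Bv=0$, hence $Q_2Bv=Q_2^{1/2}(Q_2^{1/2}Bv)=0$.

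\textbf{Step 3.} Therefore $A^\top Q_2Bv=A^\top\cdot 0=0$, i.e., $v\in\ker(A^\top Q_2B)$. Since $v$ was arbitrary, this gives the inclusion.

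As a byproduct, Step 1 also yields $Q_1v=0$, though only the $Q_2$ part is needed. The only delicate point is Step 2: passing from $x^\top Q_2x=0$ to $Q_2x=0$. This uses positive semidefiniteness through the square root, not mere symmetry.

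This inclusion is what makes the pseudo-inverse formulas such as $\hat{\cE}_h^\ast$ in Theorem~\ref{thm: dp_in_decLQG} valid minimizers. The linear term lies in the range of the quadratic term, so the normal equations are consistent.
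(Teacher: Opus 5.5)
Your proof is correct and matches the paper's argument step for step. Evaluate the quadratic form at $v$, use $Q_1,Q_2\succeq 0$ to make both terms vanish, pass through $Q_2^{1/2}$ to get $Q_2Bv=0$, and conclude that $A^\top Q_2Bv=0$.
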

\begin{proof}
    For any $v\in\ker{(Q_1+B^\top Q_2B)}$, we have $v^\top Q_1v+v^\top B^\top Q_2Bv=0$. Since $Q_1\succeq 0, Q_2\succeq 0$, we know $v^\top Q_1v=v^\top B^\top Q_2Bv=0$. Then, from $v^\top B^\top Q_2Bv=0$, we have $Q_2^{\frac{1}{2}}Bv=0$ and thus $Q_2Bv=0$, which further implies  that $A^\top Q_2Bv=0$ and thus $v\in \ker{(A^\top Q_2B)}$. 
    This completes the proof. 
\end{proof}
\subsubsection{Proof of Lemma \ref{lemma: nonqc}}  
\begin{proof} To prove this lemma, we leverage the following proposition.

\begin{proposition}[Adapted from Theorems 1 and 2 of \cite{witsenhausen1968counterexample}]\label{prop: W's counterexample}
    There exists a decentralized LQG control problem with $n=2, H=2$ and parameters $k$ and $\sigma_0$ such that no linear control strategy is team-optimal: 
    \begin{align*}
        &\text{All random variables are scalar, }\bX_1\sim\cN(0,\sigma_0^2), \bX_2=\bX_1+\bU_{1,1}, \bX_3=\bX_2-\bU_{2,2},\\
        &\bY_{1,1}=\bX_1, \bY_{2,1}=0,  \bY_{1,2}=0, \bY_{2,2}=\bX_2+\bW_{2,2}, \bW_{2,2}\sim\cN(0,1),\\
        &c_1=k^2\bU_{1,1}^2,c_2=0, c_3=\bX_3^2, ~~\text{no information sharing between agents.}
    \end{align*}
\end{proposition}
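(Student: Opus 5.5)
The plan is to follow Witsenhausen's original argument directly. There is one concrete choice of $(k,\sigma_0)$ for which some explicit nonlinear strategy achieves strictly smaller cost than \emph{every} linear strategy. That alone shows no linear strategy is team-optimal, whether or not an optimum exists; Theorem~1 of \cite{witsenhausen1968counterexample} gives existence anyway. Note also that the information structure here is not partially nested: $\bU_{1,1}$ affects $\bY_{2,2}$, but agent $(2,2)$ does not know $\bY_{1,1}$. Hence Corollary~\ref{cor:degenerate-PN-linearity} does not apply, and there is no contradiction. I would take the classical choice $k=0.2$, $\sigma_0=5$.

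\textbf{Step 1: reduce linear strategies to one parameter.} Agent $(1,1)$ only sees $\bX_1$, so a linear first-stage strategy has the form $\bU_{1,1}=\lambda\bX_1$. Then $\bX_2=(1+\lambda)\bX_1$ is centered Gaussian with variance $t^2$, where $t:=(1+\lambda)\sigma_0$. For fixed $\lambda$, the best second-stage action $\bU_{2,2}$ given $\bY_{2,2}$ is the conditional mean $\EE[\bX_2\mid\bY_{2,2}]$, which is linear. Its error equals $t^2/(1+t^2)$. So every linear strategy costs at least
\[
J_{\mathrm{lin}}(t)=k^2(t-\sigma_0)^2+\frac{t^2}{1+t^2},
\]
and it suffices to lower-bound $\inf_t J_{\mathrm{lin}}(t)$.

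\textbf{Step 2: lower bound for linear strategies.} Split into two cases on $|t|$.
\begin{itemize}
\item If $|t|\ge 0.834$, then $t^2/(1+t^2)\ge 0.41$.
\item If $|t|<0.834$, then $k^2(t-5)^2\ge 0.04\cdot 4.166^2>0.69$.
\end{itemize}
Hence every linear strategy costs more than $0.41$.

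\textbf{Step 3: a better nonlinear strategy.} Take $\bU_{1,1}=\sigma_0\,\mathrm{sgn}(\bX_1)-\bX_1$, which makes $\bX_2=\pm\sigma_0$ with equal probability. The first-stage cost is
\[
k^2\EE\bigl(|\bX_1|-\sigma_0\bigr)^2=2k^2\sigma_0^2\Bigl(1-\sqrt{2/\pi}\Bigr)\approx 0.404.
\]
For the second stage, use the threshold rule $\bU_{2,2}=\sigma_0\,\mathrm{sgn}(\bY_{2,2})$. Its cost is $4\sigma_0^2\,\PP(\cN(0,1)>\sigma_0)$, which for $\sigma_0=5$ is below $10^{-4}$. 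The total is about $0.4041<0.41$, so this strategy beats every linear one.

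\textbf{Main obstacle.} The real work is making the numerical margin rigorous. The gap between $0.4041$ and $0.41$ is small, so I would pin down $\sqrt{2/\pi}$ with certified digits and bound the Gaussian tail with $\PP(\cN(0,1)>x)\le \phi(x)/x$. If the margin turned out too tight, I would instead use Witsenhausen's asymptotic version: take $k=1/\sigma_0$ and $\sigma_0\to\infty$. Then the nonlinear cost tends to $0$, while the linear minimum stays bounded away from $0$ by the same case split on $|t|$.
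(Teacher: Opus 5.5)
Your proof is correct and is essentially Witsenhausen's own argument, which the paper only cites rather than reproduces. The numbers check out: the first-stage cost is $2(1-\sqrt{2/\pi})\approx 0.4042$, so your total is about $0.4043$ rather than $0.4041$, and in your first case the linear lower bound is actually $0.834^2/(1+0.834^2)\approx 0.4102$, so the margin is not tight. One correction to your fallback: with $k=1/\sigma_0$, your two-level strategy's cost tends to $2(1-\sqrt{2/\pi})\approx 0.404$ (a constant plus a vanishing Gaussian-tail term), not to $0$, while the best linear cost tends to $1$. The comparison therefore still works, but not for the reason you state.
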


Then, using the parameters $k$ and $\sigma_0$  specified in Proposition \ref{prop: W's counterexample}, we construct a JCCO problem $\cD$ with $n=2, H=2$, with the system dynamics being defined as 
\begin{equation*}
    \begin{aligned}
           &\qquad  
       \bX_1\sim\cN(0,\sigma_0^2), \bX_2=\bX_1+\bU_{1,1}, \bX_3=\bX_2-\bU_{2,2},\\
       &\bY_{1,1}=\bX_1, \bY_{2,1}=0, \bY_{1,2}=0, \bY_{2,2}=\bX_2+\bW_{2,2}, \bW_{2,2}\sim\cN(0,1). 
    \end{aligned}
\end{equation*}
    For the communication part, suppose the baseline sharing is null and $
        \cM_{i,1}=\{0,1\}, \cM_{1,2}=\{0,1\}^3, \cM_{2,2}=\{0,1\}^2$, 
        such that  $\bM_{i,1}$ represents whether agent $i$ shares $\bY_{i,1}$, each digit of $\bM_{1,2}$ represents whether agent $1$ shares $\bY_{1,1},\bU_{1,1},\bY_{1,2}$, and each digit of $\bM_{2,2}$ represents whether agent $2$ shares $\bY_{2,1},\bY_{2,2}$. 
        For control costs, we set $c_1=k^2\bU_{1,1}^2, c_2=0, c_3=\bX_3^2$,  where $k$ is specified as in Proposition \ref{prop: W's counterexample}; for communication costs, we set  $\kappa_h=(\sigma_0^2+1)\mathds{1}[\bZ_h^a\neq\emptyset], \forall h\in[2]$. The information available to each agent at each timestep is defined as: $\forall i\in[2], \bI_{i,1^-}=\{\bY_{i,1}\},\bI_{i,1^+}=\bI_{i,1^-}\cup\bZ_1^a$, and $\bI_{1,2^-}=\bI_{1,1^+}\cup\{\bU_{1,1},\bY_{1,2}\}, \bI_{2,2^-}=\bI_{2,1^+}\cup\{\bY_{2,2}\}, \bI_{i,2^+}=\bI_{i,2^-}\cup\bZ_2^a$.
        Then, we can verify that:
        \begin{itemize}
            \item   \textbf{$\cD$  satisfies Assumption \ref{ass: evolution rule}.}
            \item \textbf{$\cD$ satisfies Assumption \ref{ass: useless action}:} The only zero input coefficient relevant to a later decision is the scalar $B_{2,1}=0$, and $\bU_{2,1}$ does not appear in any later information set.
            \item \textbf{$\cD$ satisfies Assumption \ref{ass: non-degeneracy}:} It holds that  $\text{rank}(E_{1,2}B_{2,1})=\text{rank}(B_{2,1})=0, \text{rank}(E_{2,2}B_{1,1})=\text{rank}(B_{1,1})=1$.
            \item \textbf{$\cD$ does not have PN IS:} If there is no additional sharing, agent $(1,1)$ influences agent $(2,2)$, but $\bI_{1,1^-}\nsubseteq \bI_{2,2^-}$.
        \end{itemize}

    For any communication strategy $g_{1:2}^m$, if it chooses any $\bM_{i,h}$ with any digit non-zero (namely, sharing \emph{some}  information via additional sharing), then it cannot achieve the optimum, since it will suffer from a communication cost of $\kappa_h\ge \sigma_0^2+1$, which is larger than the total expected cost when the agents choose $\bU_{1,1}=\bU_{2,2}=0$,  and they do not have any additional sharing. Therefore, the optimal communication strategy $g_{1:2}^{m,\ast}$ must yield no additional sharing. The remaining control subproblem is exactly Proposition~\ref{prop: W's counterexample}; hence every team-optimal control strategy is nonlinear. This completes the proof. 
\end{proof}

\subsubsection{Proof of Lemma \ref{lemma: useless action}}
\begin{proof}
    This proof consists of two {\bf Parts}. {\bf Part 1:} we construct a JCCO problem $\cD_1$ whose optimal control strategy is nonlinear; {\bf Part 2:} we construct another JCCO problem $\cD_2$ whose team-optimal strategy does not exist.
    
    \vspace{5pt}
        \noindent\textbf{Part 1:} Consider an $H=2,n=2$ JCCO problem $\cD_1$ with the system dynamics being  defined as 
    \begin{align*}
        &\forall h\in[2], \bX_h\in\RR^2, \bU_{1,1},\bU_{1,2}\in\RR, \bU_{2,1},\bU_{2,2}\in\RR^2, \bX_2=\bX_1,\bX_3=\bX_2-\bU_{2,2},\\
        &\bX_1\sim\cN(\mathbf{0},\II), \bY_{1,1}=\bX_1,\bY_{1,2}=\bY_{2,1}=\bY_{2,2}=0, c_1=c_2=0,c_3=\bX_3^\top\bX_3.
    \end{align*}
        For the communication part, suppose the baseline sharing is null and $\cM_{i,1}=\{0,1\}, \cM_{i,2}=\{0,1\}^3, \forall i\in[2]$, where $\bM_{i,1}$ represents whether agent $i$ shares $\bY_{i,1}$ and each digit of $\bM_{i,2}$ represents whether agent $i$ shares $\bY_{i,1},\bU_{i,1}$, and $\bY_{i,2}$, respectively. Set $\kappa_1=\mathds{1}[\bM_{1,1}=1\text{ or }\bM_{2,1}=1]$ and $\kappa_2=\mathds{1}[\bM_{1,2}\notin\{(0,0,0),(0,1,0)\}\text{ or }\bM_{2,2}\neq(0,0,0)]$.  Thus $(0,1,0)$ shares only $\bU_{1,1}$ at zero cost, while every other nonempty sharing incurs cost one. The information available to each agent at each timestep is defined as $\forall i\in[2], \bI_{i,1^-}=\{\bY_{i,1}\},\bI_{i,1^+}=\bI_{i,1^-}\cup\bZ_1^a$, and $\bI_{1,2^-}=\bI_{1,1^+}\cup\{\bU_{1,1},\bY_{1,2}\}, \bI_{2,2^-}=\bI_{2,1^+}\cup\{\bU_{2,1}, \bY_{2,2}\}, \bI_{i,2^+}=\bI_{i,2^-}\cup\bZ_2^a$.

        Then, we can verify that:
        \begin{itemize}
            \item   \textbf{$\cD_1$ satisfies Assumption \ref{ass: evolution rule}.}
            \item \textbf{$\cD_1$ has PN IS:} If there is no additional sharing, then for any $i_1,i_2\in[2]$, $\bI_{i_1,1^-}\subseteq \bI_{i_2,2^-}$ if $i_1=i_2$, and otherwise agent $(i_1,1)$ does not influence agent $(i_2,2)$. 
            \item \textbf{$\cD_1$ satisfies Assumption \ref{ass: non-degeneracy}:} For any $i\in[2]$, $\text{rank}(E_{-i,2}B_{i,1})=\text{rank}(B_{i,1})=0$.
            \item \textbf{$\cD_1$ does not satisfy Assumption \ref{ass: useless action}:} $B_{1,1}=B_{2,1}=0$, but $\bU_{1,1}\in\bI_{1,2^-}$ and $\bU_{2,1}\in\bI_{2,2^-}$.
        \end{itemize}
        Now, we aim to show that the optimal control strategy is nonlinear. Firstly, we can construct a strategy $(g_{1:2}^{m,\ast}, g_{1:2}^{a,\ast})$ as
        \begin{align*}
            g_{1,1}^{m,\ast}=g_{2,1}^{m,\ast}=0,g_{1,2}^{m,\ast}=(0,1,0), g_{2,2}^{m,\ast}=(0,0,0), g_{1,1}^{a,\ast}(\bI_{1,1^+})=\varphi_{bi}(\bY_{1,1}), g_{2,2}^{a,\ast}(\bI_{2,2^+})=\varphi_{bi}^{-1}(\bU_{1,1}),
        \end{align*}
        where $\varphi_{bi}$ is a Borel isomorphism that pushes $\cN(0,I_2)$ to $\cN(0,1)$ (such an isomorphism exists between atomless standard probability spaces), and the other two controls are zero.  Hence the strategy is square-integrable and has total cost zero almost surely.  Since every cost is nonnegative, it is team-optimal. Such a Borel bijection cannot be linear: every linear $\TT:\RR^2\rightarrow\RR$ has nontrivial kernel by the rank-nullity theorem \cite{Linear_Algebra}.
        
        However, we will show that for any strategy $(g_{1:2}^m,g_{1:2}^a)$ such that $g_{1:2}^a$ 
        is linear, $(g_{1:2}^m,g_{1:2}^a)$ cannot be team-optimal, i.e., $J_{\cD_1}(g_{1:2}^m,g_{1:2}^a)>0$. We prove it by contradiction, and suppose that $(g_{1:2}^{m,'},g_{1:2}^{a,'})$ satisfies that $J_{\cD_1}(g_{1:2}^{m,'},g_{1:2}^{a,'})=0$ and $g_{1:2}^{a,'}$ is linear. Then, it must hold that $g_{1,1}^{m,'}=g_{2,1}^{m,'}=0, g_{2,2}^{m,'}=(0,0,0), g_{1,2}^{m,'}=(0,0,0)$ or $(0,1,0)$, since otherwise the communication cost is at least 1. If $g_{1,2}^{m,'}=(0,0,0)$, changing it to $(0,1,0)$ preserves the original controls as feasible prescriptions, adds no communication cost, and therefore preserves the zero value.  We may thus assume $g_{1,2}^{m,'}=(0,1,0)$.  Then $\bI_{i,1^+}=\{\bY_{i,1}\}$ for $i\in[2]$ and $\bI_{2,2^+}=\{\bY_{2,1},\bU_{2,1},\bY_{2,2},\bU_{1,1}\}$. Since $g_{1:2}^{a,'}$ is linear, we can write: 
        \begin{align*}
            \bU_{1,1}=G_{1,1}^{1,1,y}\bY_{1,1}, \quad \bU_{2,1}=G_{2,1}^{2,1,y}\bY_{2,1}, \quad\bU_{2,2}=G_{2,2}^{2,1,y}\bY_{2,1}+G_{2,2}^{2,2,y}\bY_{2,2}+G_{2,2}^{1,1,u}\bU_{1,1}+G_{2,2}^{2,1,u}\bU_{2,1},
        \end{align*}
        for some matrices $G_{1,1}^{1,1,y},G_{2,1}^{2,1,y}, G_{2,2}^{2,1,y},G_{2,2}^{2,2,y},G_{2,2}^{1,1,u}, G_{2,2}^{2,1,u}$ with proper dimensions. Then, we can write $\bU_{2,2}$ as $\bU_{2,2}=G_{2,2}^{1,1,u}G_{1,1}^{1,1,y}\bY_{1,1}$.
        Then, we can write $J_{\cD_1}(g_{1:2}^{m,'},g_{1:2}^{a,'})=\EE[((\II-G_{2,2}^{1,1,u}G_{1,1}^{1,1,y})\bX_1)^\top(\II-G_{2,2}^{1,1,u}G_{1,1}^{1,1,y})\bX_1)]=\tr((\II-G_{2,2}^{1,1,u}G_{1,1}^{1,1,y})^\top (\II-G_{2,2}^{1,1,u}G_{1,1}^{1,1,y}))$. Note that $G_{2,2}^{1,1,u}$ is a $2\times1$ matrix and $G_{1,1}^{1,1,y}$ is a $1\times2$ matrix. Therefore, we have rank($G_{2,2}^{1,1,u}G_{1,1}^{1,1,y}$)$\le1$, and then $\tr((\II-G_{2,2}^{1,1,u}G_{1,1}^{1,1,y})^\top (\II-G_{2,2}^{1,1,u}G_{1,1}^{1,1,y}))\ge 1$ due to Eckart-Young Theorem \cite{eckart1936approximation}. This means that $J_{\cD_1}(g_{1:2}^{m,'},g_{1:2}^{a,'})> 0$ and leads to the contradiction. Hence, we know that there exists a team-optimal strategy of $\cD_1$ and the control strategy of any team-optimal strategy is nonlinear, which completes the proof of {\bf Part 1}.

        \vspace{5pt}
        \noindent\textbf{Part 2:}  Consider an $n=2,H=2$ JCCO problem $\cD_2$ with the system dynamics being defined as 
    \begin{align*}
         &\forall h\in[2], i\in[2], \bX_{h}, \bX_3, \bY_{i,h}, \bU_{i,h}\in\RR, \bX_2=\bX_1, \bX_3=\bX_2-\bU_{2,2}, \\
         &\bY_{1,1}=\bX_1, \bY_{1,2}=\bY_{2,1}=\bY_{2,2}=0,  \bX_1\sim\cN(0,\frac{1}{4}), c_1=\bU_{1,1}^2,c_2=0,c_3=\bX_3^2.
    \end{align*}
     For the communication part, similarly to {\bf Part 1}, suppose the baseline sharing is null and $\cM_{i,1}=\{0,1\}, \cM_{i,2}=\{0,1\}^3, \forall i\in[2]$, where $\bM_{i,1}$ represents whether agent $i$ shares $\bY_{i,1}$ and each digit of $\bM_{i,2}$ represents whether agent $i$ shares $\bY_{i,1},\bU_{i,1}$, and $\bY_{i,2}$, respectively. We set the communication costs as $\kappa_1=\mathds{1}[\bZ_1^a\neq \emptyset], \kappa_2=\mathds{1}[\bZ_2^a\backslash\{\bU_{1,1}\}\neq \emptyset]$. This communication cost means that if any information except $\bU_{1,1}$ is shared, then all agents will incur a communication cost of  $1$.  The information available to each agent at each timestep is defined as $\forall i\in[2], \bI_{i,1^-}=\{\bY_{i,1}\},\bI_{i,1^+}=\bI_{i,1^-}\cup\bZ_1^a, \bI_{i,2^-}=\bI_{i,1^+}\cup\{\bU_{i,1},\bY_{i,2}\}, \bI_{i,2^+}=\bI_{i,2^-}\cup\bZ_2^a$.
     
Then, we can verify that:
        \begin{itemize}
            \item   \textbf{$\cD_2$ satisfies Assumption \ref{ass: evolution rule}.}
            \item \textbf{$\cD_2$ has PN IS:} If there is no additional sharing, then for any $i_1,i_2\in[2]$, $\bI_{i_1,1^-}\subseteq \bI_{i_2,2^-}$ if $i_1=i_2$, and otherwise agent $(i_1,1)$ does not influence agent $(i_2,2)$. 
            \item \textbf{$\cD_2$ satisfies Assumption \ref{ass: non-degeneracy}:} For any $i\in[2]$, $\text{rank}(E_{-i,2}B_{i,1})=\text{rank}(B_{i,1})=0$.
            \item \textbf{$\cD_2$ does not satisfy Assumption \ref{ass: useless action}:} $B_{1,1}=B_{2,1}=0$, but $\bU_{1,1}\in\bI_{1,2^-}$ and $\bU_{2,1}\in\bI_{2,2^-}$.
        \end{itemize}

       We prove it by contradiction. We assume $(g_{1:2}^{m,\ast},g_{1:2}^{a,\ast})$ is a team-optimal strategy of $\cD_2$. Firstly, it holds that $g_{1,1}^{m,\ast}=g_{2,1}^{m,\ast}=0, g_{2,2}^{m,\ast}=(0,0,0), g_{1,2}^{m,\ast}=(0,0,0)$ or $(0,1,0)$, since otherwise $J_{\cD_2}(g_{1:2}^{m,\ast},g_{1:2}^{a,\ast})\ge \EE[\kappa_1+\kappa_2\given g_{1:2}^{m,\ast},g_{1:2}^{a,\ast}]\ge 1$; however, if we consider strategy $(g_{1:2}^m,g_{1:2}^a)$ that shares nothing and chooses $\bU_{i,h}=0,\forall i\in[2],h\in[2]$, then $J_{\cD_2}(g_{1:2}^m,g_{1:2}^a)=\EE[\bX_3^2\given g_{1:2}^m,g_{1:2}^a]=\EE[\bX_1^2]=\frac{1}{4}$. 
       Secondly, we can assume $g_{1,2}^{m,\ast}=(0,1,0)$, otherwise we can change it to be $(0,1,0)$ and it is still a team-optimal strategy, since additionally sharing $\bU_{1,1}$ enlarges the $\bI_{2,2^+}$ but incurs no communication cost. Therefore, we know that under $g_{1:2}^{m,\ast}$, we have $\bI_{i,1^+}=\{\bY_{i,1}\}, \forall i\in[2], \bI_{2,2^+}=\{\bY_{2,1},\bU_{2,1},\bY_{2,2},\bU_{1,1}\}$. Thirdly, if $\PP(\bU_{1,1}=0\given g_{1:2}^{m,\ast},g_{1:2}^{a,\ast})=1$, then we consider the realization $I_{2,2^+}=\{\bY_{2,1}=0,\bU_{2,1}=g_{2,1}^{a,\ast}(0), \bY_{2,2}=0,\bU_{1,1}=0\}$, and let $U_{2,2}=g_{2,2}^{a,\ast}(I_{2,2^+})$ be the realization of $\bU_{2,2}$ when $g_{2,2}^{a,\ast}$ takes the realization $I_{2,2^+}$ as input. Then, it holds that $\PP(\bU_{2,2}=U_{2,2}\given g_{1:2}^{m,\ast},g_{1:2}^{a,\ast})=1$, and we have
       \begin{align*}
       J_{\cD_2}(g_{1:2}^{m,\ast},g_{1:2}^{a,\ast})\ge \EE[\bX_3^2\given g_{1:2}^{m,\ast},g_{1:2}^{a,\ast}]=\EE[(\bX_1-\bU_{2,2})^2\given g_{1:2}^{m,\ast},g_{1:2}^{a,\ast}]=\EE[(\bX_1-U_{2,2})^2\given g_{1:2}^{m,\ast},g_{1:2}^{a,\ast}]\ge \EE[\bX_1^2]\ge \frac{1}{4}.
       \end{align*}
       However, we consider the strategy $(g_{1:2}^{m,\ast}, g_{1:2}^{a,'})$  with $g_{1:2}^{a,'}$ being defined as $g_{1,1}^{a,'}(\bI_{1,1^+})=\frac{\bY_{1,1}}{2}, g_{2,2}^{a,'}(\bI_{2,2^+})=2\bU_{1,1}$ if $\bU_{1,1}\in \bI_{2,2^+}$ and  otherwise $0$, and choose arbitrary $g_{2,1}^{a,'},g_{1,2}^{a,'}$.  Then, we can verify that
       \begin{align*}
           J_{\cD_2}(g_{1:2}^{m,\ast},g_{1:2}^{a,'})=\EE[c_1+c_3\given g_{1:2}^{m,\ast},g_{1:2}^{a,'}]=\EE[\bU_{1,1}^2+\bX_3^2\given g_{1:2}^{m,\ast},g_{1:2}^{a,'}]=\EE[\frac{1}{4}\bY_{1,1}^2+(\bX_1-2\cdot\frac{\bY_{1,1}}{2})^2]=\EE[\frac{1}{4}\bY_{1,1}^2]=\frac{1}{16}.
       \end{align*}
       Therefore, we know that it holds that $\PP(\bU_{1,1}=0\given g_{1:2}^{m,\ast},g_{1:2}^{a,\ast})<1$, and then $\EE[\bU_{1,1}^2\given g_{1:2}^{m,\ast},g_{1:2}^{a,\ast}]>0$. Lastly, we can construct a new strategy $g_{1:2}^a$ based on $g_{1:2}^{a,\ast}$ as 
       \begin{align*}
           &g_{1,1}^a(\bI_{1,1^+})=\frac{1}{2}g_{1,1}^{a,\ast}(\bI_{1,1^+}),\\
           &g_{2,2}^a(\bI_{2,2^+})=g_{2,2}^a(\bY_{2,1},\bU_{2,1},\bY_{2,2},\bU_{1,1})=g_{2,2}^{a,\ast}(\bY_{2,1},\bU_{2,1},\bY_{2,2},2\bU_{1,1}) \text{ if $\bI_{2,2^+}=\{\bY_{2,1},\bU_{2,1},\bY_{2,2},\bU_{1,1}\}$, otherwise 0.}
       \end{align*}
       {All other control rules remain unchanged.} This construction means that $g_{1,1}^a$ will choose $\bU_{1,1}$ to be half of the $\bU_{1,1}$ chosen by $g_{1,1}^{a,\ast}$, but $g_{2,2}^{a,\ast},g_{2,2}^a$ choose the same $\bU_{2,2}$. Therefore, we can verify that
       \begin{align*}
           J_{\cD_2}(g_{1:2}^{m,\ast},g_{1:2}^a)&=\EE[\bU_{1,1}^2\given g_{1:2}^{m,\ast},g_{1:2}^a]+\EE[(\bX_1-\bU_{2,2})^2\given g_{1:2}^{m,\ast},g_{1:2}^a]\\
           &=\EE[(\bU_{1,1}/2)^2\given g_{1:2}^{m,\ast},g_{1:2}^{a,\ast}]+\EE[(\bX_1-\bU_{2,2})^2\given g_{1:2}^{m,\ast},g_{1:2}^{a,\ast}]\\
           &<\EE[\bU_{1,1}^2\given g_{1:2}^{m,\ast},g_{1:2}^{a,\ast}]+\EE[(\bX_1-\bU_{2,2})^2\given g_{1:2}^{m,\ast},g_{1:2}^{a,\ast}]=J_{\cD_2}(g_{1:2}^{m,\ast},g_{1:2}^{a,\ast}),
       \end{align*}
       where the inequality holds because $\EE[\bU_{1,1}^2\given g_{1:2}^{m,\ast},g_{1:2}^{a,\ast}]>0$ as proved before.  This means that for any team-optimal strategy $g_{1:2}^{m,\ast},g_{1:2}^{a,\ast}$, we can construct a strategy $(g_{1:2}^{m,\ast},g_{1:2}^a)$ that is strictly better than it. Therefore, the team-optimal strategy does not exist. 
\end{proof}

\subsubsection{Proof of Lemma \ref{lemma: nondegeneracy}}\begin{proof}
    Consider a JCCO problem $\cD$ with $n=2,H=2$, parameters $k,\sigma_0$ specified in Proposition \ref{prop: W's counterexample}, and the system dynamics being defined as
    \begin{align*}
         \forall h\in[2], i\in[2], \bX_{h}, \bX_3, \bY_{i,h}, \bU_{i,h}\in\RR, \bX_2=\bX_1+\bU_{1,1}, \bX_3=\bX_2-\bU_{2,2},\\
         \bY_{1,1}=\bX_1, \bY_{1,2}=\bX_2+\bW_{1,2}, \bY_{2,1}=\bY_{2,2}=0,  \bX_1\sim\cN(0,\sigma_0^2),\bW_{1,2}\sim\cN(0,1). 
    \end{align*} 
    
    For the communication part, baseline sharing is null and
    \begin{align*}
        \cM_{i,1}=\{0,1\}, \cM_{1,2}=\{0,1\}^3, \cM_{2,2}=\{0,1\}^2,
    \end{align*}
    where $\bM_{i,1}$ represents whether agent $i$ shares $\bY_{i,1}$. Each digit of $\bM_{1,2}$ represents whether agent $1$ shares $\bY_{1,1},\bU_{1,1}$, and $\bY_{1,2}$. Each digit of $\bM_{2,2}$ represents whether agent $2$ shares $\bY_{2,1}$ and $\bY_{2,2}$. 

    For the control costs, we set $c_1=k^2\bU_{1,1}^2, c_2=0, c_3=\bX_3^2$; for the communication costs, we set $\kappa_1=(\sigma_0^2+1)\mathds{1}[\bZ_1^a\neq \emptyset], \kappa_2=(\sigma_0^2+1)\mathds{1}[\bZ_2^a\backslash \{\bY_{1,2}\}\neq\emptyset]$. This communication cost means that if any information except $\bY_{1,2}$ is shared through additional sharing, it will incur a communication cost of $\sigma_0^2+1$. The information available to each agent at each timestep is defined as: $\forall i\in[2], \bI_{i,1^-}=\{\bY_{i,1}\}, \bI_{i,1^+}=\bI_{i,1^-}\cup \bZ_1^a,$ and $\bI_{1,2^-}=\bI_{1,1^+}\cup \{\bU_{1,1},\bY_{1,2}\}, \bI_{2,2^-}=\bI_{2,1^+}\cup\{\bY_{2,2}\}, \bI_{i,2^+}=\bI_{i,2^-}\cup \bZ_2^a$. 
        Then, we can verify that:
        \begin{itemize}
            \item   \textbf{$\cD$ satisfies Assumption \ref{ass: evolution rule}.}
            \item \textbf{$\cD$ has PN IS:} If there is no additional sharing, then for any $i_1,i_2\in[2]$, $\bI_{i_1,1^-}\subseteq \bI_{i_2,2^-}$ if $i_1=i_2$; otherwise, agent $(i_1,1)$ does not influence agent $(i_2,2)$. 
            \item \textbf{$\cD$ satisfies Assumption \ref{ass: useless action}:} The only zero input coefficient relevant to a later decision is $B_{2,1}=0$, and $\bU_{2,1}$ does not appear in any later information set.
            \item \textbf{$\cD$ does not satisfy Assumption \ref{ass: non-degeneracy}:} It holds that $\text{rank}(E_{2,2}B_{1,1})=0\neq 1=\text{rank}(B_{1,1})$.
        \end{itemize}
        
First, no team-optimal communication strategy shares any information other than $\bY_{1,2}$ through additional sharing, since otherwise it will incur a communication cost $\kappa_h\ge \sigma_0^2+1$, which is larger than the total cost when each agent $i$ chooses $\bU_{i,h}=0, \forall h\in[2]$ and shares nothing via additional sharing.

Second,  we analyze the optimal control strategy under two types of communication strategies.

1) Suppose the optimal communication strategy $g_{1:2}^{m,\ast}$ yields that agents share $\bY_{1,2}$ through additional sharing, i.e. $\bZ_1^a=\emptyset, \bZ_2^a=\{\bY_{1,2}\}$. Then, finding a team-optimal strategy of $\cD$ can be reduced to finding an optimal strategy of the Decentralized LQG problem in Proposition \ref{prop: W's counterexample}, where the optimal control strategy is nonlinear.

2) Suppose the optimal communication strategy $g_{1:2}^{m,\ast}$ yields that agents share nothing through additional sharing, i.e.,  $\bZ_1^a=\bZ_2^a=\emptyset$. Then, let $g_{1:2}^{a,\ast}$ be the optimal control strategy, and let $g_{1:2}^{m,'}$ be the communication strategy that additionally shares $\bY_{1,2}$. Since both $g_{1:2}^{m,'}$ and $g_{1:2}^{m,\ast}$ lead to no communication cost and $g_{1:2}^{m,'}$ enlarges $\bI_{2,2^+}$, it follows that $J_{\cD}(g_{1:2}^{m,\ast},g_{1:2}^{a,\ast})\ge J_{\cD}(g_{1:2}^{m,'},g_{1:2}^{a,\ast})$, which means $(g_{1:2}^{m,'},g_{1:2}^{a,\ast})$ is an optimal strategy. Therefore, from case 1) we know that $g_{1:2}^{a,\ast}$ is nonlinear, which completes the proof.  
\end{proof}

\subsubsection{Proof of Theorem \ref{thm: preserve PN}}
\begin{proof}
    Fix any communication strategy $g_{1:H}^m\in \cG_{1:H}^m$ and let $M_{1:H}$ be the communication actions chosen by $g_{1:H}^m\in \cG_{1:H}^m$.  Then, $\cD$ becomes a   decentralized LQG problem denoted by $\cD(g_{1:H}^m)$, with the same system dynamics as $\cD$  and the following information evolution: for any $h\in[H]$
    \begin{enumerate}[(a)]
        \item The common information in $\cD(g_{1:H}^m)$ evolves as $\bar{\bC}_h=\cup_{t=1}^h \bar{\bZ}_t$, and $\bar{\bZ}_h=\bar{\chi}_h(\bar{\bP}_{h-1},\bar{\bU}_{h-1},\bar{\bY}_{h})$ for some fixed projection function $\bar{\chi}_h$. Here, $\bar{\chi}_h$ is defined as 
        \begin{align*}
        &\bar{\chi}_h(\bar{\bP}_{h-1},\bar{\bU}_{h-1},\bar{\bY}_{h})=\chi_h(\bar{\bP}_{h-1},\bar{\bU}_{h-1},\bar{\bY}_{h})\cup \phi_h(M_h,\zeta_h(\bar{\bP}_{h-1},\bar{\bU}_{h-1},\bar{\bY}_{h})).
        \end{align*}
        Recall that $\chi_h,\zeta_h,\phi_h$ are   defined in Assumption \ref{ass: evolution rule}. Such a $\bar{\chi}_h$ is a fixed projection function since $\chi_h, \{\phi_{i,h}(M_{i,h},\cdot)\}_{i\in[n]}$ are fixed projection functions, and for any $P_{h^-}\in \cP_{h^-}, \phi_h(M_h,P_{h^-})=\cup_{i=1}^n\phi_{i,h}(M_{i,h},P_{i,h^-})$. 
        
        \item The private information in $\cD(g_{1:H}^m)$ evolves as follows. For any $i\in[n], \bar{\bP}_{i,h}=\bar{\zeta}_{i,h}(\bar{\bP}_{i,h-1},\bar{\bU}_{i,h-1},\bar{\bY}_{i,h})$  for some fixed projection function $\bar{\zeta}_{i,h}$ and thus $\bar{\bP}_{h}=\bar{\zeta}_{h}(\bar{\bP}_{h-1},\bar{\bU}_{h-1},\bar{\bY}_{h})$ for some fixed projection function   $\bar{\zeta}_h$. Here,  $\bar{\zeta}_{i,h}$ is defined as 
        \begin{align*}
            &\bar{\zeta}_{i,h}(\bar{\bP}_{i,h-1},\bar{\bU}_{i,h-1},\bar{\bY}_{i,h})=\zeta_{i,h}(\bar{\bP}_{i,h-1},\bar{\bU}_{i,h-1},\bar{\bY}_{i,h})\backslash\phi_{i,h}(M_{i,h},\zeta_{i,h}(\bar{\bP}_{i,h-1},\bar{\bU}_{i,h-1},\bar{\bY}_{i,h})).
        \end{align*}
        Recall that $\zeta_{i,h},\phi_{i,h}$ are defined in Assumption \ref{ass: evolution rule}. Such a $\overline{\zeta}_{i,h}$ is a fixed projection function since $\zeta_{i,h}, \phi_{i,h}(M_{i,h},\cdot)$ are fixed projection functions.
        \item  For each agent $i\in[n]$, $\bar{\bY}_{i,h}\in\bar{\bI}_{i,h}$ for $h\in[H]$, and $\bar{\bI}_{i,h}\subseteq\bar{\bI}_{i,h+1}$ for $h\in[H-1]$. These properties still hold since additional sharing only moves some private information into common information but does not remove any information.
    \end{enumerate}
    Then, we know that $\cD(g_{1:H}^m)$ is a decentralized LQG problem with some information sharing due to both the baseline sharing and the communication actions $M_{1:H}$.     
    It remains to prove partial nestedness.  Use $~\check{}~$ for the baseline problem $\check{\cD}$.  Suppose that $\bar{\bU}_{i_1,h_1}$ influences $\bar{\bI}_{i_2,h_2}$, where $h_1<h_2$.  If $B_{i_1,h_1}=0$, the action has no state effect, and Assumption~\ref{ass: useless action} excludes its later appearance as a shared label, a contradiction.  Hence $B_{i_1,h_1}\neq0$.  Assumption~\ref{ass: non-degeneracy} gives some $j\neq i_1$ with $E_{j,h_1+1}B_{i_1,h_1}\neq0$, so $\check{\bU}_{i_1,h_1}$ influences $\check{\bY}_{j,h_1+1}\in\check{\bI}_{j,h_1+1}$.  Baseline partial nestedness yields $\check{\bI}_{i_1,h_1}\subseteq\check{\bI}_{j,h_1+1}$.  Since $j\neq i_1$, every label private to agent $i_1$ can enter agent $j$'s information only through common information due to Assumption \ref{ass: evolution rule}, then it holds $\check{\bI}_{i_1,h_1}\subseteq\check{\bC}_{h_1+1}$.  Moreover, every label in $\bar{\bI}_{i_1,h_1}\setminus\check{\bI}_{i_1,h_1}$ arrived through additional sharing and lies in $\bar{\bC}_{h_1}$, and then it holds
    \[
      \bar{\bI}_{i_1,h_1}\subseteq\bar{\bC}_{h_1+1}\subseteq
      \bar{\bC}_{h_2}\subseteq\bar{\bI}_{i_2,h_2}.
    \]
    Thus $\cD(g_{1:H}^m)$ is partially nested.  Corollary~\ref{cor:degenerate-PN-linearity} supplies a linear team optimum for every fixed schedule.  Since the open-loop schedule set is finite, a minimizing schedule exists, and its associated linear control strategy proves the final assertion.
\end{proof}

\section{Deferred Details of  \S\ref{sec: open_loop}}
\label{sec: appendix_open_loop_proof}
For the appendix derivations only, we use the following cost-free terminal convention after the last decision:
\[
\tilde{\bC}_{H+1}:=\tilde{\bC}_H\cup\{\tilde{\bU}_H,\tilde{\bX}_{H+1}\},\qquad
\tilde{\bP}_{i,H+1}:=\emptyset.
\]
Equivalently, set $\tilde{\bY}_{H+1}:=\tilde{\bX}_{H+1}$, $\tilde E_{H+1}:=I$, $\tilde{\bW}_{1:n,H+1}:=\bm0$, and $\tilde{\bZ}_{H+1}:=(\tilde{\bU}_H,\tilde{\bY}_{H+1})$. Then $\tilde{\bS}_{H+1}=\tilde{\bTheta}_{H+1}=\tilde{\bX}_{H+1}$, $\tilde\Sigma_{H+1}=0$, and $\II_{x,H+1}=I$. This convention is only a device for writing the terminal-cost calculation in the same form as the preceding backward steps; it introduces no additional decision and changes neither the strategy space nor the cost.
\subsubsection{Proof of Lemma \ref{lemma: equivalence of PN and sPN}}
\begin{proof}
    From the construction, the system dynamics and control cost are the same for both decentralized LQG problems, and it holds that $\bar{\bI}_{i,h}\subseteq \tilde{\bI}_{i,h}$ for any $i\in[n],h\in[H]$. Therefore, the agents in $\tilde{\cD}(g_{1:H}^m)$ have larger strategy spaces than those in $\cD(g_{1:H}^m)$, which implies  $\min_{\tilde{g}_{1:H}\in\tilde{\cG}_{1:H}}J_{\tilde{\cD}(g_{1:H}^m)}(\tilde{g}_{1:H})\le \min_{\bar{g}_{1:H}\in\bar{\cG}_{1:H}}J_{\cD(g_{1:H}^m)}(\bar{g}_{1:H})$.
    
    Now, we want to show that for any optimal $\tilde{g}_{1:H}^\ast$ of problem $\tilde{\cD}(g_{1:H}^m)$,  we can construct $\bar{g}_{1:H}^\ast=\varphi(\tilde{g}_{1:H}^\ast,\cD(g_{1:H}^m))$ to be a linear optimal strategy of $\cD(g_{1:H}^m)$, and $J_{\cD(g_{1:H}^m)}(\bar{g}_{1:H}^\ast)=J_{\tilde{\cD}(g_{1:H}^m)}(\tilde{g}_{1:H}^\ast)$.
    For any optimal linear strategy $\tilde{g}_{1:H}^\ast\in\argmin_{\tilde{g}_{1:H}\in \tilde{\cG}_{1:H}}J_{\tilde{\cD}(g_{1:H}^m)}(\tilde{g}_{1:H})$ of $\tilde{\cD}(g_{1:H}^m)$, we recursively construct a strategy $\bar{g}_{1:H}^\ast=\varphi(\tilde{g}_{1:H}^\ast,\cD(g_{1:H}^m))$ of $\cD(g_{1:H}^m)$. For $h=1$ and any $i\in[n]$, from the construction of $\tilde{\cD}(g_{1:H}^m)$, we know that $\tilde{\bI}_{i,1}=\bar{\bI}_{i,1}$ always holds. Then we can define $\tilde{g}_{i,1}^\ast(\tilde{I}_{i,1})=\bar{g}_{i,1}^\ast(\tilde{I}_{i,1})$ for any $i\in[n], \tilde{I}_{i,1}\in\tilde{\cI}_{i,1}=\bar{\cI}_{i,1}$. Then,  for any $i\in[n]$, $\tilde{g}_{i,1}^\ast$ and $\bar{g}_{i,1}^\ast$ output the same control action, namely,  $\tilde{\bU}_{i,1}=\bar{\bU}_{i,1}$, and $\bar{g}_{i,1}^\ast$ is linear. 
    
    Now, for any $h\in[2:H]$, assume that we have already constructed the linear strategy $\bar{g}_{1:h-1}^\ast$ from $\tilde{g}_{1:h-1}^\ast$ such that for any $i\in[n]$ and $t<h$, $\bar{g}_{i,t}^\ast$ and $\tilde{g}_{i,t}^\ast$ output the same actions. Now, we aim to construct $\bar{g}_h^\ast$ from $\tilde{g}_h^\ast$. From the construction of problem $\tilde{\cD}(g_{1:H}^m)$ from $\cD(g_{1:H}^m)$, for any $i\in[n]$, $\bar{\bI}_{i,h}\subseteq \tilde{\bI}_{i,h}$ and the only additional variables are the actions $\tilde{\bU}_{j,t}\in \tilde{\bI}_{i,h}\backslash\bar{\bI}_{i,h}$, where $j\in[n]$ and $t<h$. Furthermore, if $\tilde{\bU}_{j,t}\in \tilde{\bI}_{i,h}\backslash\bar{\bI}_{i,h}$, then $\bar{B}_{j,t}\neq \mathbf{0}$ and $\bar{\bI}_{j,t}\subseteq \bar{\bC}_h$. Since $\bar{g}_{1:h-1}^{\ast}$ outputs the same actions as $\tilde{g}_{1:h-1}^{\ast}$, we can write $\tilde{\bU}_{j,t}=\bar{\bU}_{j,t}=\bar{g}_{j,t}^{\ast}(\bar{\bI}_{j,t})$, which means that we can obtain $\tilde{\bU}_{j,t}$ from $\bar{g}_{j,t}^{\ast}$ and $\bar{\bI}_{j,t}$. Since  $\bar{g}_{j,t}^{\ast}$ is linear, we write $\bar{g}_{j,t}^{\ast}(\bar{\bI}_{j,t})=\sum_{\bar{\bY}_{k,s}\in \bar{\bI}_{j,t}}\bar{G}_{j,t}^{k,s,y}\bar{\bY}_{k,s}+\sum_{\bar{\bU}_{k,s}\in \bar{\bI}_{j,t}}\bar{G}_{j,t}^{k,s,u}\bar{\bU}_{k,s}$ for any $j\in[n]$ and $t<h$. Here, the sum $\sum_{\bar{\bY}_{k,s}\in \bar{\bI}_{j,t}}$ ranges over all $k\in[n]$ and $s\in[H]$ such that $\bar{\bY}_{k,s}\in\bar{\bI}_{j,t}$, and the other sums are interpreted similarly.
    Meanwhile, from the linearity of $\tilde{g}_{i,h}^{\ast}$ for any $i\in[n]$, we can write $\tilde{g}_{i,h}^{\ast}(\tilde{\bI}_{i,h})=\sum_{\tilde{\bY}_{j,t}\in \tilde{\bI}_{i,h}}\tilde{G}_{i,h}^{j,t,y}\tilde{\bY}_{j,t}+\sum_{\tilde{\bU}_{j,t}\in \tilde{\bI}_{i,h}}\tilde{G}_{i,h}^{j,t,u}\tilde{\bU}_{j,t}$  for some matrices $\tilde{G}_{i,h}^{j,t,y},\tilde{G}_{i,h}^{j,t,u}$. Then, we have
    \begin{align*}
        &\tilde{g}_{i,h}^{\ast}(\tilde{\bI}_{i,h})=\sum_{\tilde{\bY}_{j,t}\in \tilde{\bI}_{i,h}}\tilde{G}_{i,h}^{j,t,y}\tilde{\bY}_{j,t}+\sum_{\tilde{\bU}_{j,t}\in \tilde{\bI}_{i,h}}\tilde{G}_{i,h}^{j,t,u}\tilde{\bU}_{j,t}=\sum_{\bar{\bY}_{j,t}\in \bar{\bI}_{i,h}}\tilde{G}_{i,h}^{j,t,y}\bar{\bY}_{j,t}+\sum_{\bar{\bU}_{j,t}\in \bar{\bI}_{i,h}}\tilde{G}_{i,h}^{j,t,u}\bar{\bU}_{j,t}+\sum_{\bar{\bU}_{j,t}\in \tilde{\bI}_{i,h}\backslash\bar{\bI}_{i,h}}\tilde{G}_{i,h}^{j,t,u}\bar{\bU}_{j,t}\\
        &\qquad=\sum_{\bar{\bY}_{j,t}\in \bar{\bI}_{i,h}}\tilde{G}_{i,h}^{j,t,y}\bar{\bY}_{j,t}+\sum_{\bar{\bU}_{j,t}\in \bar{\bI}_{i,h}}\tilde{G}_{i,h}^{j,t,u}\bar{\bU}_{j,t}+\sum_{\bar{\bU}_{j,t}\in \tilde{\bI}_{i,h}\backslash\bar{\bI}_{i,h}}\tilde{G}_{i,h}^{j,t,u}\left(\sum_{\bar{\bY}_{k,s}\in \bar{\bI}_{j,t}}\bar{G}_{j,t}^{k,s,y}\bar{\bY}_{k,s}+\sum_{\bar{\bU}_{k,s}\in\bar{\bI}_{j,t}}\bar{G}_{j,t}^{k,s,u}\bar{\bU}_{k,s}\right).
    \end{align*}
    The second equality in the first line holds because $\tilde{g}_{1:h-1}^{\ast}$ and $\bar{g}_{1:h-1}^{\ast}$ output the same control actions, and the system dynamics of both problems are the same, so the observations are the same before agents take actions at timestep $h$. The equality between the first and the second lines follows from $\tilde{\bU}_{j,t}=\bar{g}_{j,t}^{\ast}(\bar{\bI}_{j,t})$ and substitution of the explicit form of $\bar{g}_{j,t}^{\ast}$. Also, if $\tilde{\bU}_{j,t}\in \tilde{\bI}_{i,h}\backslash\bar{\bI}_{i,h}$, then $\bar{\bI}_{j,t}\subseteq\bar{\bI}_{i,h}$, and any $\bar{\bY}_{k,s}\in \bar{\bI}_{j,t}, \bar{\bU}_{k,s}\in \bar{\bI}_{j,t}$ will also be included in $\bar{\bI}_{i,h}$. Therefore, we have
    \begin{align*}
        \tilde{g}_{i,h}^{\ast}(\tilde{\bI}_{i,h})&=\sum_{\bar{\bY}_{j,t}\in \bar{\bI}_{i,h}}\left(\tilde{G}_{i,h}^{j,t,y}+\sum_{\bar{\bU}_{k,s}\in\tilde{\bI}_{i,h}\backslash\bar{\bI}_{i,h}}\tilde{G}_{i,h}^{k,s,u}\mathds{1}[\bar{\bY}_{j,t}\in \bar{\bI}_{k,s}]\bar{G}_{k,s}^{j,t,y}\right)\bar{\bY}_{j,t}\\
        &\qquad+\sum_{\bar{\bU}_{j,t}\in \bar{\bI}_{i,h}}\left(\tilde{G}_{i,h}^{j,t,u}+\sum_{\bar{\bU}_{k,s}\in\tilde{\bI}_{i,h}\backslash\bar{\bI}_{i,h}}\tilde{G}_{i,h}^{k,s,u}\mathds{1}[\bar{\bU}_{j,t}\in \bar{\bI}_{k,s}]\bar{G}_{k,s}^{j,t,u}\right)\bar{\bU}_{j,t}.
    \end{align*}
    Note that in the equation above, if $\bar{\bY}_{j,t}\notin\bar{\bI}_{k,s}$, then it means that  $\bar{\bU}_{k,s}$ is not based on $\bar{\bY}_{j,t}$, and we can just assign $\bar{G}_{k,s}^{j,t,y}=\mathbf{0}$. Also, if $\bar{\bU}_{j,t}\notin\bar{\bI}_{k,s}$, then we assign $\bar{G}_{k,s}^{j,t,u}=\mathbf{0}$. Therefore, we can construct $\bar{g}_{i,h}^{ \ast}$ for any $i\in[n]$ as
    \begin{align*}
        \bar{g}_{i,h}^{\ast}(\bar{\bI}_{i,h})&=\sum_{\bar{\bY}_{j,t}\in \bar{\bI}_{i,h}}\bar{G}_{i,h}^{j,t,y}\bar{\bY}_{j,t}+\sum_{\bar{\bU}_{j,t}\in \bar{\bI}_{i,h}}\bar{G}_{i,h}^{j,t,u}\bar{\bU}_{j,t},\\
        \forall j\in[n], t\le h, \bar{G}_{i,h}^{j,t,y}&:=\tilde{G}_{i,h}^{j,t,y}+\sum_{\bar{\bU}_{k,s}\in\tilde{\bI}_{i,h}\backslash\bar{\bI}_{i,h}}\tilde{G}_{i,h}^{k,s,u}\mathds{1}[\bar{\bY}_{j,t}\in \bar{\bI}_{k,s}]\bar{G}_{k,s}^{j,t,y},\\
        \forall j\in[n], t\le h, \bar{G}_{i,h}^{j,t,u}&:=\tilde{G}_{i,h}^{j,t,u}+\sum_{\bar{\bU}_{k,s}\in\tilde{\bI}_{i,h}\backslash\bar{\bI}_{i,h}}\tilde{G}_{i,h}^{k,s,u}\mathds{1}[\bar{\bU}_{j,t}\in \bar{\bI}_{k,s}]\bar{G}_{k,s}^{j,t,u}.
    \end{align*}
    Then, $\bar{g}_h^{\ast}$ is linear and outputs the same actions as $\tilde{g}_h^{\ast}$. Recursively, we can construct $\bar{g}_h^{\ast}$ from $h=1$ to $H$ based on $\tilde{g}_{1:H}^\ast$ and the problem $\cD(g_{1:H}^m)$, and we denote this construction by a function $\varphi$, so that $\bar{g}_{1:H}^\ast=\varphi(\tilde{g}_{1:H}^\ast,\cD(g_{1:H}^m))$. 
    Since the dynamics and control costs are the same in the two problems, and $\bar{g}_{1:H}^\ast$ and $\tilde{g}_{1:H}^\ast$ output the same actions, $J_{\tilde{\cD}(g_{1:H}^m)}(\tilde{g}_{1:H}^\ast)=J_{\cD(g_{1:H}^m)}(\bar{g}_{1:H}^\ast)$. Also, we know that
    \begin{align*}
        J_{\tilde{\cD}(g_{1:H}^m)}(\tilde{g}_{1:H}^\ast)=\min_{\tilde{g}_{1:H}\in\tilde{\cG}_{1:H}}J_{\tilde{\cD}(g_{1:H}^m)}(\tilde{g}_{1:H})\le \min_{\bar{g}_{1:H}\in\bar{\cG}_{1:H}}J_{\cD(g_{1:H}^m)}(\bar{g}_{1:H})\le J_{\cD(g_{1:H}^m)}(\bar{g}_{1:H}^\ast).
    \end{align*}
    Therefore, $\bar{g}_{1:H}^\ast$ is a linear optimal strategy of $\cD(g_{1:H}^m)$.
\end{proof}
\subsubsection{Proof of Theorem \ref{thm: expansion}}
\begin{proof}
    The proof consists of three \textbf{Parts}: \textbf{Part 1:}   $\tilde{\cD}(g_{1:H}^m)$ is PN; \textbf{Part 2:}   $\tilde{\cD}(g_{1:H}^m)$  has the information evolution rules; \textbf{Part 3:} $\tilde{\cD}(g_{1:H}^m)$ satisfies the SI-CIB condition. 

    \vspace{6pt}
    \noindent\textbf{Part 1}: We want to prove that $\tilde{\cD}(g_{1:H}^m)$ is PN.\\
    Fix $i_1,i_2\in[n]$ and $h_1<h_2$ in $[H]$, and suppose that $\tilde{\bU}_{i_1,h_1}$ influences $\tilde{\bI}_{i_2,h_2}$. If $\tilde B_{i_1,h_1}=0$, the action has no state effect, while Assumption~\ref{ass: useless action} prevents it from appearing in any later information set, contradicting the assumed influence. Hence $\tilde B_{i_1,h_1}=\bar B_{i_1,h_1}\neq0$. Assumption~\ref{ass: non-degeneracy} gives some $i\neq i_1$ such that $\bar{\bU}_{i_1,h_1}$ influences $\bar{\bY}_{i,h_1+1}$. Since $h_1+1\le h_2$ and information is retained, $\bar{\bU}_{i_1,h_1}$ influences $\bar{\bI}_{i,h_2}$. Theorem~\ref{thm: preserve PN} therefore gives $\bar{\bI}_{i_1,h_1}\subseteq\bar{\bI}_{i,h_2}$. Since $i\neq i_1$, every label private to agent $i_1$ can enter agent $i$'s information only through common information from Assumption \ref{ass: evolution rule}, so it holds $\bar{\bI}_{i_1,h_1}\subseteq\bar{\bC}_{h_2}$. Moreover, every label in $\tilde{\bI}_{i_1,h_1}\setminus\bar{\bI}_{i_1,h_1}$ is common by construction. Thus, we have  $\tilde{\bI}_{i_1,h_1}\subseteq\tilde{\bC}_{h_2}\subseteq\tilde{\bI}_{i_2,h_2}$, proving that $\tilde\cD(g^m_{1:H})$ is PN.

    \vspace{6pt}
    \noindent
    \textbf{Part 2}: We want to prove $\tilde{\cD}(g_{1:H}^m)$  has the information evolution rules. From Theorem \ref{thm: preserve PN}, we know that the problem $\cD(g_{1:H}^m)$ has the information evolution rules with some projection functions $\{\bar{\chi}_h\}_{h\in[H]},\{\bar{\zeta}_{i,h}\}_{i\in[n],h\in[H]}$. 

    For any $i\in[n], h\in[H]$, because the corresponding label sets in $\tilde{\bZ}_h$ and $\tilde{\bP}_{i,h}$ are fixed, it suffices to prove that $\tilde{\bZ}_{h}\subseteq \tilde{\bP}_{h-1}\cup\{\tilde{\bU}_{h-1},\tilde{\bY}_h\}, \tilde{\bP}_{i,h}\subseteq \tilde{\bP}_{i,h-1}\cup\{\tilde{\bU}_{i,h-1},\tilde{\bY}_{i,h}\}$.
    
    Firstly, we show that the common information of $\tilde{\cD}(g_{1:H}^m)$ has the evolution rule. For any $h\in[H]$, consider the $\tilde{\bZ}_h$. If any $\tilde{\bY}_{i,t}\in \tilde{\bZ}_h, i\in[n],t\le h$, then $\tilde{\bY}_{i,t}\in\tilde{\bC}_h$ and $\tilde{\bY}_{i,t}\notin \tilde{\bC}_{h-1}$. Then $\bar{\bY}_{i,t}\in\bar{\bC}_h$ and $\bar{\bY}_{i,t}\notin \bar{\bC}_{h-1}$ since the construction does not change any observation in the information. Therefore, $\bar{\bY}_{i,t}\in \bar{\bZ}_h\subseteq \bar{\bP}_{h-1}\cup\{\bar{\bU}_{h-1},\bar{\bY}_h\}$ due to the evolution rule of problem $\cD(g_{1:H}^m)$; since $\bar{\bY}_{i,t}$ is an observation label, it belongs to $\bar{\bP}_{h-1}\cup\{\bar{\bY}_h\}$, and then $\tilde{\bY}_{i,t}\in \tilde{\bP}_{h-1}\cup \{\tilde{\bY}_h\}$. If any $\tilde{\bU}_{i,t}\in \tilde{\bZ}_h$, then $\bar{B}_{i,t}\neq \mathbf{0}$ from Assumption \ref{ass: useless action} and construction of $\tilde{\cD}(g_{1:H}^m)$. If $t<h-1$, we know $\bar{\bU}_{i,t}$ influences $\bar{\bX}_{t+1}$ and thus influences $\bar{\bY}_{j,t+1}$ for some $j\neq i$ due to Assumption \ref{ass: non-degeneracy}. Partial nestedness gives $\bar{\bI}_{i,t}\subseteq\bar{\bI}_{j,t+1}$. Due to $j\neq i$ and Assumption \ref{ass: evolution rule}, it holds that $\bar{\bI}_{i,t}\subseteq\bar{\bC}_{t+1}$, and the strict expansion therefore places $\tilde{\bU}_{i,t}$ in $\tilde{\bC}_{t+1}$. From $t<h-1$ we know $\tilde{\bU}_{i,t}\notin \tilde{\bZ}_h$, which leads to a contradiction. Hence, if $\tilde{\bU}_{i,t}\in \tilde{\bZ}_h$, it is only possible that $t=h-1$. In conclusion, $\tilde{\bZ}_h\subseteq \tilde{\bP}_{h-1}\cup\{\tilde{\bU}_{h-1},\tilde{\bY}_h\}$.
     
    Secondly, we show that the private information of $\tilde{\cD}(g_{1:H}^m)$ satisfies the evolution rule. For any $i\in[n], h\in[H]$, consider the $\tilde{\bP}_{i,h}$. If any $\tilde{\bY}_{i,t}\in \tilde{\bP}_{i,h},t\le h$, then $\bar{\bY}_{i,t}\in\bar{\bP}_{i,h}$  since the construction does not change any observation in the information. Therefore, $\bar{\bY}_{i,t}\in \bar{\bP}_{i,h-1}\cup \{\bar{\bY}_{i,h}\}$ due to the evolution rule of problem $\cD(g_{1:H}^m)$, and then $\tilde{\bY}_{i,t}\in \tilde{\bP}_{i,h-1}\cup \{\tilde{\bY}_{i,h}\}$. If any $\tilde{\bU}_{i,t}\in \tilde{\bP}_{i,h}$, then $\bar{B}_{i,t}\neq \mathbf{0}$ by Assumption~\ref{ass: useless action}. Then, we know $\bar{\bU}_{i,t}$ influences $\bar{\bX}_{t+1}$ and thus influences $\bar{\bY}_{j,t+1}$ for some $j\neq i$ due to Assumption \ref{ass: non-degeneracy}. Partial nestedness gives $\bar{\bI}_{i,t}\subseteq\bar{\bI}_{j,t+1}$. Due to $j\neq i$ and Assumption \ref{ass: evolution rule}, it holds that $\bar{\bI}_{i,t}\subseteq\bar{\bC}_{t+1}$, and the strict expansion therefore places $\tilde{\bU}_{i,t}$ in $\tilde{\bC}_{t+1}$. Thus, we have  $\tilde{\bU}_{i,t}\notin\tilde{\bP}_{i,h}$, which leads to a contradiction. In conclusion, $\tilde{\bP}_{i,h}\subseteq \tilde{\bP}_{i,h-1}\cup\{\tilde{\bY}_{i,h}\}$.
    
  \vspace{6pt}
\noindent\textbf{Part 3}: $\tilde{\cD}(g_{1:H}^m)$ satisfies the SI-CIB condition. We show this by induction. If $h=1$, then the belief $\tilde{\bB}_1$ does not depend on any strategy, and the SI-CIB condition holds automatically.

For each $h\ge 2$, the space
$\tilde{\cS}_h=\tilde{\cX}\times\tilde{\cP}_{1,h}\times\cdots\times\tilde{\cP}_{n,h}$
is finite-dimensional because the collection of random variables comprising each
$\tilde{\bP}_{i,h}$, $i\in[n]$, is fixed. For any fixed Borel set
$\cQ_h\subseteq\tilde{\cS}_h$ and any fixed common-information realization
$\tilde C_h\in\tilde{\cC}_h$ that can be reached by two control strategies
$\tilde g_{1:h-1}$ and $\tilde g'_{1:h-1}$, define
\[
\begin{aligned}
\PP_1=\PP\big(
\tilde{\bS}_h\in\cQ_h
\given
\tilde{\bC}_h=\tilde C_h,\tilde g_{1:h-1}
\big),~~
\PP_2=\PP\big(
\tilde{\bS}_h\in\cQ_h
\given
\tilde{\bC}_h=\tilde C_h,\tilde g'_{1:h-1}
\big).
\end{aligned}
\]
It suffices to prove that $\PP_1=\PP_2$. 

We first suppose that $\tilde g_{1:h-1}$ and $\tilde g'_{1:h-1}$ differ only at one pair $(i_1,h_1)$, where $i_1\in[n]$ and $h_1<h$.

If $\tilde B_{i_1,h_1}=\bar B_{i_1,h_1}\neq\mathbf{0}$, then
$\bar{\bU}_{i_1,h_1}$ influences $\bar{\bX}_{h_1+1}$ in
$\cD(g_{1:H}^m)$. By Assumption~\ref{ass: non-degeneracy}, there exists
$i_2\neq i_1$ such that $\bar{\bU}_{i_1,h_1}$ influences
$\bar{\bY}_{i_2,h_1+1}$. Since $\cD(g_{1:H}^m)$ has a PN IS, we have
$\bar{\bI}_{i_1,h_1}
\subseteq
\bar{\bI}_{i_2,h_1+1}$.
Due to $i_2\neq i_1$ and Assumption \ref{ass: evolution rule}, we have $
\bar{\bI}_{i_1,h_1}
\subseteq
\bar{\bC}_{h_1+1}$. 
By the construction of $\tilde{\cD}(g_{1:H}^m)$,
\[
\tilde{\bU}_{i_1,h_1}
\in
\tilde{\bC}_{h_1+1}
\subseteq
\tilde{\bC}_h.
\]
Moreover,
\[
\tilde{\bI}_{i_1,h_1}\setminus\bar{\bI}_{i_1,h_1}
\subseteq
\tilde{\bC}_{h_1},
\qquad
\bar{\bI}_{i_1,h_1}
\subseteq
\bar{\bC}_{h_1+1}
\subseteq
\tilde{\bC}_{h_1+1}
\subseteq
\tilde{\bC}_h,
\]
and hence
$\tilde{\bI}_{i_1,h_1}\subseteq\tilde{\bC}_h$.
Conditioning on $\tilde{\bC}_h=\tilde C_h$ therefore fixes both the information at which
$\tilde g_{i_1,h_1}$ and $\tilde g'_{i_1,h_1}$ are evaluated and the resulting action.
Since both strategies reach the same common-information realization, they produce the same recorded action from the same recorded information. With these quantities fixed, the system dynamics and information evolution are identical under the two strategies. Therefore, $\PP_1=\PP_2$.

If $\tilde B_{i_1,h_1}=\bar B_{i_1,h_1}=\mathbf{0}$, then
$\tilde{\bU}_{i_1,h_1}$ does not affect the system dynamics. Moreover,
Assumption~\ref{ass: useless action} prevents
$\bar{\bU}_{i_1,h_1}$ from appearing in any later information set. Since the strict expansion only adds actions with nonzero input coefficients,
$\tilde{\bU}_{i_1,h_1}$ does not appear in
$\tilde{\bC}_{h'}$ or $\tilde{\bP}_{i,h'}$ for any $h'\in[H]$ with
$h'>h_1$ and any $i\in[n]$. Thus, changing
$\tilde g_{i_1,h_1}$ affects neither the state nor any later common or private information. It follows again that $\PP_1=\PP_2$.

Finally, consider any two arbitrary $\tilde g_{1:h-1}$ and $\tilde g'_{1:h-1}$ that can both reach a realization $\tilde{C}_h$. Let $\tilde{C}_{h-1}$ be the common information realization at timestep $h-1$ such that $\tilde{C}_{h-1}\subseteq \tilde{C}_h$, then both $\tilde{g}_{1:h-1}$ and $\tilde{g}_{1:h-1}'$ can reach $\tilde{C}_{h-1}$.
From induction, we know that the belief $\tilde{B}_{h-1}$ generated by $\tilde{C}_{h-1}$ under both $\tilde{g}_{1:h-1}$ and $\tilde{g}_{1:h-1}'$ are the same. Also, let $\tilde{g}_{1:h}''$ be the strategy that replace $\tilde{g}_{1,h}$ by $\tilde{g}_{1,h}'$ in $\tilde{g}_{1:h}$, then $\tilde{g}_{1:h}''$ can also reach $\tilde{C}_h$. 
Therefore, we connect $\tilde{g}_{1:h}'$ and $\tilde{g}_{1,h}$ by finitely many intermediate strategies, replacing one component control law at a time in this way. Consequently, we can use the result above and show that $\PP_1=\PP_2$, which proves the SI-CIB condition.  
\end{proof}

\subsubsection{Proof of Lemma \ref{lemma: SI-conditional-mean-covariance}}
{The proof follows the arguments in Appendices B and C of \cite{CIBLQgames}, with the minor modification needed for possibly singular Gaussian distributions.
\begin{proof}
We first prove Gaussianity: 

At timestep $h=1$, the state, private information, and common information are linear functions of the primitive Gaussian variables. Hence, $(\tilde{\bS}_1,\tilde{\bC}_1)$ is jointly Gaussian, possibly singular, and the conditional belief $\tilde{\bB}_1$ is Gaussian.

Fix $h\geq2$ and suppose that $\tilde{g}_{1:h-1}$ is affine.  Due to the linearity of the system dynamics, the state $\tilde{\bX}_h$, private information $\tilde{\bP}_h$, and common information $\tilde{\bC}_h$ are jointly Gaussian. Consequently, for every common information realization 
$\tilde{C}_h$ that can be reached under $\tilde{g}_{1:h-1}$, the conditional distribution of $\tilde{\bS}_h$ given $\tilde{\bC}_h=\tilde{C}_h$ is Gaussian. This conclusion remains valid when the joint distribution is singular: the conditional-Gaussian formula, with the Moore-Penrose pseudo-inverse in place of the inverse, gives a Gaussian conditional distribution whose covariance does not depend on  $\tilde{C}_h$.

If $\tilde{g}_{1:h-1}$ is not affine,  then for every common information realization 
$\tilde{C}_h$ that can be reached under $\tilde{g}_{1:h-1}$, there exist realizations $\tilde{Y}_{1:h},\tilde{U}_{1:h-1}$ such that $(\tilde{Y}_{1:h},\tilde{U}_{1:h-1}, \tilde{C}_h)$ is reachable under $\tilde{g}_{1:h-1}$. Then, we construct affine strategies $\tilde{g}_{1:h-1}'\in\tilde{\cG}_{1:h-1}$ as: $\forall i\in[n],t<h, \forall \tilde{I}_{i,t}\in\tilde{\cI}_{i,t}, \tilde{g}_{i,t}'(\tilde{I}_{i,t})\equiv\tilde{U}_{i,t}$. Then, $(\tilde{Y}_{1:h},\tilde{U}_{1:h-1}, \tilde{C}_h)$ is also reachable under $\tilde{g}_{1:h-1}'$. Since the CIB belief $\tilde{\bB}_h$ is strategy independent, its value at $\tilde C_h$ is the same under the original and comparison strategies. Thus,  $\tilde{\bB}_h$ is Gaussian under any admissible strategy $\tilde{g}_{1:h-1}$.

We next prove that $\tilde{\bTheta}_h$ and $\tilde{\Sigma}_h$ satisfy the evolution rules. The system dynamics and Equation  \eqref{equ: DecLQG_info_evolution} give the usual fixed CIB-belief update from the preceding belief, the past control laws, and $\tilde{\bZ}_h$. The SI-CIB condition removes the dependence on the control laws, so
\[
 (\tilde{\bTheta}_h,\tilde{\Sigma}_h)
 =
 \tilde F_{h-1}\bigl((\tilde{\bTheta}_{h-1},\tilde{\Sigma}_{h-1}),\tilde{\bZ}_h\bigr),
\]
where $\tilde F_{h-1}$ is a fixed transformation that does not depend on the control strategies.

To determine the covariance component, again take affine control laws. The variables $\tilde{\bS}_h$ and $\tilde{\bC}_h$ are jointly Gaussian, and the conditional covariance of jointly Gaussian variables does not depend on the realized conditioning value, including in the singular case. Similarly to the argument above, for any strategy, we can replace it by a constant affine strategy. Constant control values change only the conditional mean, not the conditional covariance. Hence,
\[
 \tilde{\Sigma}_h=\tilde{\Psi}_h^2(\tilde{\Sigma}_{h-1})
\]
for a fixed deterministic function $\tilde{\Psi}_h^2$.

Under the same affine control laws, $\EE[\tilde{\bS}_h\given\tilde{\bC}_h]$ is an affine function of $\tilde{\bC}_h$, and $\EE[\tilde{\bS}_{h-1}\given\tilde{\bC}_{h-1}]$ is an affine function of $\tilde{\bC}_{h-1}$. Combining these facts with the preceding fixed belief update gives
\[
 \tilde{\bTheta}_h=\tilde{\Psi}_h^1(\tilde{\bTheta}_{h-1},\tilde{\bZ}_h)
\]
for a fixed affine function $\tilde{\Psi}_h^1$. Since the dynamics and observation equations are homogeneous and the relevant known controls are included in $\tilde{\bZ}_h$, the constant term is zero, so $\tilde{\Psi}_h^1$ is linear. Neither update depends on the control strategies, which completes the proof.
\end{proof}}

\subsubsection{Proof of Theorem \ref{thm: virtual MDP and equivalence.}}
\begin{proof}
    \noindent\textbf{Part 1:} For $h\in[H-1]$, given $(\tilde\Theta_h,\tilde\gamma_h)$, draw $\tilde{\bS}_h\sim\cN(\tilde\Theta_h,\tilde\Sigma_h)$ and the independent Gaussian noises and then apply the dynamics, $\tilde\chi_{h+1}$, and Lemma~\ref{lemma: SI-conditional-mean-covariance}. The law of $\tilde\bTheta_{h+1}$ therefore depends only on $(\tilde\Theta_h,\tilde\gamma_h)$. At $h=H$, the final-state cost has already been included in $c_H^\ddag$. Hence,  $\tilde\cD^\ddag(g_{1:H}^m)$ is Markov.
        
        \vspace{6pt}
    \noindent\textbf{Part 2:} We aim to prove that for any strategy $\tilde{g}_{1:H}\in\tilde{\cG}_{1:H}, J_{\tilde{\cD}(g_{1:H}^m)}(\tilde{g}_{1:H})=J_{\tilde{\cD}^\ddag(g_{1:H}^m)}(g^\ddag_{1:H})$, where $g_h^\ddag=\varsigma_h(\tilde{g}_h), \forall h\in[H]$. For every $h\in[H-1]$,
        \begin{align}
    &\EE[c_h^\ddag\given g_{1:H}^\ddag]=\EE\Big[\EE[c_h^\ddag(\tilde{\bTheta}_h,g^\ddag_h(\tilde{\bC}_h))\given \tilde{\bC}_h, \tilde{\bTheta}_h=\tilde{\Psi}_h^3(\tilde{\bC}_h)]\Biggiven g_{1:h}^\ddag\Big]\label{line: virtual problem l1}\\
    &\quad=\EE\left[\EE\left[\tilde{\bX}_h^\top\tilde{Q}_h^1\tilde{\bX}_h+\begin{bmatrix}
        \tilde{\gamma}_{1,h}(\tilde{\bP}_{1,h})^\top&
        \cdots&
        \tilde{\gamma}_{n,h}(\tilde{\bP}_{n,h})^\top
    \end{bmatrix}\tilde{Q}_h^2\begin{bmatrix}
        \tilde{\gamma}_{1,h}(\tilde{\bP}_{1,h})\\
        \cdots\\
        \tilde{\gamma}_{n,h}(\tilde{\bP}_{n,h})
    \end{bmatrix}\Bigggiven \tilde{\bC}_h,\tilde{\gamma}_h=g_h^\ddag(\tilde{\bC}_h)\right]\Bigggiven g_{1:h}^\ddag\right]\label{line: virtual problem l2}\\
    &\quad=\EE\left[\EE\left[\tilde{\bX}_h^\top\tilde{Q}_h^1\tilde{\bX}_h+\begin{bmatrix}
        \tilde{\gamma}_{1,h}(\tilde{\bP}_{1,h})^\top&
        \cdots&
        \tilde{\gamma}_{n,h}(\tilde{\bP}_{n,h})^\top
    \end{bmatrix}\tilde{Q}_h^2\begin{bmatrix}
        \tilde{\gamma}_{1,h}(\tilde{\bP}_{1,h})\\
        \cdots\\
        \tilde{\gamma}_{n,h}(\tilde{\bP}_{n,h})
    \end{bmatrix}\Bigggiven \tilde{\bC}_h,\tilde{\gamma}_h(\cdot)=\tilde{g}_h(\tilde{\bC}_h,\cdot)\right]\Bigggiven g_{1:h}^\ddag\right]\label{line: virtual problem l3}\\
    &\quad=\EE\left[\EE\left[\tilde{\bX}_h^\top\tilde{Q}_h^1\tilde{\bX}_h+\begin{bmatrix}
        \tilde{\gamma}_{1,h}(\tilde{\bP}_{1,h})^\top&
        \cdots&
        \tilde{\gamma}_{n,h}(\tilde{\bP}_{n,h})^\top
    \end{bmatrix}\tilde{Q}_h^2\begin{bmatrix}
        \tilde{\gamma}_{1,h}(\tilde{\bP}_{1,h})\\
        \cdots\\
        \tilde{\gamma}_{n,h}(\tilde{\bP}_{n,h})
    \end{bmatrix}\Bigggiven  \tilde{\bC}_h,\tilde{\gamma}_h(\cdot)=\tilde{g}_h(\tilde{\bC}_h,\cdot)\right]\Bigggiven \tilde{g}_{1:h}\right]\label{line: virtual problem l4}\\
    &\quad=\EE[\tilde{\bX}_h^\top\tilde{Q}_h^1\tilde{\bX}_h+\tilde{\bU}_h^\top\tilde{Q}_h^2\tilde{\bU}_h\given \tilde{g}_{1:H}].\label{line: virtual problem l5}
\end{align}

Equation  \eqref{line: virtual problem l1} and Equation  \eqref{line: virtual problem l5} are due to the tower property; Equation  \eqref{line: virtual problem l2} is from the definition of $c_h^\ddag$; Equation   \eqref{line: virtual problem l3} is from the definition of $\varsigma_h$; Equation   \eqref{line: virtual problem l4} is because $\tilde{\bC}_h$ has the same distribution under both strategies $\tilde{g}_{1:h}$ and $g_{1:h}^\ddag$, i.e., for any Borel set $\tilde{\sC}_h\subseteq \tilde{\cC}_h$, it holds that
\begin{align*}
    &\PP(\tilde{\bC}_h\in \tilde{\sC}_h\given g_{1:h}^\ddag)=\EE[\mathds{1}[\tilde{\bC}_h\in \tilde{\sC}_h]\given g_{1:h}^\ddag]=\EE\left[\EE\Big[\mathds{1}[\tilde{\bC}_h\in \tilde{\sC}_h]\given \tilde{\bC}_{h-1},\tilde{\gamma}_{h-1}=g_{h-1}^\ddag(\tilde{\bC}_{h-1})\Big]\given g_{1:h-1}^\ddag\right]\\
    &=\EE\left[\EE\Big[\cdots\EE\Big[\EE\Big[\mathds{1}[\tilde{\bC}_h\in \tilde{\sC}_h]\Biggiven \tilde{\bC}_{h-1},\tilde{\gamma}_{h-1}=g_{h-1}^\ddag(\tilde{\bC}_{h-1})\Big]\Biggiven \tilde{\bC}_{h-2},\tilde{\gamma}_{h-2}=g_{h-2}^\ddag(\tilde{\bC}_{h-2})\Big]\cdots\Biggiven \tilde{\bC}_1,\tilde{\gamma}_1=g_1^\ddag(\tilde{\bC}_1)\Big]\right]\\
    &=\EE\left[\EE\Big[\cdots\EE\Big[\EE\Big[\mathds{1}[\tilde{\bC}_h\in \tilde{\sC}_h]\given \tilde{\bC}_{h-1},\tilde{\gamma}_{h-1}(\cdot)=\tilde{g}_{h-1}(\tilde{\bC}_{h-1},\cdot)\Big]\given \tilde{\bC}_{h-2},\tilde{\gamma}_{h-2}(\cdot)=\tilde{g}_{h-2}(\tilde{\bC}_{h-2},\cdot)\Big]\cdots\Biggiven  \tilde{\bC}_1,\tilde{\gamma}_1(\cdot)=\tilde{g}_1(\tilde{\bC}_1,\cdot)\Big]\right]\\
    &=\EE\left[\mathds{1}[\tilde{\bC}_h\in \tilde{\sC}_h]\Biggiven \tilde{g}_{1:h}\right]=\PP(\tilde{\bC}_h\in \tilde{\sC}_h\given \tilde{g}_{1:h}).
\end{align*}
The same calculation at $h=H$, now using the definition of $c_H^\ddag$, gives
\[
\EE[c_H^\ddag\given g_{1:H}^\ddag]
=\EE[\tilde{\bX}_H^\top\tilde Q_H^1\tilde{\bX}_H+\tilde{\bU}_H^\top\tilde Q_H^2\tilde{\bU}_H+\tilde{\bX}_{H+1}^\top\tilde Q_{H+1}^1\tilde{\bX}_{H+1}\given\tilde g_{1:H}].
\]
Therefore,
\[
J_{\tilde{\cD}^\ddag(g_{1:H}^m)}(g_{1:H}^\ddag)
=\EE\left[\sum_{h=1}^H c_h^\ddag\given g_{1:H}^\ddag\right]
=J_{\tilde{\cD}(g_{1:H}^m)}(\tilde g_{1:H}),
\]
which completes the proof.
\end{proof}

\subsubsection{Proof of Lemma \ref{thm: linear in M}}
\begin{proof}
        For convenience, given any matrices $\{\bar{\cE}_{i,h}\}_{i\in[n],h\in[H]}$ and $\{\tilde{\cF}_{i,h}\}_{i\in[n],h\in[H]}$ with appropriate dimensions, we denote by $\tilde{g}_{1:H}=\Lambda(\{\bar{\cE}_{i,h}\}_{i\in[n],h\in[H]},\{\tilde{\cF}_{i,h}\}_{i\in[n],h\in[H]})$ 
        the linear strategy generated by $\{\bar{\cE}_{i,h}\}_{i\in[n],h\in[H]},\{\tilde{\cF}_{i,h}\}_{i\in[n],h\in[H]}$, namely for any $i\in[n],h\in[H], \tilde{g}_{i,h}(\tilde{\bC}_h,\tilde{\bP}_{i,h})=\bar{\cE}_{i,h}\tilde{\bC}_h+\tilde{\cF}_{i,h}\tilde{\bP}_{i,h}$.
        {Theorem \ref{thm: expansion} and Corollary~\ref{cor:degenerate-PN-linearity} give an optimal linear strategy $\tilde g_{1:H}^\ast=\Lambda(\{\bar{\cE}_{i,h}^\ast\},\{\tilde{\cF}_{i,h}^\ast\})$.}
        If we fix $\{\tilde{\cF}_{i,h}^\ast\}_{i\in[n],h\in[H]}$, then $\{\bar{\cE}_{i,h}^\ast\}_{i\in[n],h\in[H]}$ minimizes $J_{\tilde{\cD}(g_{1:H}^m)}$ over $\{\bar{\cE}_{i,h}\}_{i\in[n],h\in[H]}$ with this private-information component fixed, i.e.,
        \begin{align*}
            \{\bar{\cE}_{i,h}^\ast\}_{i\in[n],h\in[H]}\in \argmin_{\{\bar{\cE}_{i,h}\}_{i\in[n],h\in[H]}}J_{\tilde{\cD}(g_{1:H}^m)}\left(\Lambda(\{\bar{\cE}_{i,h}\}_{i\in[n],h\in[H]},\{\tilde{\cF}_{i,h}^\ast\}_{i\in[n],h\in[H]})\right).
        \end{align*}
        From Theorem \ref{thm: expansion}, we know that for any $h\in[H]$
        \begin{align*}
        \tilde{\bZ}_h=\tilde{\chi}_h(\tilde{\bP}_{h-1},\tilde{\bU}_{h-1}, \tilde{\bY}_h), \qquad \tilde{\bP}_h=\tilde{\zeta}_h(\tilde{\bP}_{h-1},\tilde{\bU}_{h-1}, \tilde{\bY}_h),
        \end{align*}
        where $\tilde{\chi}_h,\tilde{\zeta}_h$ are fixed projection functions. Therefore, there exist matrices such that
        \begin{align*}
        \tilde{\bZ}_h=\tilde{\cT}_h^1\tilde{\bP}_{h-1}+\tilde{\cT}_h^2\tilde{\bU}_{h-1}+\tilde{\cT}_h^3\tilde{\bY}_h, \qquad \tilde{\bP}_h=\tilde{\cL}_h^1\tilde{\bP}_{h-1}+\tilde{\cL}_h^2\tilde{\bU}_{h-1}+\tilde{\cL}_h^3\tilde{\bY}_h,
        \end{align*}
        for some matrices $\tilde{\cT}_h^1,\tilde{\cT}_h^2,\tilde{\cT}_h^3,\tilde{\cL}_h^1,\tilde{\cL}_h^2,\tilde{\cL}_h^3$.

        As shown in \cite{AA}, if we fix $\{\tilde{\cF}_{i,h}^\ast\}_{i\in[n],h\in[H]}$, we can construct an $H$-step centralized linear control problem $\Breve{\cD}$ based on $\tilde{\cD}(g_{1:H}^m)$, where $\Breve{\cD}$ is an LQG system with linear dynamics, linear
        observations, quadratic costs, and Gaussian noises.   For completeness, we include  the construction below.

        Specifically, we can construct the centralized LQG problem $\Breve{\cD}$  
        based on $\tilde{\cD}(g_{1:H}^m)$ and fixed $\{\tilde{\cF}_{i,h}^\ast\}_{i\in[n],h\in[H]}$, where the state $\Breve{\bX}_h\in \Breve{\cX}_h$, the action $\Breve{\bU}_h\in \Breve{\cU}_h$, and the observation $\Breve{\bY}_h\in \Breve{\cY}_h$ evolve  as follows:
        \begin{align*}
            &\forall h\in[H],~ \Breve{\cX}_h=\tilde{\cX}\times \tilde{\cP}_h, ~~~\Breve{\cU}_h=\tilde{\cU}_h, ~~~\Breve{\cY}_h=\tilde{\cZ}_h, ~~~\Breve{\bX}_{h+1}=\Breve{A}_h\Breve{\bX}_h+\Breve{B}_h\Breve{\bU}_h+\Breve{\bW}_{h},\\
            &\qquad\qquad\qquad\Breve{\bY}_h=\Breve{E}_h\Breve{\bX}_{h-1}+\Breve{F}_h\Breve{\bU}_{h-1}+\Breve{\bV}_{h}~~ (\text{for }h>1),~~ \Breve{\bY}_1=\Breve{\bV}_1,\\
            &\Breve{\bX}_{H+1}=\tilde{\bX}_{H+1},~~~
             \Breve{\bX}_h=\begin{bmatrix}
                 \tilde{\bX}_h\\
                 \tilde{\bP}_h
             \end{bmatrix}, ~~~\Breve{\bU}_h=\tilde{\bU}_h-\diag(\tilde{\cF}_{1,h}^\ast,\cdots,\tilde{\cF}_{n,h}^\ast)\tilde{\bP}_h, ~~~\Breve{\bY}_h=\tilde{\bZ}_h.
        \end{align*}
        Defining $\tilde{\cF}_h^\ast=\diag(\tilde{\cF}_{1,h}^\ast,\cdots,\tilde{\cF}_{n,h}^\ast)$, the matrices $\Breve{A}_h,\Breve{B}_h,\Breve{E}_h, \Breve{F}_h$ and noises $\Breve{\bW}_h,\Breve{\bV}_h$ are given by
        \begin{align*}
            &\forall h\in[H-1], \Breve{A}_h=\begin{bmatrix}
                \tilde{A}_h&\tilde{B}_h\tilde{\cF}_h^\ast\\
                \tilde{\cL}_{h+1}^3\tilde{E}_{h+1}\tilde{A}_h&\tilde{\cL}_{h+1}^1+\tilde{\cL}_{h+1}^2\tilde{\cF}_h^\ast+\tilde{\cL}_{h+1}^3\tilde{E}_{h+1}\tilde{B}_h\tilde{\cF}_h^\ast
                \end{bmatrix}, 
                \Breve{B}_h=\begin{bmatrix}
                    \tilde{B}_h\\\tilde{\cL}_{h+1}^2+\tilde{\cL}_{h+1}^3\tilde{E}_{h+1}\tilde{B}_h
                \end{bmatrix};\\
            &\forall h\in[2:H], \Breve{E}_h=\begin{bmatrix}
                    \tilde{\cT}_h^3\tilde{E}_h\tilde{A}_{h-1}&\tilde{\cT}_{h}^1+\tilde{\cT}_h^2\tilde{\cF}_{h-1}^\ast+\tilde{\cT}_h^3\tilde{E}_h\tilde{B}_{h-1}\tilde{\cF}_{h-1}^\ast
                \end{bmatrix}, \Breve{F}_h=\begin{bmatrix}
                    \tilde{\cT}_h^2+\tilde{\cT}_h^3\tilde{E}_h\tilde{B}_{h-1}
                \end{bmatrix};\\
            &\forall h\in[H-1],\Breve{\bW}_h=\begin{bmatrix}
                    \tilde{\bW}_{0,h}\\\tilde{\cL}_{h+1}^3(\tilde{E}_{h+1}\tilde{\bW}_{0,h}+\tilde{\bW}_{1:n,h+1})
                \end{bmatrix},\Breve{\bV}_{h+1}=\tilde{\cT}_{h+1}^3(\tilde{E}_{h+1}\tilde{\bW}_{0,h}+\tilde{\bW}_{1:n,h+1});\\
                &\Breve{A}_H=\begin{bmatrix}
                \tilde{A}_H&\tilde{B}_H\tilde{\cF}_H^\ast
                \end{bmatrix}, 
                \Breve{B}_H=\tilde{B}_H, \Breve{\bW}_H=\tilde{\bW}_{0,H}, \Breve{\bV}_1=\tilde{\cT}_1^3(\tilde{E}_1\tilde{\bX}_1+\tilde{\bW}_{1:n,1}), 
        \end{align*}
        where we recall that $\forall h\in[H-1],\tilde{\bW}_{1:n,h+1}=\begin{bmatrix}
                    \tilde{\bW}_{1,h+1}\\\cdots\\\tilde{\bW}_{n,h+1}
                \end{bmatrix}$.  The cost of each stage $h\in[H]$ is $\Breve{c}_h=\begin{bmatrix}
            \Breve{\bX}_h^\top&\Breve{\bU}_h^\top
        \end{bmatrix}^\top\begin{bmatrix}
            \Breve{Q}_h^1&\Breve{N}_h\\\Breve{N}_h^\top&\Breve{Q}_h^2
        \end{bmatrix}\begin{bmatrix}
            \Breve{\bX}_h\\\Breve{\bU}_h
        \end{bmatrix}$ and $\Breve{c}_{H+1}=\Breve{\bX}_{H+1}^\top\Breve{Q}^1_{H+1}\Breve{\bX}_{H+1}$, where the matrices $\Breve{Q}_h^1,\Breve{Q}_h^2,\Breve{N}_h, \Breve{Q}^1_{H+1}$ are defined as
        \begin{align*}
            \forall h\in[H], \Breve{Q}_h^1=\begin{bmatrix}
                \tilde{Q}_h^1&0\\0&(\tilde{\cF}_h^{\ast})^{\top}\tilde{Q}_h^2\tilde{\cF}_h^\ast
            \end{bmatrix}, \Breve{Q}_h^2=\tilde{Q}_h^2, \Breve{N}_h=\begin{bmatrix}
                0\\(\tilde{\cF}_h^{\ast})^{\top}\tilde{Q}_h^2
            \end{bmatrix}, \Breve{Q}^1_{H+1}=\tilde{Q}^1_{H+1}.
        \end{align*}
        The strategy of $\Breve{\cD}$ at each stage $h\in[H]$ is defined by $\Breve{g}_h:\prod_{t=1}^h\Breve{\cY}_t\rightarrow \Breve{\cU}_h$ and the objective is defined as $J_{\Breve{\cD}}(\Breve{g}_{1:H})=\EE[\sum_{h=1}^{H+1}\Breve{c}_h\given \Breve{g}_{1:H}]$. Thus, an optimal strategy satisfies $\Breve{g}_{1:H}^\ast\in \argmin_{\Breve{g}_{1:H}}J_{\Breve{\cD}}(\Breve{g}_{1:H})$.
        
        Applying the Bellman projection argument below recursively to this centralized construction shows that an optimal strategy can be chosen linear even when the induced noises are correlated and singular and the quadratic weights are semidefinite. Under this strategy, the aggregate variables of $\Breve{\cD}$ are jointly Gaussian, possibly singular, so the Moore-Penrose conditional-Gaussian formula gives matrices $\Breve K_h$ such that $\hat{\Breve{\bX}}_h=\EE[\Breve{\bX}_h\given\Breve{\bY}_{1:h}]=\Breve K_h\Breve{\bY}_{1:h}$. Hence there exist matrices $\Breve{\cE}_h^\ast$ such that $\Breve g_h^\ast(\Breve{\bY}_{1:h})=\Breve{\cE}_h^\ast\hat{\Breve{\bX}}_h=\Breve{\cE}_h^\ast\Breve K_h\Breve{\bY}_{1:h}$ for every $h\in[H]$. Also, one can verify  that for any linear strategy $\Breve{g}_{1:H}$ with the form $\Breve{g}_h(\Breve{\bY}_{1:h})=\Breve{\cE}_h\Breve{\bY}_{1:h}, \forall h\in[H]$ and $\Breve{\cE}_h=\begin{bmatrix}
            \Breve{\cE}_{1,h}\\\cdots\\\Breve{\cE}_{n,h}
        \end{bmatrix}$, we can construct a linear strategy of $\tilde{\cD}(g_{1:H}^m)$ as $\tilde{g}_{1:H}:=\Lambda(\{\Breve{\cE}_{i,h}\}_{i\in[n],h\in[H]},\{\tilde{\cF}_{i,h}^\ast\}_{i\in[n],h\in[H]})$ such that $J_{\Breve{\cD}}(\Breve{g}_{1:H})=J_{\tilde{\cD}(g_{1:H}^m)}(\tilde{g}_{1:H})$, and vice versa. Therefore, $\tilde{g}_{1:H}^\ast:=\Lambda(\{\Breve{\cE}_{i,h}^\ast\Breve K_h\}_{i\in[n],h\in[H]},\{\tilde{\cF}_{i,h}^\ast\}_{i\in[n],h\in[H]})$ is an optimal strategy of $\tilde{\cD}(g_{1:H}^m)$, where $\Breve{\cE}_h^\ast=\begin{bmatrix}
          \Breve{\cE}_{1,h}^\ast\\\cdots\\\Breve{\cE}_{n,h}^\ast 
        \end{bmatrix}$ for any $h\in[H]$.

        From the construction of $\Breve{\cD}$, we know that $ \Breve{\bX}_h=\begin{bmatrix}
                 \tilde{\bX}_h\\
                 \tilde{\bP}_h
             \end{bmatrix}, \Breve{\bY}_h=\tilde{\bZ}_h$ for any $h\in[H]$, so $\hat{\Breve{\bX}}_h=\EE\left[\Breve{\bX}_h\given \Breve{\bY}_{1:h}\right]=\EE\left[\begin{bmatrix}
                 \tilde{\bX}_h\\
                 \tilde{\bP}_h
             \end{bmatrix}\Bigggiven \tilde{\bZ}_{1:h}\right]=\begin{bmatrix}
                 \EE[\tilde{\bX}_h\given \tilde{\bC}_h]\\
                 \EE[\tilde{\bP}_h\given \tilde{\bC}_h]
             \end{bmatrix}=\tilde{\bTheta}_h$, and $\tilde{\Psi}_h^3(\tilde{\bC}_h)=\Breve{K}_h\tilde{\bC}_h$.
	             The centralized construction proves global optimality.  The stronger pointwise Bellman assertion follows by backward induction.  Under the appendix terminal convention, $\tilde V_{H+1}(\tilde\Theta)=\tilde\Theta^\top\tilde Q_{H+1}^1\tilde\Theta$.  Suppose $\tilde V_{h+1}$ is quadratic with positive-semidefinite quadratic part, fix a reachable mean $\tilde\Theta_h=\theta$, and write $\tilde\bS_h=\theta+\tilde\omega_h$, where $\tilde\omega_h\sim\cN(0,\tilde\Sigma_h)$.  For an arbitrary admissible prescription, let $\delta\tilde\bP_{i,h}:=\tilde\bP_{i,h}-\EE[\tilde\bP_{i,h}\mid\tilde\bTheta_h=\theta]$ and take the $L^2$ projection of $\tilde\bU_{i,h}$ onto the affine functions of $\delta\tilde\bP_{i,h}$
	             \[
	             \tilde\bU_{i,h}=a_{i,h}+\tilde\cF_{i,h}\delta\tilde\bP_{i,h}+r_{i,h},\qquad \EE[r_{i,h}]=0,\quad \EE[r_{i,h}\delta\tilde\bP_{i,h}^\top]=0.
	             \]
	             Since $(\tilde\omega_h,\delta\tilde\bP_{i,h})$ is jointly Gaussian, possibly singular, its conditional mean is linear on its support.  Hence $\EE[r_{i,h}\tilde\omega_h^\top]=0$, and therefore $r_{i,h}$ is also orthogonal to every $\delta\tilde\bP_{j,h}$; it is orthogonal to the fresh independent noises as well.  The Bellman objective is a positive-semidefinite quadratic in the state, controls, and fresh noises.  Expanding it makes all terms linear in the stacked projection residual $r_h$ vanish, while the remaining residual term is nonnegative.  Thus an affine prescription dominates every admissible prescription at this fixed mean.
	             Stacking the affine projections gives $\tilde\bU_h=a_h+\tilde\cF_h\II_{p,h}\tilde\omega_h$ with block-diagonal $\tilde\cF_h$.  Its Bellman objective separates into a mean quadratic in $a_h$ and a centered quadratic in $\tilde\cF_h$.  Both are bounded-below finite-dimensional convex quadratics, so their generalized normal equations are consistent.  The centered problem is independent of $\theta$.  Writing the mean-dependent part as $a_h^\top \mathsf G_h a_h+2a_h^\top \mathsf C_h\theta+\theta^\top \mathsf D_h\theta$, the mean quadratic is positive semidefinite in $(a_h,\theta)$, so its block matrix is positive semidefinite. Hence $\ker(\mathsf G_h)\subseteq\ker(\mathsf C_h^\top)$, equivalently $\operatorname{range}(\mathsf C_h)\subseteq\operatorname{range}(\mathsf G_h)$; therefore $a_h^\ast=-\mathsf G_h^\dagger \mathsf C_h\theta=: \hat\cE_h^\ast\theta$.  Thus $\tilde\cF_h^\ast$ and $\hat\cE_h^\ast$ minimize simultaneously for every reachable mean.  Since $\tilde\bP_h=\II_{p,h}(\theta+\tilde\omega_h)$, set $\tilde\cE_h^\ast:=\hat\cE_h^\ast-\tilde\cF_h^\ast\II_{p,h}$.  The same prescription minimizes at every CIB mean reachable under a preceding strategy, and its linear formula defines the strategy elsewhere.  Its quadratic value has a positive-semidefinite quadratic part, so backward induction yields
	             $\tilde g_{i,h}^\ast(\tilde C_h,\tilde P_{i,h})=\tilde\cE_{i,h}^\ast\tilde\Psi_h^3(\tilde C_h)+\tilde\cF_{i,h}^\ast\tilde P_{i,h}$,
	             simultaneously linear, CIB-Markovian, globally optimal, and Bellman optimal over all prescriptions at every such reachable mean.
        \end{proof}

\subsubsection{Proof of Theorem \ref{thm: dp_in_decLQG}}
\begin{proof}
        
        The proof consists of two \textbf{Parts}.

         \vspace{3pt}
    \noindent
        \textbf{Part 1:} we aim to show the evolution of $\tilde{\bTheta}_h$.
        
        For every $h\in[H]$, we define $\overline{\bX}_h:=\tilde{\bS}_h=\begin{bmatrix}
            \tilde{\bX}_h\\
           \tilde{\bP}_h \end{bmatrix},\overline{\bY}_h:=\tilde{\bZ}_h$, and
           \begin{align*}
               \overline{\bU}_h:=\tilde{\bU}_h^e=\begin{bmatrix}
                   \mathds{1}[\tilde{B}_{1,h}\neq \mathbf{0}]\tilde{\bU}_{1,h}\\
                   \mathds{1}[\tilde{B}_{2,h}\neq \mathbf{0}]\tilde{\bU}_{2,h}\\
                   \cdots\\
                   \mathds{1}[\tilde{B}_{n,h}\neq \mathbf{0}]\tilde{\bU}_{n,h}
               \end{bmatrix},
           \end{align*}
           where $\tilde{\bU}_h^e$ is defined by replacing the $\tilde{\bU}_{i,h}$ part of $\tilde{\bU}_h$ by $\mathbf{0}$ for all $i\in[n]$ such that $\tilde{B}_{i,h}=\mathbf{0}$. We claim that such a system is a linear system with the evolution rules given by
           \begin{align*}
               &\overline{\bX}_{h+1}=\overline{A}_h\overline{\bX}_h+\overline{B}_h\overline{\bU}_h+\overline{\bW}_h,\qquad h\in[H-1],\\
               &\overline{\bY}_{t}=\overline{E}_t\overline{\bX}_{t-1}+\overline{F}_t\overline{\bU}_{t-1}+\overline{\bV}_t,\qquad t\in[2:H],
           \end{align*}
           for some matrices $\overline{A}_h,\overline{B}_h,\overline{E}_h,\overline{F}_h$ defined below.
           
           Firstly, from the system evolution, we know that 
           \begin{align*}
               \forall h\in[H],\quad  \tilde{\bX}_{h+1}=\tilde{A}_h\tilde{\bX}_h+\tilde{B}_h\tilde{\bU}_h+\tilde{\bW}_{0,h},       \tilde{\bY}_{h}=\tilde{E}_h\tilde{\bX}_h+\tilde{\bW}_{1:n,h},
               \tilde{E}_h=\begin{bmatrix}
                   \tilde{E}_{1,h}\\\cdots\\\tilde{E}_{n,h}
               \end{bmatrix},\tilde{\bW}_{1:n,h}=\begin{bmatrix}
                   \tilde{\bW}_{1,h}\\\cdots\\
                   \tilde{\bW}_{n,h}
               \end{bmatrix}. 
           \end{align*}
           Also, from Theorem \ref{thm: expansion}, we know that $\tilde{\bZ}_h,\tilde{\bP}_h$ evolve as
        \begin{align*}
        \tilde{\bZ}_h=\tilde{\cT}_h^1\tilde{\bP}_{h-1}+\tilde{\cT}_h^2\tilde{\bU}_{h-1}+\tilde{\cT}_h^3\tilde{\bY}_h, \qquad \tilde{\bP}_h=\tilde{\cL}_h^1\tilde{\bP}_{h-1}+\tilde{\cL}_h^2\tilde{\bU}_{h-1}+\tilde{\cL}_h^3\tilde{\bY}_h,
        \end{align*}
        for some matrices $\{\tilde{\cT}_h^k\}_{k\in[3]}$ and $\{\tilde{\cL}_h^k\}_{k\in[3]}$. 
           Then, we can show that the system dynamics are defined for $h\in[H-1]$ as
           \begin{align*}
               \overline{A}_h:=\begin{bmatrix}
                   \tilde{A}_h&0\\
                   \tilde{\cL}_{h+1}^3\tilde{E}_{h+1}\tilde{A}_h&\tilde{\cL}_{h+1}^1
               \end{bmatrix},~
               \overline{B}_h:=\begin{bmatrix}
                    \tilde{B}_h\\\tilde{\cL}_{h+1}^2+\tilde{\cL}_{h+1}^3\tilde{E}_{h+1}\tilde{B}_h
                \end{bmatrix},\qquad
            \overline{\bW}_h=\begin{bmatrix}
                    \tilde{\bW}_{0,h}\\\tilde{\cL}_{h+1}^3(\tilde{E}_{h+1}\tilde{\bW}_{0,h}+\tilde{\bW}_{1:n,h+1})
                \end{bmatrix}.
           \end{align*}
           The observation quantities are defined for $t\in[2:H]$ by
           \begin{align*}
                \overline{E}_t:=\begin{bmatrix}
                    \tilde{\cT}_t^3\tilde{E}_t\tilde{A}_{t-1}&\tilde{\cT}_{t}^1
                \end{bmatrix},\quad
                \overline{F}_t:=\tilde{\cT}_t^2+\tilde{\cT}_t^3\tilde{E}_t\tilde{B}_{t-1},\quad
                \overline{\bV}_t:=\tilde{\cT}_{t}^3(\tilde{E}_{t}\tilde{\bW}_{0,t-1}+\tilde{\bW}_{1:n,t}).
           \end{align*}
           
\refstepcounter{footnote}
Note that based on Assumption \ref{ass: useless action}, one can verify that if $\tilde{B}_{i,h}=\mathbf{0}$, then $\tilde{\cT}_{i,h+1}^2=\mathbf{0},\tilde{\cL}_{i,h+1}^2=\mathbf{0}$, where $\tilde{\cT}_{i,h+1}^2$ and $\tilde{\cL}_{i,h+1}^2$ are the corresponding agent $i$'s column blocks of $\tilde{\cT}_{h+1}^2,\tilde{\cL}_{h+1}^2$. 
            Therefore, it holds that for any $h\in[H-1]$, $\overline{B}_h\tilde{\bU}_h=\overline{B}_h\tilde{\bU}_h^e, ~~\overline{F}_{h+1}\tilde{\bU}_h=\overline{F}_{h+1}\tilde{\bU}_h^e$.

            Defining $\overline{\bTheta}_h:=\EE[\overline{\bX}_h\given \overline{\bY}_{1:h},\overline{\bU}_{1:h-1}]=\EE[\tilde{\bS}_h\given \tilde{\bZ}_{1:h},\tilde{\bU}_{1:h-1}]$, the standard linear-system formulas give for $h\in[H-1]$
           \begin{align*}
               &\overline{\bTheta}_{h+1}=\overline{A}_h\overline{\bTheta}_h+\overline{B}_h\overline{\bU}_h+(\overline{A}_h\overline{\Sigma}^p_h\overline{E}_{h+1}^\top+\overline{S}_h)[\overline{E}_{h+1}\overline{\Sigma}^p_h\overline{E}_{h+1}^\top+\Sigma_{h+1}^v]^\dag(\overline{\bY}_{h+1}-\overline{E}_{h+1}\overline{\bTheta}_h-\overline{F}_{h+1}\overline{\bU}_h),\text{\footnotemark[\value{footnote}]}\\
               &\qquad\textbf{with~}\overline{\Sigma}^p_{h+1}:=\overline{A}_h\overline{\Sigma}^p_h\overline{A}_h^\top+\Sigma_h^w-(\overline{A}_h\overline{\Sigma}^p_h\overline{E}_{h+1}^\top+\overline{S}_h)[\overline{E}_{h+1}\overline{\Sigma}^p_h\overline{E}_{h+1}^\top+\Sigma_{h+1}^v]^\dag(\overline{E}_{h+1}\overline{\Sigma}^p_h\overline{A}_h^\top+\overline{S}_h^\top),
               \\
               &\Omega_1:=\diag(\Sigma_1,\tilde\Sigma_{1,1},\ldots,\tilde\Sigma_{n,1}),\quad
               M_1:=\begin{bmatrix}I&0\\ \tilde{\cL}_1^3\tilde E_1&\tilde{\cL}_1^3\end{bmatrix},\quad
               N_1:=\begin{bmatrix}\tilde{\cT}_1^3\tilde E_1&\tilde{\cT}_1^3\end{bmatrix},\\
               &\overline{\bTheta}_1=M_1\Omega_1N_1^\top(N_1\Omega_1N_1^\top)^\dag\tilde{\bZ}_1,\qquad
               \overline\Sigma_1^p=M_1\Omega_1M_1^\top-M_1\Omega_1N_1^\top(N_1\Omega_1N_1^\top)^\dag N_1\Omega_1M_1^\top,\\
               &\Sigma_t^y:=\begin{cases}\diag(\tilde\Sigma_{1,t},\ldots,\tilde\Sigma_{n,t}),&t\in[H],\\0_{d_x\times d_x},&t=H+1,\end{cases}\\
               &\overline{S}_h:=\text{cov}(\overline{\bW}_h,\overline{\bV}_{h+1})=\begin{bmatrix}
        \tilde{\Sigma}_{0,h} \tilde{E}_{h+1}^\top\\
\tilde{\mathcal{L}}_{h+1}^3(\tilde{E}_{h+1}\tilde{\Sigma}_{0,h} \tilde{E}_{h+1}^\top
+
\Sigma_{h+1}^y)
		    \end{bmatrix}(\tilde{\cT}_{h+1}^3)^\top,\qquad h\in[H-1],\\
               &\Sigma_t^v:=\text{cov}(\overline{\bV}_t)=\tilde{\cT}_t^3(\tilde{E}_t\tilde{\Sigma}_{0,t-1}\tilde{E}_t^\top+\diag(\tilde{\Sigma}_{1,t},\cdots,\tilde{\Sigma}_{n,t}))(\tilde{\cT}_t^3)^\top,\qquad t\in[2:H],\\
               &\Sigma_h^w:=\text{cov}(\overline{\bW}_h)=\begin{bmatrix}
                   \tilde{\Sigma}_{0,h}&\tilde{\Sigma}_{0,h}\tilde{E}_{h+1}^\top (\tilde{\cL}_{h+1}^3)^\top\\
                   \tilde{\cL}_{h+1}^3\tilde{E}_{h+1}\tilde{\Sigma}_{0,h}& \tilde{\cL}_{h+1}^3(\tilde{E}_{h+1}\tilde{\Sigma}_{0,h}\tilde{E}_{h+1}^\top+\Sigma_{h+1}^y)(\tilde{\cL}_{h+1}^3)^\top
               \end{bmatrix},\qquad h\in[H-1].
           \end{align*}

\footnotetext[\value{footnote}]{Let $\Omega_{h+1}:=\overline E_{h+1}\overline\Sigma_h^p\overline E_{h+1}^\top+\Sigma_{h+1}^v$. Positivity of the joint innovation/state-error covariance implies $\ker(\Omega_{h+1})\subseteq\ker(\overline A_h\overline\Sigma_h^p\overline E_{h+1}^\top+\overline S_h)$. Hence, these Moore-Penrose gains give the canonical singular Kalman update \cite{kalman-filter-pseudoinverse}.}

            For $h\in[H-1]$, strict partial nestedness implies that if $\tilde B_{i,h}\neq0$, then both $\tilde\bI_{i,h}$ and $\tilde\bU_{i,h}$ are contained in $\tilde\bC_{h+1}$. Hence a projection $\tilde{\mathrm{Prj}}_h^{z,u}$ satisfies $\tilde\bU_h^e=\tilde{\mathrm{Prj}}_h^{z,u}\tilde\bZ_{h+1}$ for $h<H$.  For every such effective action, its information is also revealed at $h+1$; after deleting redundant action coordinates and subtracting their known contribution, the remaining signal is a linear Gaussian observation with noise independent of the past. Thus the known-input Kalman formula below does not treat a private, state-correlated action as exogenous. At $h=H$, the synthetic terminal observation is $\tilde\bY_{H+1}=\tilde\bX_{H+1}$, so directly set $K_{H+1}^1=K_{H+1}^2=K_{H+1}^3=0$ and $K_{H+1}^4=I$; no known-input Kalman argument is used.

           Finally, for $h\in[H-1]$, we can write
            \begin{equation}
            \label{eq: kalman-filter-theta}
            \overline{\bTheta}_{h+1}=\overline{A}_h\overline{\bTheta}_h+\overline{B}_h\tilde{\mathrm{Prj}}_h^{z,u}\overline{\bY}_{h+1}+(\overline{A}_h\overline{\Sigma}^p_h\overline{E}_{h+1}^\top+\overline{S}_h)[\overline{E}_{h+1}\overline{\Sigma}^p_h\overline{E}_{h+1}^\top+\Sigma_{h+1}^v]^\dag(\overline{\bY}_{h+1}-\overline{E}_{h+1}\overline{\bTheta}_h-\overline{F}_{h+1}\tilde{\mathrm{Prj}}_h^{z,u}\overline{\bY}_{h+1}).
            \end{equation}
            Based on Equation  \eqref{eq: kalman-filter-theta}, we know that $\overline{\bTheta}_h$ is a linear combination of $\overline{\bY}_{1:h}=\tilde{\bZ}_{1:h}=\tilde{\bC}_h$. Then, we have $\overline{\bTheta}_h=\EE[\overline{\bX}_h\given \overline{\bY}_{1:h},\overline{\bU}_{1:h-1}]=\EE[\overline{\bX}_h\given \overline{\bY}_{1:h}]=\EE[\tilde{\bS}_h\given \tilde{\bC}_h]=\tilde{\bTheta}_h$. Furthermore, we obtain $\tilde{\Sigma}_h=\overline{\Sigma}^p_h$. Therefore, we can conclude
            \begin{equation}
            \label{eq: Final-kalman-filter}
            \begin{aligned}
                &\tilde{\bTheta}_{h+1}=\overline{A}_h\tilde{\bTheta}_h+\overline{B}_h\tilde{\mathrm{Prj}}_h^{z,u}\tilde{\bZ}_{h+1}+(\overline{A}_h\overline{\Sigma}^p_h\overline{E}_{h+1}^\top+\overline{S}_h)[\overline{E}_{h+1}\overline{\Sigma}^p_h\overline{E}_{h+1}^\top+\Sigma_{h+1}^v]^\dag(\tilde{\bZ}_{h+1}-\overline{E}_{h+1}\tilde{\bTheta}_h-\overline{F}_{h+1}\tilde{\mathrm{Prj}}_h^{z,u}\tilde{\bZ}_{h+1}),\\
                &\tilde{\Sigma}_{h+1}=\overline{A}_h\tilde{\Sigma}_h\overline{A}_h^\top+\Sigma_h^w-(\overline{A}_h\tilde{\Sigma}_h\overline{E}_{h+1}^\top+\overline{S}_h)[\overline{E}_{h+1}\tilde{\Sigma}_h\overline{E}_{h+1}^\top+\Sigma_{h+1}^v]^\dag(\overline{E}_{h+1}\tilde{\Sigma}_h\overline{A}_h^\top+\overline{S}_h^\top).                
            \end{aligned}
            \end{equation}

        Note that for $h<H$, Equation  \eqref{eq: Final-kalman-filter} gives precisely the functions $\tilde\Psi_{h+1}^1,\tilde\Psi_{h+1}^2$ in Lemma \ref{lemma: SI-conditional-mean-covariance}.
        
        For $h<H$, from Theorem \ref{thm: expansion}, we know that $\tilde{\bZ}_{h+1}=\tilde{\chi}_{h+1}(\tilde{\bP}_h,\tilde{\bU}_h,\tilde{\bY}_{h+1})$ for some projection function $\tilde{\chi}_{h+1}$. Then, we can replace
        $\tilde{\bZ}_{h+1}$ by a linear combination of $\tilde{\bP}_h,\tilde{\bU}_h,\tilde{\bY}_{h+1}$ and then obtain $\{K_{h+1}^k\}_{k\in[4]}$ directly. The terminal matrices are the direct values stated above. Substituting those terminal values gives $\tilde L_H^1=\tilde L_H^3=\tilde B_H$, $\tilde L_H^2=\tilde L_H^4=\tilde A_H\II_{x,H}$, and $\tilde\Upsilon_H=\Sigma_{0,H}$, as stated in Theorem~\ref{thm: dp_in_decLQG}. For $h<H$, the two fresh-noise trace terms below combine as $\tr(\tilde\Upsilon_h\tilde R_{h+1})$.

        \vspace{3pt}
    \noindent
        \textbf{Part 2:} we aim to show that the value function $\tilde{V}_h(\tilde{\bTheta}_h)$ has a quadratic form, and the value function and the matrices $\tilde{\cE}_h^\ast,\tilde{\cF}_{h}^\ast$ of the optimal control strategy can be computed via Riccati Equations.

        We prove this by induction.  Under the appendix terminal convention, $\tilde{\bTheta}_{H+1}=\tilde{\bX}_{H+1}$, $\tilde\Sigma_{H+1}=0$, and $\II_{x,H+1}=I$. Thus $\tilde V_{H+1}(\tilde\bTheta_{H+1})=\tilde\bTheta_{H+1}^{\top}\tilde Q_{H+1}^1\tilde\bTheta_{H+1}$, so $\tilde R_{H+1}=\tilde Q_{H+1}^1$ and $\tilde c_{H+1}=0$. At $h=H$, this is exactly the final-state term already included in $c_H^\ddag$ in the main text. For the calculation below, we equivalently decompose $c_H^\ddag$ into the ordinary stage-$H$ cost and the auxiliary terminal value $\tilde V_{H+1}(\tilde\bX_{H+1})$.
        Now, for any given $h\in[H]$, we assume that $\tilde{V}_{h+1}(\tilde{\bTheta}_{h+1})=\tilde{\bTheta}_{h+1}^\top \tilde{R}_{h+1}\tilde{\bTheta}_{h+1}+\tilde{c}_{h+1}$ for some $\tilde{R}_{h+1}\succeq0$ and $\tilde{c}_{h+1}$.  From Lemma \ref{lemma: SI-conditional-mean-covariance}, we know that the CIB belief $\tilde{\bB}_h$ is a Gaussian distribution $\cN(\tilde{\bTheta}_h,\tilde{\Sigma}_h)$, so we can write $\tilde{\bS}_h=\tilde{\bTheta}_h+\tilde{\omega}_h$, where $\tilde{\omega}_h\sim \cN(\mathbf{0},\tilde{\Sigma}_h)$. From the definition of $\tilde{\bS}_h$, we can write $\tilde{\bX}_h=\II_{x,h} \tilde{\bS}_h, \tilde{\bP}_h=\II_{p,h}\tilde{\bS}_h$ for the projection matrices $\II_{x,h}$ and $\II_{p,h}$.  Also, for any $i\in[n]$, since the action of agent $i$ has the form $\tilde{\bU}_{i,h}=\tilde{\cE}_{i,h}\tilde{\bTheta}_h+\tilde{\cF}_{i,h}\tilde{\bP}_{i,h}$, we can write the joint control action as $\tilde{\bU}_h=\tilde{\cE}_h\tilde{\bTheta}_h+\tilde{\cF}_h\tilde{\bP}_h$ with $\tilde{\cE}_{h}:=\begin{bmatrix}
        \tilde{\cE}_{1,h}\\\cdots\\\tilde{\cE}_{n,h}\end{bmatrix}$ and $\tilde{\cF}_h:=\diag(\tilde{\cF}_{1,h},\cdots, \tilde{\cF}_{n,h})$. Therefore, $\tilde{\bU}_h=\tilde{\cE}_h\tilde{\bTheta}_h+\tilde{\cF}_h\II_{p,h}(\tilde{\bTheta}_h+\tilde{\omega}_h)=(\tilde{\cE}_h+\tilde{\cF}_h\II_{p,h})\tilde{\bTheta}_h+\tilde{\cF}_h\II_{p,h}\tilde{\omega}_h$. Now, we define $\hat{\cE}_h:=\tilde{\cE}_h+\tilde{\cF}_h\II_{p,h}$, and then optimizing $(\tilde{\cE}_h,\{\tilde{\cF}_{i,h}\}_{i\in[n]})$ is equivalent to optimizing $(\hat{\cE}_h,\{\tilde{\cF}_{i,h}\}_{i\in[n]})$ with respect to the corresponding spaces. This is because for any $(\tilde{\cE}_h,\{\tilde{\cF}_{i,h}\}_{i\in[n]})$, we can find the corresponding $(\hat{\cE}_h,\{\tilde{\cF}_{i,h}\}_{i\in[n]})$, and vice versa. Then, we can write the stage-cost as
        \begin{align*}
            &\min_{\tilde{\cE}_h,\{\tilde{\cF}_{i,h}\}_{i\in[n]}}\EE[\tilde{\bX}_h^\top \tilde{Q}_h^1\tilde{\bX}_h+\tilde{\bU}_h^\top \tilde{Q}_h^2\tilde{\bU}_h\given \tilde{\bTheta}_h=\tilde{\Theta}_h,  \tilde{\cE}_h,\{\tilde{\cF}_{i,h}\}_{i\in[n]}]\\
            &\qquad=\min_{\hat{\cE}_h,\{\tilde{\cF}_{i,h}\}_{i\in[n]}}\EE[\tilde{\bX}_h^\top \tilde{Q}_h^1\tilde{\bX}_h+\tilde{\bU}_h^\top \tilde{Q}_h^2\tilde{\bU}_h\given \tilde{\bTheta}_h=\tilde{\Theta}_h, \hat{\cE}_h,\tilde{\cF}_{h}]\\
            &\qquad=\min_{\hat{\cE}_h,\{\tilde{\cF}_{i,h}\}_{i\in[n]}}\tilde{\Theta}_h^\top\II_{x,h}^\top\tilde{Q}_h^1\II_{x,h}\tilde{\Theta}_h+\tilde{\Theta}_h^\top \hat{\cE}_h^\top \tilde{Q}_h^2\hat{\cE}_h\tilde{\Theta}_h+\tr(\II_{x,h}\tilde{\Sigma}_h\II_{x,h}^\top \tilde{Q}_h^1)+\tr(\tilde{\cF}_h\II_{p,h}\tilde{\Sigma}_h\II_{p,h}^\top \tilde{\cF}_h^\top\tilde{Q}_h^2).
        \end{align*}
        For $h<H$, Theorem \ref{thm: expansion} gives the information evolution $
        \tilde{\bZ}_{h+1}=\tilde{\chi}_{h+1}(\tilde{\bP}_h,\tilde{\bU}_h,\tilde{\bY}_{h+1})$, and Lemma \ref{lemma: SI-conditional-mean-covariance} gives $\tilde{\bTheta}_{h+1}=\tilde{\Psi}_{h+1}^1(\tilde{\bTheta}_h,\tilde{\bZ}_{h+1})$. Then, we can write $\tilde{\bTheta}_{h+1}=\tilde{\Psi}_{h+1}^4(\tilde{\bTheta}_h,\tilde{\bP}_h,\tilde{\bU}_h,\tilde{\bY}_{h+1})$ for some linear function $\tilde{\Psi}_{h+1}^4$. Therefore, we can write $\tilde{\bTheta}_{h+1}=K_{h+1}^1\tilde{\bTheta}_h+K_{h+1}^2\tilde{\bP}_h+K_{h+1}^3\tilde{\bU}_h+K_{h+1}^4\tilde{\bY}_{h+1}$ for some matrices $K_{h+1}^1,K_{h+1}^2,K_{h+1}^3,K_{h+1}^4$. {At $h=H$, the same expression follows from the appendix terminal convention and the terminal matrices stated above.} Also, we know that $\tilde{\bP}_h=\II_{p,h}(\tilde{\bTheta}_h+\tilde{\omega}_h), \tilde{\bU}_h=(\tilde{\cE}_h+\tilde{\cF}_h\II_{p,h})\tilde{\bTheta}_h+\tilde{\cF}_h\II_{p,h}\tilde{\omega}_h, \tilde{\bY}_{h+1}=\tilde{E}_{h+1}\tilde{\bX}_{h+1}+\tilde{\bW}_{h+1}=\tilde{E}_{h+1}(\tilde{A}_h\tilde{\bX}_{h}+\tilde{B}_h\tilde{\bU}_h+\tilde{\bW}_{0,h})+\tilde{\bW}_{1:n,h+1}$, where $\tilde{E}_{h+1}=\begin{bmatrix}
            \tilde{E}_{1,h+1}\\\cdots\\\tilde{E}_{n,h+1}
        \end{bmatrix}, \tilde{\bW}_{1:n,h+1}=\begin{bmatrix}
            \tilde{\bW}_{1,h+1}\\\cdots\\\tilde{\bW}_{n,h+1}
        \end{bmatrix}$. Therefore, we can write $\tilde{\bTheta}_{h+1}$ as
        \begin{align*}
        \tilde{\bTheta}_{h+1}&=K_{h+1}^1\tilde{\bTheta}_h+K_{h+1}^2\II_{p,h}(\tilde{\bTheta}_h+\tilde{\omega}_h)+K_{h+1}^3(\hat{\cE}_h\tilde{\bTheta}_h+\tilde{\cF}_h\II_{p,h}\tilde{\omega}_h)\\
        &\qquad+K_{h+1}^4(\tilde{E}_{h+1}(\tilde{A}_h\II_{x,h}(\tilde{\bTheta}_h+\tilde{\omega}_h)+\tilde{B}_h(\hat{\cE}_h\tilde{\bTheta}_h+\tilde{\cF}_h\II_{p,h}\tilde{\omega}_h)+\tilde{\bW}_{0,h})+\tilde{\bW}_{1:n,h+1}),\\
        &=(K_{h+1}^1+K_{h+1}^2\II_{p,h}+K_{h+1}^3\hat{\cE}_h+K_{h+1}^4\tilde{E}_{h+1}\tilde{A}_h\II_{x,h}+K_{h+1}^4\tilde{E}_{h+1}\tilde{B}_h\hat{\cE}_h)\tilde{\bTheta}_h+K_{h+1}^4\tilde{\bW}_{1:n,h+1}\\
        &\qquad+(K_{h+1}^2\II_{p,h}+K_{h+1}^3\tilde{\cF}_h\II_{p,h}+K_{h+1}^4\tilde{E}_{h+1}(\tilde{A}_h\II_{x,h}+\tilde{B}_h\tilde{\cF}_h\II_{p,h}))\tilde{\omega}_h+K_{h+1}^4\tilde{E}_{h+1}\tilde{\bW}_{0,h}.
        \end{align*}
        Since $\tilde{\omega}_h$ depends only on primitive randomness through time $h$, it is independent of  $\tilde{\bW}_{0,h}$
        and $\tilde{\bW}_{1:n,h+1}$. Then, we can define the parameters as
        \begin{align*}
            \tilde{L}_h^1&:=K_{h+1}^3+K_{h+1}^4\tilde{E}_{h+1}\tilde{B}_h,~~
            \tilde{L}_h^2:=K_{h+1}^1+K_{h+1}^2\II_{p,h}+K_{h+1}^4\tilde{E}_{h+1}\tilde{A}_h\II_{x,h},\\
            \tilde{L}_h^3&:=K_{h+1}^3+K_{h+1}^4\tilde{E}_{h+1}\tilde{B}_h,~~
            \tilde{L}_h^4:=K_{h+1}^2\II_{p,h}+K_{h+1}^4\tilde{E}_{h+1}\tilde{A}_h\II_{x,h},
        \end{align*}
        and have
        \begin{align*}
            \tilde{\bTheta}_{h+1}=(\tilde{L}_h^1\hat{\cE}_h+\tilde{L}_h^2)\tilde{\bTheta}_h+(\tilde{L}_h^3\tilde{\cF}_h\II_{p,h}+\tilde{L}_h^4)\tilde{\omega}_h+K_{h+1}^4\tilde{E}_{h+1}\tilde{\bW}_{0,h}+K_{h+1}^4\tilde{\bW}_{1:n,h+1}.
        \end{align*}
        Thus, from the induction hypothesis, we have
        \begin{align*}
            &\EE[\tilde{V}_{h+1}(\tilde{\bTheta}_{h+1})\given \tilde{\bTheta}_h,\tilde{\cE}_h,\tilde{\cF}_h]=
            \EE[\tilde{V}_{h+1}(\tilde{\bTheta}_{h+1})\given \tilde{\bTheta}_h,\hat{\cE}_h,\tilde{\cF}_h]=
            \EE[\tilde{\bTheta}_{h+1}^\top \tilde{R}_{h+1}\tilde{\bTheta}_{h+1}+\tilde{c}_{h+1}\given \tilde{\bTheta}_h,\tilde{\cE}_h,\tilde{\cF}_h]\\
            &\qquad =\tilde{\bTheta}_h^\top (\tilde{L}_h^1\hat{\cE}_h+\tilde{L}_h^2)^\top \tilde{R}_{h+1}(\tilde{L}_h^1\hat{\cE}_h+\tilde{L}_h^2)\tilde{\bTheta}_h+\tr((\tilde{L}_h^3\tilde{\cF}_h\II_{p,h}+\tilde{L}_h^4)\tilde{\Sigma}_h(\tilde{L}_h^3\tilde{\cF}_h\II_{p,h}+\tilde{L}_h^4)^\top\tilde{R}_{h+1})\\
            &\qquad\qquad+\tr(K_{h+1}^4\tilde{E}_{h+1}\Sigma_{0,h}\tilde{E}_{h+1}^\top (K_{h+1}^4)^\top\tilde{R}_{h+1})+\tr(K_{h+1}^4\Sigma_{h+1}^y(K_{h+1}^4)^\top\tilde{R}_{h+1})+\tilde{c}_{h+1}.
        \end{align*}
        If we define the following quantities: 
        \begin{align*}
            &J_h^1(\tilde{\bTheta}_h):=\tilde{\bTheta}_h^\top(\II_{x,h}^\top\tilde{Q}_h^1\II_{x,h}+\hat{\cE}_h^\top\tilde{Q}_h^2\hat{\cE}_h+(\tilde{L}_h^1\hat{\cE}_h+\tilde{L}_h^2)^\top \tilde{R}_{h+1}(\tilde{L}_h^1\hat{\cE}_h+\tilde{L}_h^2))\tilde{\bTheta}_h,\\
            &J_h^2:=\tr(\tilde{\cF}_h\II_{p,h}\tilde{\Sigma}_h\II_{p,h}^\top \tilde{\cF}_h^\top\tilde{Q}_h^2)+\tr((\tilde{L}_h^3\tilde{\cF}_h\II_{p,h}+\tilde{L}_h^4)\tilde{\Sigma}_h(\tilde{L}_h^3\tilde{\cF}_h\II_{p,h}+\tilde{L}_h^4)^\top\tilde{R}_{h+1}),\\
            &J_h^3:=\tr(\II_{x,h}\tilde{\Sigma}_h\II_{x,h}^\top \tilde{Q}_h^1)+\tr(K_{h+1}^4\tilde{E}_{h+1}\Sigma_{0,h}\tilde{E}_{h+1}^\top (K_{h+1}^4)^\top\tilde{R}_{h+1})+\tr(K_{h+1}^4\Sigma_{h+1}^y(K_{h+1}^4)^\top\tilde{R}_{h+1})+\tilde{c}_{h+1},
        \end{align*}
        then we can write $\EE[\tilde{\bX}_h^\top \tilde{Q}_h^1\tilde{\bX}_h+\tilde{\bU}_h^\top \tilde{Q}_h^2\tilde{\bU}_h+\tilde{V}_{h+1}(\tilde{\bTheta}_{h+1})\given \tilde{\bTheta}_h, \tilde{\cE}_h,\tilde{\cF}_h]=J_h^1(\tilde{\bTheta}_h)+J_h^2+J_h^3$. 
        Note that $J_h^1(\tilde{\bTheta}_h)$ only depends on $\tilde{\bTheta}_h$ and $\hat{\cE}_h$, $J_h^2$ only depends on $\tilde{\cF}_h$, and $J_h^3$ is just a constant.
        Both minimizations below admit solutions.  Indeed, the coefficient matrix
        $G_h:=\tilde Q_h^2+(\tilde L_h^1)^\top\tilde R_{h+1}\tilde L_h^1$ is positive semidefinite and Lemma~\ref{lemma: kernel_inclusion} gives
        $\operatorname{range}((\tilde L_h^1)^\top\tilde R_{h+1}\tilde L_h^2)\subseteq\operatorname{range}(G_h)$. Hence the canonical least-squares gain $-G_h^\dagger(\tilde L_h^1)^\top\tilde R_{h+1}\tilde L_h^2$ minimizes the first quadratic for every mean vector simultaneously.  The private-gain objective is a bounded-below finite-dimensional convex quadratic, so its normal equations are consistent and it also has a minimizer.
        Therefore, if $\hat{\cE}_h^\ast,\tilde{\cF}_h^\ast$ satisfy   
        \begin{align*}
            \hat{\cE}_h^\ast\in\argmin_{\hat{\cE}_h}J_h^1(\tilde{\bTheta}_h),~~ \forall \tilde{\bTheta}_h\in \tilde{\cS}_h,\qquad \tilde{\cF}_h^\ast\in\argmin_{\tilde{\cF}_h}J_h^2,
        \end{align*}
        then $\tilde{\cE}^\ast_h:=\hat{\cE}_h^\ast-\tilde{\cF}_h^\ast\II_{p,h}$ and $\tilde{\cF}_h^\ast$ will satisfy 
        \begin{align*}
            (\tilde{\cE}_h^\ast,\tilde{\cF}_h^\ast)\in\argmin_{\tilde{\cE}_h,\tilde{\cF}_h}\EE[\tilde{\bX}_h^\top \tilde{Q}_h^1\tilde{\bX}_h+\tilde{\bU}_h^\top \tilde{Q}_h^2\tilde{\bU}_h+\tilde{V}_{h+1}(\tilde{\bTheta}_{h+1})\given \tilde{\bTheta}_h, \tilde{\cE}_h,\tilde{\cF}_h].
        \end{align*}
        Firstly, we intend to minimize $J_h^1(\tilde{\bTheta}_h)$ and find an optimal $\hat{\cE}_h^\ast$.
        Since $\tilde{Q}_h^1,\tilde{Q}_h^2,\tilde{R}_{h+1}\succeq 0$, we can write 
        \begin{align*}
            J_h^1(\tilde{\bTheta}_h)=\norm{(\tilde{Q}_h^1)^\frac{1}{2}\II_{x,h}\tilde{\bTheta}_h}_2^2+\norm{(\tilde{Q}_h^2)^\frac{1}{2}\hat{\cE}_h\tilde{\bTheta}_h}_2^2+\norm{\tilde{R}_{h+1}^\frac{1}{2}(\tilde{L}_h^1\hat{\cE}_h+\tilde{L}_h^2)\tilde{\bTheta}_h}_2^2,
        \end{align*}
        which is a convex function of $\hat{\cE}_h$ given $\tilde{\bTheta}_h$. Therefore, we can compute the optimal $\hat{\cE}_h$ via the first-order condition.
        Taking the gradient of $J_h^1(\tilde{\bTheta}_h)$ with respect to $\hat{\cE}_h$, we obtain
        \begin{align*}
            \nabla_{\hat{\cE}_h}J_h^1(\tilde{\bTheta}_h)=(2(\tilde{Q}_h^2+(\tilde{L}_h^1)^\top\tilde{R}_{h+1}\tilde{L}_h^1)\hat{\cE}_h+2(\tilde{L}_h^1)^\top\tilde{R}_{h+1}\tilde{L}_h^2)\tilde{\bTheta}_h\tilde{\bTheta}_h^{\top}.
        \end{align*}
        Since it holds  that $\ker{(\tilde{Q}_h^2+(\tilde{L}_h^1)^\top\tilde{R}_{h+1}\tilde{L}_h^1)}\subseteq \ker{((\tilde{L}_h^2)^\top\tilde{R}_{h+1}\tilde{L}_h^1)}$ due to Lemma \ref{lemma: kernel_inclusion}, and $\tilde{Q}_h^2+(\tilde{L}_h^1)^\top\tilde{R}_{h+1}\tilde{L}_h^1\succeq 0$, we know that this gradient is zero if we choose
        \begin{align*}
            \hat{\cE}_h^\ast=-(\tilde{Q}_h^2+(\tilde{L}_h^1)^\top\tilde{R}_{h+1}\tilde{L}_h^1)^\dag(\tilde{L}_h^1)^\top\tilde{R}_{h+1}\tilde{L}_h^2. 
        \end{align*}
        Then, we know that such a  $\hat{\cE}_h^\ast$ minimizes $J_h^1(\tilde{\bTheta}_h)$, and we can compute
        \begin{align*}
            \tilde{R}_h=\II_{x,h}^\top\tilde{Q}_h^1\II_{x,h}+(\hat{\cE}_h^{\ast})^{\top}\tilde{Q}_h^2\hat{\cE}_h^\ast+(\tilde{L}_h^1\hat{\cE}_h^\ast+\tilde{L}_h^2)^\top \tilde{R}_{h+1}(\tilde{L}_h^1\hat{\cE}_h^\ast+\tilde{L}_h^2).
        \end{align*}
        This is a sum of positive-semidefinite Gram terms, so $\tilde R_h\succeq0$, closing the induction.
Note that the use of pseudo-inverse here 
precisely corresponds to 
the  proof of \emph{generalized}  Riccati Equations in \cite{psedoinverse1,psedoinverse2}.

        Secondly, we intend to minimize $J_h^2$ and find an optimal $\tilde{\cF}_h^\ast$. Hence, the optimal $\tilde{\cF}_h^\ast$ can be found via the following optimization problem
        \begin{align}
        \tilde{\cF}_{h}^\ast&\in\argmin_{\tilde{\cF}_h=\diag(\tilde{\cF}_{1,h},\cdots,\tilde{\cF}_{n,h})}\Big\{\tr(\tilde{\cF}_h\II_{p,h}\tilde{\Sigma}_h\II_{p,h}^\top \tilde{\cF}_h^\top\tilde{Q}_h^2)+\tr((\tilde{L}_h^3\tilde{\cF}_h\II_{p,h}+\tilde{L}_h^4)\tilde{\Sigma}_h(\tilde{L}_h^3\tilde{\cF}_h\II_{p,h}+\tilde{L}_h^4)^\top\tilde{R}_{h+1})\Big\}. 
        \label{eq: minimization of private part}
        \end{align}
        Then, plugging in the optimal $\tilde{\cF}_h^\ast$, we can get 
        \begin{align*}
            \tilde{c}_h&=
            \tr(\tilde{\cF}_h^\ast\II_{p,h}\tilde{\Sigma}_h\II_{p,h}^\top (\tilde{\cF}_h^\ast)^\top\tilde{Q}_h^2)+\tr((\tilde{L}_h^3\tilde{\cF}_h^\ast\II_{p,h}+\tilde{L}_h^4)\tilde{\Sigma}_h(\tilde{L}_h^3\tilde{\cF}_h^\ast\II_{p,h}+\tilde{L}_h^4)^\top\tilde{R}_{h+1})
            +\tr(K_{h+1}^4\tilde{E}_{h+1}\Sigma_{0,h}\tilde{E}_{h+1}^\top (K_{h+1}^4)^\top\tilde{R}_{h+1})\\
            &\qquad+\tr(K_{h+1}^4\Sigma_{h+1}^y(K_{h+1}^4)^\top\tilde{R}_{h+1})+\tr(\II_{x,h}\tilde{\Sigma}_h\II_{x,h}^\top \tilde{Q}_h^1)+\tilde{c}_{h+1},
        \end{align*}
        
        which completes \textbf{Part 2}.

        Combining \textbf{Part 1} and \textbf{Part 2}, we complete the proof.
        
        \end{proof}
        ~\par

        \noindent\textbf{Solution of Equation  \eqref{eq: minimization of private part}:} From the fact that $\tilde{\Sigma}_h,\tilde{Q}_h^2,\tilde{R}_{h+1}\succeq 0$, we can write
        \begin{align*}
            J_h^2=\norm{(\tilde{Q}_h^2)^{\frac{1}{2}}\tilde{\cF}_h\II_{p,h}(\tilde{\Sigma}_h)^{\frac{1}{2}}}_{F}^2+\norm{(\tilde{R}_{h+1})^{\frac{1}{2}}(\tilde{L}_h^3\tilde{\cF}_h\II_{p,h}+\tilde{L}_h^4)(\tilde{\Sigma}_h)^{\frac{1}{2}}}_{F}^2,
        \end{align*} 
        where $\norm{X}_F$ denotes the Frobenius norm of $X$ for any matrix $X$.
        Therefore, we know that $J_h^2$ is a convex function with respect to $\tilde{\cF}_h$, and the optimal $\tilde{\cF}_h^\ast$ can be computed via the first-order condition. 
        
        By taking the gradient of $J_h^2$ with respect to $\tilde{\cF}_h$, we obtain
        \begin{align*}
            \nabla_{\tilde{\cF}_h}J_h^2=2(\tilde{Q}_h^2\tilde{\cF}_h\II_{p,h}\tilde{\Sigma}_h\II_{p,h}^\top+(\tilde{L}_h^3)^\top\tilde{R}_{h+1}\tilde{L}_h^3\tilde{\cF}_h\II_{p,h}\tilde{\Sigma}_h\II_{p,h}^\top+(\tilde{L}_h^3)^\top\tilde{R}_{h+1}\tilde{L}_h^4\tilde{\Sigma}_h\II_{p,h}^\top). 
        \end{align*}
        Then, we partition $\nabla_{\tilde{\cF}_h}J_h^2$ into $n\times n$ blocks and set its diagonal blocks to zero, namely, let $[\nabla_{\tilde{\cF}_h}J_h^2]_{i,i}=\mathbf{0}$ for all $i\in[n]$. Here the block partitions follow their natural row and column variables: $[\tilde Q_h^2]_{i,k}$ and $[T_h^1]_{i,k}$ have dimensions $\dim(\tilde\cU_i)\times\dim(\tilde\cU_k)$, $[\hat\Sigma_h]_{k,i}$ has dimension $\dim(\tilde\cP_k)\times\dim(\tilde\cP_i)$, and $[T_h^2]_{i,i}$ and $[\nabla_{\tilde\cF_h}J_h^2]_{i,i}$ have dimensions $\dim(\tilde\cU_i)\times\dim(\tilde\cP_i)$. We define $\hat\Sigma_h:=\II_{p,h}\tilde\Sigma_h\II_{p,h}^\top$, $T_h^1:=(\tilde L_h^3)^\top\tilde R_{h+1}\tilde L_h^3$, and $T_h^2:=(\tilde L_h^3)^\top\tilde R_{h+1}\tilde L_h^4\tilde\Sigma_h\II_{p,h}^\top$, where $\hat\Sigma_h$ is the covariance matrix of $\tilde\bP_h$. Then, for any $i\in[n]$, it holds
        \begin{align*}
            [\nabla_{\tilde{\cF}_h}J_h^2]_{i,i}=2\left(\sum_{k=1}^n\big([\tilde{Q}_h^2]_{i,k}+[T_h^1]_{i,k}\big)\tilde{\cF}_{k,h}[\hat{\Sigma}_h]_{k,i}+[T_h^2]_{i,i}\right),
        \end{align*}
        and we can obtain the optimal $\tilde{\cF}_h$ by solving the {equations} $$\sum_{k=1}^n\big([\tilde{Q}_h^2]_{i,k}+[T_h^1]_{i,k}\big)\tilde{\cF}_{k,h}[\hat{\Sigma}_h]_{k,i}+[T_h^2]_{i,i}=\mathbf{0}.
        $$

\section{Deferred Details of \S \ref{sec: closed-loop}}\label{appendix:closed_loop}
For the backward recursion in this appendix only, we use a cost-free post-decision terminal convention:
\[
\bC_{(H+1)^-}:=\bC_{H^+}\cup\{\bU_H,\bX_{H+1}\},\qquad \bP_{(H+1)^-}:=\emptyset,
\]
and the same convention with every quantity tilded for the strict expansion. Thus $\tilde{\bB}_{(H+1)^-}=\delta_{\tilde{\bX}_{H+1}}$, $\tilde{\bTheta}_{(H+1)^-}=\tilde{\bX}_{H+1}$, and $\tilde\Sigma_{(H+1)^-}=\mathbf0$. This is only terminal bookkeeping after all decisions and changes neither the feasible strategies nor the objective.
\subsection{Deferred details of results}
{For $h\in[H]$ and $M_{1:h}\in\cM_{1:h}$, let $\Gamma_{i,h}(M_{1:h})$ denote the admissible prescriptions $\gamma_{i,h}^a:\cP_{i,h^+}(M_{1:h})\to\cU_i$, and write $\gamma_h^a$ and $\Gamma_h^a(M_{1:h})$ for their joint versions.}
\begin{lemma}
    {At a realized common history, let $\mu_{i,h}(P_{i,h^-}):=g_{i,h}^m(C_{h^-}\cup P_{i,h^-},M_{1:h-1})$ and let $\mu_h$ be the joint communication prescription.  If $\cD$ satisfies Assumptions \ref{ass: evolution rule} and \ref{ass: useless action}, then}
    \begin{equation}
        \bB_{h^-}=\Phi_h^1(\bB_{(h-1)^+},\bZ_h^b,\gamma_{h-1}^a),\qquad
        \bB_{h^+}=\Phi_h^2(\bB_{h^-},\bZ_h^a,\bM_h,\mu_h).
    \end{equation}
    Under Assumption \ref{ass: limit_communication}, $\mu_h$ is constant in the private information; conditional on $\bM_h$, it drops out, giving $\bB_{h^+}=\Phi_h^2(\bB_{h^-},\bZ_h^a,\bM_h)$.
    \label{lemma: closed_loop_belief evolution}
\end{lemma}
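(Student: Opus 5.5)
We establish both belief updates by a Bayes-rule computation on the joint law of $(\bX_h,\bP_{h^\pm})$ given the common record, following the standard CIB belief-update argument of \cite{ashutosh2013team}. It is applied twice: once across the baseline-sharing step from $(h-1)^+$ to $h^-$, and once across the additional-sharing step from $h^-$ to $h^+$. The key point is that, conditionally on the common record, each agent's control or communication rule acts only through a prescription on private information.

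\textbf{Step 1 (from $(h-1)^+$ to $h^-$).} Fix a realized $(C_{(h-1)^+},M_{1:h-1})$. The control actions are $\bU_{i,h-1}=\gamma_{i,h-1}^a(\bP_{i,(h-1)^+})$, with the prescription determined by the common record. Sample $(\bX_{h-1},\bP_{(h-1)^+})\sim\bB_{(h-1)^+}$ and draw the independent noises $\bW_{0,h-1},\bW_{1:n,h}$. Then apply three maps:
\[
\bX_h=A_{h-1}\bX_{h-1}+B_{h-1}\bU_{h-1}+\bW_{0,h-1},\qquad
\bY_h=E_h\bX_h+\bW_{1:n,h},
\]
followed by $\zeta_h$ for $\bP_{h^-}$ and $\chi_h$ for $\bZ_h^b$, as in Assumption~\ref{ass: evolution rule}(a),(c). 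The result is a joint law of $(\bX_h,\bP_{h^-},\bZ_h^b)$ that depends only on $(\bB_{(h-1)^+},\gamma_{h-1}^a)$. Conditioning on $\bZ_h^b$ then yields $\Phi_h^1$.

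Two subtleties need care here. First, $\bU_{h-1}$ may appear inside $\bP_{h^-}$ or $\bZ_h^b$. This is harmless, because it is a deterministic function of $\bP_{(h-1)^+}$ through $\gamma_{h-1}^a$. Useless actions are excluded by Assumption~\ref{ass: useless action}, so no unmodeled label enters. Second, the noises are independent of $\bC_{(h-1)^+}$ and of the past. Since $\bZ_h^b$ is a projection and may carry degenerate Gaussian components, the conditioning should be done with regular conditional distributions, not densities.

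\textbf{Step 2 (from $h^-$ to $h^+$).} Given $\bB_{h^-}$ and the realized $M_{1:h-1}$, the communication action is $\bM_{i,h}=\mu_{i,h}(\bP_{i,h^-})$. By Assumption~\ref{ass: evolution rule}(b),(d), $\bZ_h^a=\phi_h(\bM_h,\bP_{h^-})$ and $\bP_{i,h^+}=\bP_{i,h^-}\setminus\bZ_{i,h}^a$. The state $\bX_h$ is unchanged. Consequently, the joint law of $(\bX_h,\bP_{h^+},\bM_h,\bZ_h^a)$ is the pushforward of $\bB_{h^-}$ under a map determined by $\mu_h$. Conditioning on the observed pair $(\bM_h,\bZ_h^a)$ gives $\Phi_h^2(\bB_{h^-},\bZ_h^a,\bM_h,\mu_h)$. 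Concretely, $\Phi_h^2$ restricts $\bB_{h^-}$ to the event $\{\mu_h(\bP_{h^-})=M_h,\ \phi_h(M_h,\bP_{h^-})=Z_h^a\}$, renormalizes, and then projects out the revealed coordinates.

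\textbf{Step 3 (under Assumption~\ref{ass: limit_communication}).} When $\mu_h$ does not depend on private information, the event $\{\mu_h(\bP_{h^-})=M_h\}$ is either the whole space or null for the realized $M_h$. It therefore has probability one on reachable records, and conditioning on it leaves the belief unchanged. Only the dependence on $M_h$ through $\phi_h(M_h,\cdot)$ survives, which gives $\Phi_h^2(\bB_{h^-},\bZ_h^a,\bM_h)$.

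\textbf{Main obstacle.} The hardest step is the measure-theoretic conditioning in Step 2 for general closed-loop $\mu_h$. The event $\{\mu_h=M_h\}$ may be a positive-probability but irregular set, while $\bZ_h^a$ has continuous, possibly singular Gaussian coordinates. I would handle this as follows. Since $\cM_h$ is finite, first condition on each atom $\{\bM_h=M_h\}$ of positive probability by elementary renormalization. Then disintegrate the restricted measure along the projection $\phi_h(M_h,\cdot)$. Because this map is a coordinate projection, the disintegration is given by the Borel conditional (slice) law of the remaining coordinates, and it is a fixed functional of $(\bB_{h^-},M_h,\mu_h)$ that does not depend on the strategies.
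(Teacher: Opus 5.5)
Your proposal is correct and follows the paper's own argument. You push $\bB_{(h-1)^+}$ forward through the dynamics, the independent noises and the prescription $\gamma_{h-1}^a$, then condition on $\bZ_h^b$; next you impose $\bM_h=\mu_h(\bP_{h^-})$ and $\bZ_h^a=\phi_h(\bM_h,\bP_{h^-})$ and condition on $(\bM_h,\bZ_h^a)$; finally you note that under Assumption~\ref{ass: limit_communication} a $\mu_h$ that is constant in the private information adds no likelihood. Your handling of the conditioning — first the atoms of the finite $\cM_h$, then disintegration along the coordinate projection — is a reasonable elaboration of details the paper leaves implicit.
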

\begin{proof}
    {Given $\bB_{(h-1)^+}$, the fixed model maps, independent noises, and $\gamma_{h-1}^a$ determine the joint law of $(\bX_h,\bP_{h^-},\bZ_h^b)$; conditioning on $\bZ_h^b$ gives $\Phi_h^1$. For the post-sharing update, impose $\bM_h=\mu_h(\bP_{h^-})$ and $\bZ_h^a=\phi_h(\bM_h,\bP_{h^-})$; conditioning on $(\bM_h,\bZ_h^a)$ gives $\Phi_h^2(\bB_{h^-},\bZ_h^a,\bM_h,\mu_h)$. Under Assumption~\ref{ass: limit_communication}, $\mu_h$ is constant in $\bP_{h^-}$, so conditioning on $\bM_h$ adds no private-information likelihood and $\mu_h$ may be omitted.}
\end{proof}

Given any JCCO problem $\cD$ with closed-loop communication strategies, we can expand it into another JCCO problem $\tilde{\cD}$ (with notation system $~\tilde{}~$). First, for any $h\in[H]$, we define the set $\Xi_{h^-}:=\{(i,t)\given i\in[n],t<h, \bI_{i,t^+}\subseteq \bC_{h^-} \text{ under any communication strategy, and }B_{i,t}\neq \mathbf{0}\}$ and $\Xi_{h^+}:=\{(i,t)\given i\in[n],t<h, \bI_{i,t^+}\subseteq \bC_{h^+} \text{ under any communication strategy, and }B_{i,t}\neq \mathbf{0}\}$. The system variables are identified across the two problems; in particular, $\tilde{\bX}_{1:H+1}:=\bX_{1:H+1}$, $\tilde{\bY}_{j,h}:=\bY_{j,h}$, and $\tilde{\bU}_{j,h}:=\bU_{j,h}$ for every $j\in[n]$ and $h\in[H]$. Then, we can expand $\cD$ to $\tilde{\cD}$ as follows: for any $h\in[H]$
\begin{equation}
\begin{aligned}    \tilde{\bC}_{h^-}=\bC_{h^-}\cup \{\tilde{\bU}_{i,t}\given i\in[n],t<h,(i,t)\in \Xi_{h^-} \}, \tilde{\bC}_{h^+}=\bC_{h^+}\cup \{\tilde{\bU}_{i,t}\given i\in[n],t<h, (i,t)\in \Xi_{h^+}\},\\
    \forall i\in[n], \tilde{\bP}_{i,h^-}=\bP_{i,h^-}\backslash\{\tilde{\bU}_{j,t}\given j\in[n],t<h,(j,t)\in \Xi_{h^-}\}, \tilde{\bP}_{i,h^+}=\bP_{i,h^+}\backslash\{\tilde{\bU}_{j,t}\given j\in[n],t<h, (j,t)\in \Xi_{h^+}\}. 
\end{aligned}
    \label{eq: closed_loop_expansion}
\end{equation}
Then, we have the following lemma. 
\begin{lemma}
    Let $\cD$ be a JCCO problem with PN IS {that satisfies}  Assumptions \ref{ass: evolution rule}, \ref{ass: useless action}, \ref{ass: non-degeneracy}, and \ref{ass: limit_communication}, and let $\tilde{\cD}$ be the JCCO expanded  from $\cD$ according to Equation  \eqref{eq: closed_loop_expansion}. Then, $\tilde{\cD}$ satisfies Assumption \ref{ass: evolution rule}, and the two problems have the same set of achievable costs. Hence either problem admits a team-optimal strategy if and only if the other does. Moreover, there exists a function $\varphi^c$ such that for any team-optimal strategy $\tilde{g}_{1:H}^{m,\ast},\tilde{g}_{1:H}^{a,\ast}$ of  $\tilde{\cD}$, $(g_{1:H}^{m,\ast},g_{1:H}^{a,\ast})=\varphi^c(\tilde{g}_{1:H}^{m,\ast},\tilde{g}_{1:H}^{a,\ast},\cD)$ is a team-optimal strategy of $\cD$, and  
    $J_{\cD}(g_{1:H}^{m,\ast},g_{1:H}^{a,\ast})=J_{\tilde{\cD}}(\tilde{g}_{1:H}^{m,\ast},\tilde{g}_{1:H}^{a,\ast})$.
    \label{lemma: closed_loop_expansion_equivalence}
\end{lemma}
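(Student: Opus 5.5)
The plan mirrors the proof of Lemma~\ref{lemma: equivalence of PN and sPN}, adapted in two ways. Control strategies are no longer assumed linear, and communication strategies now depend on information. The system variables are identified across $\cD$ and $\tilde{\cD}$, and the dynamics and costs (including $\cK_h$) coincide. So it suffices to show two things: every strategy of $\cD$ is a strategy of $\tilde{\cD}$, and every strategy of $\tilde{\cD}$ can be simulated in $\cD$ with the same realized actions and communication actions.

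\textbf{Evolution rules.} First I will verify that $\tilde{\cD}$ satisfies Assumption~\ref{ass: evolution rule}. The sets $\Xi_{h^\pm}$ are defined ``under any communication strategy''. Hence the label sets added to $\tilde{\bC}_{h^\pm}$ and removed from $\tilde{\bP}_{i,h^\pm}$ are fixed, and the modified $\tilde\chi_h,\tilde\zeta_{i,h},\tilde\phi_{i,h}$ are still fixed projections. It remains to show that each added label $\tilde{\bU}_{i,t}$ enters the common information at a step where it is an input of the corresponding projection. The argument follows Part~2 of the proof of Theorem~\ref{thm: expansion}:
\begin{itemize}
\item If $(i,t)\in\Xi_{h^-}$, then $B_{i,t}\neq\mathbf 0$.
\item Assumption~\ref{ass: non-degeneracy} then gives some $j\neq i$ whose observation $\bY_{j,t+1}$ is influenced by $\bU_{i,t}$.
\item Partial nestedness and Assumption~\ref{ass: evolution rule} then force $\bI_{i,t^-}\subseteq\bC_{(t+1)^-}$.
\end{itemize}
From this I will conclude that the first time $(i,t)$ enters $\Xi$ is either at $(t+1)^-$ or, through additional sharing, at some later $h^+$. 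At that step $\tilde{\bU}_{i,t}$ can be attached to $\tilde{\bZ}^b$ or $\tilde{\bZ}^a$ as a projection of $(\tilde{\bP},\tilde{\bU}_{h-1})$, provided $\bU_{i,t}$ is still carried in agent $i$'s private information. Otherwise it is simply declared common via the fixed label set. Property~(e), full memory, is preserved because information is only enlarged.

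\textbf{Inclusion $\cD\subseteq\tilde{\cD}$.} Since $\bI_{i,h^\pm}\subseteq\tilde{\bI}_{i,h^\pm}$ and $\bC_{h^-}\subseteq\tilde{\bC}_{h^-}$, every pair $(g^m,g^a)$ of $\cD$ satisfying Assumption~\ref{ass: limit_communication} is admissible in $\tilde{\cD}$ and generates the same trajectory and cost. Thus the achievable costs of $\cD$ are contained in those of $\tilde{\cD}$.

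\textbf{Simulation $\tilde{\cD}\to\cD$ (the map $\varphi^c$).} Given $(\tilde g^m,\tilde g^a)$, I build $(g^m,g^a)=\varphi^c(\tilde g^m,\tilde g^a,\cD)$ recursively in the order $1^-,1^+,2^-,\dots$, keeping the invariant that the actions and communication actions agree pathwise up to the current step.
\begin{itemize}
\item Each extra label $\tilde{\bU}_{i,t}$ with $(i,t)\in\Xi_{h^\pm}$ equals $g^a_{i,t}(\bI_{i,t^+},\bM_{1:t})$. By the definition of $\Xi$, $\bI_{i,t^+}\subseteq\bC_{h^\pm}$, and $\bM_{1:t}$ is publicly observed.
\item So this label is a measurable function of information already available in $\cD$, and this holds uniformly over communication strategies.
\item Substituting it into $\tilde g^m_{i,h}(\tilde{\bC}_{h^-},\bM_{1:h-1})$ gives $g^m_{i,h}$ as a function of $(\bC_{h^-},\bM_{1:h-1})$, so Assumption~\ref{ass: limit_communication} is preserved.
\item Substituting it into $\tilde g^a_{i,h}$ gives $g^a_{i,h}$ on $(\bI_{i,h^+},\bM_{1:h})$.
\end{itemize}
This is functional composition rather than the matrix substitution used in Lemma~\ref{lemma: equivalence of PN and sPN}. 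Pathwise equality of actions and dynamics then yields $J_\cD(g^m,g^a)=J_{\tilde{\cD}}(\tilde g^m,\tilde g^a)$. Together with the previous step, the two sets of achievable costs are equal. Consequently the two problems have the same infimum, and an optimum exists in one if and only if it exists in the other. Any team-optimal strategy of $\tilde{\cD}$ is mapped by $\varphi^c$ to a team-optimal strategy of $\cD$ with equal cost.

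\textbf{Main obstacle.} I expect the main difficulty to be the evolution-rule verification. In the closed-loop setting an action's information may become common only after data-dependent additional sharing, so I must check carefully that ``$\bI_{i,t^+}\subseteq\bC_{h^\pm}$ under any communication strategy'' yields a fixed entry time. I must also check that the label is still an admissible projection input at that time. My first attempt will be to show that, under the non-degeneracy argument, this entry time is always $(t+1)^-$, which reduces the check to the open-loop case already handled in Theorem~\ref{thm: expansion}.
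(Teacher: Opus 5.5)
Your proposal is correct and follows essentially the same route as the paper. You verify Assumption~\ref{ass: evolution rule} via the non-degeneracy/partial-nestedness argument, use the strategy-space inclusion from $\cD$ into $\tilde{\cD}$, and build $\varphi^c$ recursively by recovering each added action $\tilde{\bU}_{j,t}=g^a_{j,t}(\bI_{j,t^+},\bM_{1:t})$ from common information. Your ``main obstacle'' is resolved in the paper exactly as you hoped: since $\bI_{i,t^+}\setminus\bI_{i,t^-}\subseteq\bZ_t^a\subseteq\bC_{(t+1)^-}$, every $(i,t)$ with $B_{i,t}\neq\mathbf{0}$ already lies in $\Xi_{(t+1)^-}$ and in all later $\Xi$'s, so the ``later $h^+$'' case never arises, and neither does your ``declare it common'' fallback (which would not be a valid projection).
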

\begin{proof}
    This proof consists of two {\bf Parts}: {\bf Part 1}:  $\tilde{\cD}$ satisfies Assumption \ref{ass: evolution rule}; {\bf Part 2}:  we can construct optimal strategies of $\cD$ from the optimal strategies of $\tilde{\cD}$, and the optimal values of the two problems are the same.

    \vspace{5pt}
    \noindent\textbf{Part 1:}
    To begin with, we discuss a property of the problem $\tilde{\cD}$. For any $i\in[n],h\in[H]$, if $B_{i,h}=\mathbf{0}$, then from Assumption \ref{ass: useless action}, $\bU_{i,h}\notin \bI_{j,(h')^-}, \bU_{i,h}\notin \bI_{j,(h')^+},\forall h'\in[H]\text{ with }h'>h,\ j\in[n]$. Also from expansion, we know that $\tilde{\bU}_{i,h}=\bU_{i,h}$ will never be added into common information. Therefore, $\tilde{\bU}_{i,h}\notin \tilde{\bI}_{j,(h')^-}, \tilde{\bU}_{i,h}\notin \tilde{\bI}_{j,(h')^+},\forall h'\in[H]\text{ with }h'>h,\ j\in[n]$.

    For $h\in[H-1]$, if $B_{i,h}\neq \mathbf{0}$, Assumption \ref{ass: non-degeneracy} gives $j\neq i$ such that $\bU_{i,h}$ influences $\bY_{j,h+1}$ and hence $\bI_{j,(h+1)^-}$. Partial nestedness gives $\bI_{i,h^-}\subseteq \bI_{j,(h+1)^-}$. Since $j\neq i$, every label private to agent $i$ can enter agent $j$'s information only through common information due to Assumption \ref{ass: evolution rule}, and then it holds $\bI_{i,h^-}\subseteq \bC_{(h+1)^-}$. Moreover, $\bI_{i,h^+}\setminus\bI_{i,h^-}\subseteq\bZ_h^a\subseteq\bC_{(h+1)^-}$, so $\bI_{i,h^+}\subseteq\bC_{(h+1)^-}$. Consequently $(i,h)\in\Xi_{(h')^-}\cap\Xi_{(h')^+}$ for every $h'>h$, and the expansion adds $\tilde{\bU}_{i,h}$ to $\tilde{\bC}_{(h+1)^-}$ immediately. No such argument is needed for $h=H$, because Equation   \eqref{eq: closed_loop_expansion} only adds actions with $t<h\le H$.

    Then, based on the property discussed above, we show that $\tilde{\cD}$ satisfies Assumption \ref{ass: evolution rule}.
    \begin{enumerate}[(a)]
        \item 
        Since $\cD$ satisfies Assumption \ref{ass: evolution rule}, $\bZ_h^b=\chi_h(\bP_{(h-1)^+},\bU_{h-1},\bY_{h})$ for a fixed projection function $\chi_h$. 
        We claim that $\tilde{\bZ}_h^b\subseteq \tilde{\bP}_{(h-1)^+}\cup\{\tilde{\bU}_{h-1},\tilde{\bY}_h\}$:
        This is because it holds that $\bZ_h^b\subseteq \bP_{(h-1)^+}\cup\{\bU_{h-1},\bY_h\}$; from the property discussed above, $\tilde{\bZ}_h^b\backslash\bZ_h^b$ only consists of some control actions at timestep $h-1$; from expansion, $(\bP_{(h-1)^+}\backslash\tilde{\bP}_{(h-1)^+})\subseteq \tilde{\bC}_{(h-1)^+}$, so $(\bP_{(h-1)^+}\backslash\tilde{\bP}_{(h-1)^+})\cap \tilde{\bZ}_h^b=\emptyset$. 
        Also, since $\chi_h$ is a projection function, and the sets $\Xi_{h^-},\Xi_{h^+}$ are predefined and are not affected by the realization of each random variable,  there exists a fixed projection function $\tilde{\chi}_h$ such that $\tilde{\bZ}_h^b=\tilde{\chi}_h(\tilde{\bP}_{(h-1)^+},\tilde{\bU}_{h-1},\tilde{\bY}_h)$.
       \item For each realization $M_{i,h}$, obtain $\tilde\phi_{i,h}(M_{i,h},\cdot)$ from the projection $\phi_{i,h}(M_{i,h},\cdot)$ by deleting exactly the labels already moved to $\tilde{\bC}_{h^-}$ by the expansion.  The retained labels lie in $\tilde{\bP}_{i,h^-}$, while the deleted set is fixed by $\Xi_{h^-},\Xi_{h^+}$ and is independent of all realizations.  Hence $\tilde\phi_{i,h}(M_{i,h},\cdot)$ is a fixed projection and $\tilde{\bZ}_{i,h}^a=\tilde\phi_{i,h}(\tilde{\bM}_{i,h},\tilde{\bP}_{i,h^-})$. Stacking over agents gives $\tilde{\bZ}_h^a=\tilde\phi_h(\tilde{\bM}_h,\tilde{\bP}_{h^-})$.
        \item 
        Since $\cD$ satisfies Assumption \ref{ass: evolution rule}, for any $i\in[n],\bP_{i,h^-}=\zeta_{i,h}(\bP_{i,(h-1)^+},\bU_{i,h-1},\bY_{i,h})$ for a fixed projection function $\zeta_{i,h}$. 
        We claim that $\tilde{\bP}_{i,h^-}\subseteq \tilde{\bP}_{i,(h-1)^+}\cup\{\tilde{\bU}_{i,h-1},\tilde{\bY}_{i,h}\}$:
        This is because it holds that $\bP_{i,h^-}\subseteq \bP_{i,(h-1)^+}\cup\{\bU_{i,h-1},\bY_{i,h}\}$; from expansion, it holds that $\tilde{\bP}_{i,h^-}\subseteq \bP_{i,h^-}$, and for any $t<h$ such that $(i,t)\in \Xi_{(h-1)^+}$, it must hold that $(i,t)\in \Xi_{h^-}$, which means if $\tilde{\bU}_{i,t}\in \bP_{i,(h-1)^+}\backslash\tilde{\bP}_{i,(h-1)^+}$, then $\tilde{\bU}_{i,t}\in \bP_{i,h^-}\backslash\tilde{\bP}_{i,h^-}$, so $\tilde{\bU}_{i,t}\notin \tilde{\bP}_{i,h^-}$. 
        Also, since $\zeta_{i,h}$ is a projection function, and the sets $\Xi_{h^-},\Xi_{h^+}$ are predefined and are not affected by the realization of each random variable,  there exists a fixed projection function $\tilde{\zeta}_{i,h}$ such that $\tilde{\bP}_{i,h^-}=\tilde{\zeta}_{i,h}(\tilde{\bP}_{i,(h-1)^+},\tilde{\bU}_{i,h-1},\tilde{\bY}_{i,h})$. Therefore, there exists a function $\tilde{\zeta}_h$ such that $\tilde{\bP}_{h^-}=\tilde{\zeta}_{h}(\tilde{\bP}_{(h-1)^+},\tilde{\bU}_{h-1},\tilde{\bY}_{h})$.
        \item For each $i\in[n], \tilde{\bP}_{i,h^+}=\tilde{\bP}_{i,h^-}\backslash\tilde{\bZ}_{i,h}^a$. 
        \item For each $i\in[n]$, the construction of $\tilde{\cD}$ only adds some actions into $\tilde{\bC}_{h^-},\tilde{\bC}_{h^+}$ and does not change the observations. Thus $\tilde{\bI}_{i,h^-}\subseteq\tilde{\bI}_{i,h^+}$ and $\tilde{\bY}_{i,h}\in\tilde{\bI}_{i,h^-}$ for $h\in[H]$, while $\tilde{\bI}_{i,h^+}\subseteq\tilde{\bI}_{i,(h+1)^-}$ for $h\in[H-1]$.
    \end{enumerate}
    
    \noindent\textbf{Part 2:}
    From the construction of $\tilde{\cD}$, we know that the system dynamics and communication cost are the same for both $\cD$ and $\tilde{\cD}$, and it holds that $\bI_{i,h^-}\subseteq \tilde{\bI}_{i,h^-}$ and $\bI_{i,h^+}\subseteq \tilde{\bI}_{i,h^+}$ for any $i\in[n],h\in[H]$. So the agents in $\tilde{\cD}$ have larger strategy spaces than those in $\cD$. Thus every cost achievable in $\cD$ is achievable in $\tilde{\cD}$.

    More strongly, for every feasible strategy $(\tilde g_{1:H}^m,\tilde g_{1:H}^a)$ of $\tilde\cD$, we recursively construct a feasible strategy $(g_{1:H}^m,g_{1:H}^a)$ of $\cD$ with the same samplewise communication and control actions. For $h=1$ and any $i\in[n]$, note that $\bI_{i,1^-}=\tilde{\bI}_{i,1^-},\bI_{i,1^+}=\tilde{\bI}_{i,1^+}$ always holds. Then, set $g_{i,1}^{m}:=\tilde{g}_{i,1}^{m}$ and $g_{i,1}^{a}:=\tilde{g}_{i,1}^{a}$. It is immediate that these two strategies output the same communication and control actions.

    Now assume that $\tilde{g}_{1:h-1}^{m},\tilde{g}_{1:h-1}^{a}$ and $g_{1:h-1}^{m},g_{1:h-1}^{a}$ output the same actions.  The only additional variables in $\tilde{\bC}_{h^-}\backslash\bC_{h^-}$ are actions $\tilde{\bU}_{j,t}$ with $j\in[n]$ and $t<h$. For each such action, the expansion gives $\bI_{j,t^+}\subseteq \bC_{h^-}$ under any additional sharing. Meanwhile, since $g_{j,t}^{a}$ outputs the same control action as $\tilde{g}_{j,t}^{a}$, we can recover $\tilde{\bU}_{j,t}=g_{j,t}^{a}(\bI_{j,t^+},\bM_{1:t})$. Therefore, we can construct $\tilde{\bC}_{h^-}$ from $\bC_{h^-},\bM_{1:h-1}$ and $g_{1:h-1}^{a}$ by recovering all possible $\tilde{\bU}_{j,t}\in \tilde{\bC}_{h^-}\backslash\bC_{h^-}$. We then construct $g_{i,h}^{m}(\bC_{h^-},\bM_{1:h-1})=\tilde{g}_{i,h}^{m}(\tilde{\bC}_{h^-},\tilde{\bM}_{1:h-1})$, and these two strategies will output the same communication action. 

    Similarly, assume that $\tilde{g}_{1:h}^{m},\tilde{g}_{1:h-1}^{a}$ and $g_{1:h}^{m},g_{1:h-1}^{a}$ output the same actions. For any $i$, the only additional variables in $\tilde{\bI}_{i,h^+}\backslash\bI_{i,h^+}$ are actions $\tilde{\bU}_{j,t}$ with $j\in[n]$ and $t<h$. For each such action, the expansion gives $B_{j,t}\neq \mathbf{0}$ and $\bI_{j,t^+}\subseteq \bC_{h^+}$ under any additional sharing. Meanwhile, since $g_{j,t}^{a}$ outputs the same control action as $\tilde{g}_{j,t}^{a}$, we can recover $\tilde{\bU}_{j,t}=g_{j,t}^{a}(\bI_{j,t^+},\bM_{1:t})$. Therefore, we can construct $\tilde{\bI}_{i,h^+}$ from $\bI_{i,h^+},\bM_{1:h}$ and $g_{1:h-1}^{a}$  by recovering all possible $\tilde{\bU}_{j,t}\in \tilde{\bI}_{i,h^+}\backslash\bI_{i,h^+}$. We then construct $g_{i,h}^{a}(\bI_{i,h^+},\bM_{1:h})=\tilde{g}_{i,h}^{a}(\tilde{\bI}_{i,h^+},\tilde{\bM}_{1:h})$, and these two strategies will output the same control action. 

    Recursion defines $(g_{1:H}^{m},g_{1:H}^{a})=\varphi^c(\tilde g_{1:H}^{m},\tilde g_{1:H}^{a},\cD)$.  The two problems then have identical samplewise actions, dynamics, and costs, so
    $J_{\cD}(g_{1:H}^{m},g_{1:H}^{a})=J_{\tilde\cD}(\tilde g_{1:H}^{m},\tilde g_{1:H}^{a})$ for every feasible expanded strategy. Thus every cost achievable in $\tilde\cD$ is achievable in $\cD$. Together with the preceding inclusion, the achievable cost sets coincide, and team optimality and its existence transfer in both directions.
\end{proof}

Now, we introduce the following theorem as a full version of Theorem \ref{theorem: closed_loop satisfying SI-CIB and Gaussian}.
\begin{theorem}[Full version of Theorem \ref{theorem: closed_loop satisfying SI-CIB and Gaussian}]
    \label{theorem: full_version_closed_loop satisfying SI-CIB and Gaussian}
    Let $\cD$ be a JCCO problem with PN IS that satisfies Assumptions \ref{ass: evolution rule}, \ref{ass: useless action}, \ref{ass: non-degeneracy}, and \ref{ass: limit_communication},  and let $\tilde{\cD}$ be the JCCO expanded from $\cD$. Then $\tilde{\cD}$ is a JCCO problem satisfying the SI-CIB condition, and Assumptions \ref{ass: evolution rule}, \ref{ass: useless action}, \ref{ass: non-degeneracy}, \ref{ass: limit_communication}. Moreover, for any $h\in[H]$, $\tilde{\bB}_{h^-},\tilde{\bB}_{h^+}$ admit Gaussian distributions. Formally, $\tilde{\bB}_{h^-}=\cN(\tilde{\bTheta}_{h^-},\tilde{\Sigma}_{h^-}), \tilde{\bB}_{h^+}=\cN(\tilde{\bTheta}_{h^+},\tilde{\Sigma}_{h^+})$, and
    \begin{equation}
    \label{equ: full_version_closed-loop evolution of mean and covariance}
        \begin{aligned}
\tilde{\bTheta}_{h^-}&=\tilde{\Psi}_{h^-}^1(\tilde{\bTheta}_{(h-1)^+},\tilde{\bM}_{1:h-1},\tilde{\bZ}_h^b),~~ \tilde{\bTheta}_{h^+}=\tilde{\Psi}_{h^+}^1(\tilde{\bTheta}_{h^-},\tilde{\bM}_{1:h},\tilde{\bZ}_h^a),\\
            \tilde{\Sigma}_{h^-}&=\tilde{\Psi}_{h^-}^2(\tilde{\bM}_{1:h-1}),~~\qquad\qquad \qquad\tilde{\Sigma}_{h^+}=\tilde{\Psi}_{h^+}^2(\tilde{\bM}_{1:h}),
        \end{aligned}
    \end{equation} 
    for some functions $\tilde{\Psi}_{h^-}^1,\tilde{\Psi}_{h^+}^1,\tilde{\Psi}_{h^-}^2,\tilde{\Psi}_{h^+}^2$ that do not depend on the strategies $\tilde{g}_{1:h}^m,\tilde{g}_{1:h}^a$. Therefore, it holds that
    \begin{equation}
    \label{equ: full_version_closed-loop mean and covariance}
    \tilde{\bTheta}_{h^-}=\tilde{\Psi}_{h^-}^3(\tilde{\bC}_{h^-},\tilde{\bM}_{1:h-1}), \tilde{\bTheta}_{h^+}=\tilde{\Psi}_{h^+}^3(\tilde{\bC}_{h^+},\tilde{\bM}_{1:h}),
    \end{equation}
    for some functions $\tilde{\Psi}_{h^-}^3,\tilde{\Psi}_{h^+}^3$ that do not depend on the strategies $\tilde{g}_{1:h}^m,\tilde{g}_{1:h}^a$.
\end{theorem}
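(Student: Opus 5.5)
The argument runs in four stages, following the open-loop template (Theorem~\ref{thm: expansion}, Lemma~\ref{lemma: SI-conditional-mean-covariance}) with the communication history $\tilde{\bM}_{1:h}$ carried as an extra public index.

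\textbf{Stage 1: the four assumptions for $\tilde{\cD}$.} Assumption~\ref{ass: evolution rule} is already given by Lemma~\ref{lemma: closed_loop_expansion_equivalence}. Assumption~\ref{ass: useless action} follows from the ``property'' shown in Part~1 of that lemma: the expansion never adds an action with $B_{i,h}=\mathbf{0}$. Assumption~\ref{ass: non-degeneracy} concerns only the system matrices, which are unchanged. Assumption~\ref{ass: limit_communication} also carries over. Since $\bC_{h^-}\subseteq\tilde{\bC}_{h^-}$, a strategy measurable in $(\bC_{h^-},\bM_{1:h-1})$ is measurable in $(\tilde{\bC}_{h^-},\tilde{\bM}_{1:h-1})$, and we take the expanded strategy class to be the common-information-based one.

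\textbf{Stage 2: SI-CIB.} I will prove the two bullets of Definition~\ref{def: SI-CIB-closed_loop} jointly, by induction over the interleaved sequence $1^-,1^+,2^-,\dots$, following Part~3 of the proof of Theorem~\ref{thm: expansion}.

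For the $h^+$ step, Lemma~\ref{lemma: closed_loop_belief evolution} shows that, under Assumption~\ref{ass: limit_communication}, $\tilde{\bB}_{h^+}=\Phi_h^2(\tilde{\bB}_{h^-},\tilde{\bZ}_h^a,\tilde{\bM}_h)$ with no dependence on $\mu_h$. Strategy independence is therefore inherited directly from $\tilde{\bB}_{h^-}$.

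For the $h^-$ step, the update $\Phi_h^1$ depends on $\gamma_{h-1}^a$. I will handle this as in the open-loop proof: change one control law $\tilde g^a_{i_1,h_1}$ at a time.
\begin{itemize}
\item If $B_{i_1,h_1}=\mathbf{0}$, the action affects neither the state nor any later information set.
\item If $B_{i_1,h_1}\neq\mathbf{0}$, Part~1 of Lemma~\ref{lemma: closed_loop_expansion_equivalence} gives $(i_1,h_1)\in\Xi_{(h_1+1)^-}$. So both $\tilde{\bI}_{i_1,h_1^+}$ and $\tilde{\bU}_{i_1,h_1}$ lie in $\tilde{\bC}_{(h_1+1)^-}$. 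Conditioning on a common realization reachable under both strategies therefore fixes both the input and the output of that law.
\end{itemize}

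Communication laws are functions of common information alone, so conditioning on $\tilde{\bM}_{1:h}$ fixes their outputs. Their only effect is through the realized $M$'s, and those are already conditioned on.

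Finally, I chain the single-component replacements, exactly as at the end of Part~3 of the proof of Theorem~\ref{thm: expansion}.

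\textbf{Stage 3: Gaussianity.} I fix the path $M_{1:h}$. For any reachable pair $(\tilde C_{h^\pm},M_{1:h})$, I pick realizations of the observations and controls consistent with it. I then replace every control law by the constant law outputting those controls, and every communication law by the constant law outputting $M_{1:h}$. By SI-CIB, the belief is unchanged.

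Under these constant strategies the information structure is a fixed open-loop pattern. The pair $(\tilde{\bS}_{h^\pm},\tilde{\bC}_{h^\pm})$ is then jointly Gaussian, possibly singular. The pseudo-inverse conditioning formula gives a Gaussian conditional law whose covariance depends only on the label pattern, hence only on $M_{1:h}$.

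\textbf{Stage 4: recursions and $\tilde{\Psi}^3$.} The mean is updated by a Kalman-type linear step: first with $\tilde{\bZ}_h^b$ (the $h^-$ update), then with $\tilde{\bZ}_h^a$ (the $h^+$ update). The gains are indexed by $M_{1:h}$, which yields \eqref{equ: full_version_closed-loop evolution of mean and covariance}. As in Part~1 of the proof of Theorem~\ref{thm: dp_in_decLQG}, any known effective controls appearing in the update are projections of the new common information. Unrolling the recursion from $\tilde{\bTheta}_{0^+}=\bm0$ then gives \eqref{equ: full_version_closed-loop mean and covariance}.

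\textbf{Main obstacle.} I expect Stage 2 to be the hardest step: showing that conditioning on $\tilde{\bM}_{1:h}$ and $\tilde{\bZ}^a_h$ leaks no strategy-dependent likelihood. A pitfall is that the set $\Xi$ is defined ``under any communication strategy'', so I must check that $\tilde{\bI}_{i_1,h_1^+}\subseteq\tilde{\bC}_{(h_1+1)^-}$ holds for every realized $M$ and not merely for some. I would settle this first, using the inclusion $\bI_{i,h^+}\setminus\bI_{i,h^-}\subseteq\bZ_h^a$.
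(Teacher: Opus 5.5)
Your proposal is correct and follows essentially the same route as the paper. The assumptions carry over as in its Part~1. SI-CIB holds because every effective control has both its input and output revealed in $\tilde{\bC}_{(t+1)^-}$, while communication rules depend only on the conditioned common variables and $\tilde{\bM}$. Gaussianity and the $\tilde{\bM}$-indexed recursions then follow by fixing the message path and treating conditioned control values as affine offsets. Your explicit interleaved induction, constant-strategy replacement, and Kalman-type mean update simply spell out what the paper's Parts~2--3 state more tersely.
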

\begin{proof}
    This proof consists of three {\bf Parts}: {\bf Part 1:} $\tilde{\cD}$ satisfies Assumptions \ref{ass: evolution rule}, \ref{ass: useless action}, \ref{ass: non-degeneracy}, and \ref{ass: limit_communication}; {\bf Part 2:} $\tilde{\cD}$ satisfies the SI-CIB condition;  {\bf Part 3:} Beliefs $\tilde{\bB}_{h^-},\tilde{\bB}_{h^+}$ admit Gaussian distributions, and the mean and covariance satisfy Equation   \eqref{equ: full_version_closed-loop evolution of mean and covariance} and \eqref{equ: full_version_closed-loop mean and covariance}.
    Throughout this proof, untilded quantities belong to the original problem $\cD$, whereas tilded quantities belong to its strict expansion $\tilde{\cD}$.
    
    \vspace{6pt}
    \noindent\textbf{Part 1:}
    From Lemma \ref{lemma: closed_loop_expansion_equivalence}, we know that $\tilde{\cD}$ satisfies Assumption \ref{ass: evolution rule}; Since for any $i\in[n],h\in[H], t<h$, we add control actions $\tilde{\bU}_{j,t}$ into common information at timestep $h$ in the expansion procedure only if $\tilde{B}_{j,t}=B_{j,t}\neq \mathbf{0}$, $\tilde{\cD}$ satisfies Assumption \ref{ass: useless action}; Since expansion procedure does not change the system dynamics and $\tilde{E}_{i,h}=E_{i,h}$, $\forall i\in[n],h\in[H]$, $\tilde{\cD}$ satisfies Assumption \ref{ass: non-degeneracy}; After expansion, $\tilde{g}_{i,h}^m,\forall i\in[n],h\in[H]$ can still only take $(\tilde{\bC}_{h^-},\tilde{\bM}_{1:h-1})$ as input, and $\tilde{\cD}$ satisfies Assumption \ref{ass: limit_communication}.

    \vspace{6pt}
    \noindent\textbf{Part 2:} 
    Condition on a reachable common history and message sequence. Communication rules then add no private-information likelihood. If $\tilde B_{i,t}=0$, Assumption~\ref{ass: useless action} makes the corresponding control irrelevant to later states and information. If $\tilde B_{i,t}\neq0$, the argument in Part~1 gives $\tilde{\bI}_{i,t^+}\subseteq\tilde{\bC}_{(t+1)^-}$, and the strict expansion also places $\tilde{\bU}_{i,t}$ there; the control rule's message-history input is among the separately conditioned messages. Thus past strategy rules impose only compatibility with conditioned variables and do not change either conditional belief, almost surely. Hence $\tilde\cD$ satisfies SI-CIB.
    \vspace{5pt}
    \noindent\textbf{Part 3:}
    For a fixed message sequence, conditioning on the common history fixes the strategy inputs and outputs identified in Part~2. After these compatibility relations are removed, the remaining model variables are affine functions of the primitive Gaussian variables. Hence $\tilde{\bB}_{h^-}$ and $\tilde{\bB}_{h^+}$ are Gaussian almost surely, possibly singular. Conditioned control values affect only affine offsets, so the covariances depend only on $\tilde{\bM}_{1:h-1}$ and $\tilde{\bM}_{1:h}$, respectively; thus $\tilde\Sigma_{h^-}=\tilde\Psi_{h^-}^2(\tilde{\bM}_{1:h-1})$ and $\tilde\Sigma_{h^+}=\tilde\Psi_{h^+}^2(\tilde{\bM}_{1:h})$. At $H+1$, the synthetic terminal disclosure gives $\tilde{\bB}_{(H+1)^-}=\delta_{\tilde{\bX}_{H+1}}$, $\tilde{\bTheta}_{(H+1)^-}=\tilde{\bX}_{H+1}$, and $\tilde\Sigma_{(H+1)^-}=\mathbf0$.

    Because $\tilde{\cD}$ satisfies Assumptions \ref{ass: evolution rule}, \ref{ass: useless action}, and \ref{ass: limit_communication}, Lemma \ref{lemma: closed_loop_belief evolution} implies that for each $h\in[H]$, 
    \begin{align*}
\tilde{\bB}_{h^-}=\tilde{\Phi}_h^1(\tilde{\bB}_{(h-1)^+}, \tilde{\bZ}_h^b,\tilde{\gamma}_{h-1}^a),\qquad  \tilde{\bB}_{h^+}=\tilde{\Phi}_h^2(\tilde{\bB}_{h^-},\tilde{\bZ}_h^a,\tilde{\bM}_h),
    \end{align*}
    for some functions  $\tilde{\Phi}_h^1,\tilde{\Phi}_h^2$. Meanwhile, we know that $\tilde{\bB}_{h^-}, \tilde{\bB}_{h^+}$ are Gaussian distributions that can be characterized by the means $\tilde{\bTheta}_{h^-},\tilde{\bTheta}_{h^+}$ and the covariances $\tilde{\Sigma}_{h^-},\tilde{\Sigma}_{h^+}$, respectively. Then, it holds that $(\tilde{\bTheta}_{h^-},\tilde{\Sigma}_{h^-})=\tilde{\Phi}_{h^-}^3(\tilde{\bTheta}_{(h-1)^+}, \tilde{\Sigma}_{(h-1)^+},
    \tilde{\bZ}_h^b,\tilde{\gamma}_{h-1}^a), (\tilde{\bTheta}_{h^+},\tilde{\Sigma}_{h^+})=\tilde{\Phi}_{h^+}^3(\tilde{\bTheta}_{h^-}, \tilde{\Sigma}_{h^-},
    \tilde{\bZ}_h^a,\tilde{\bM}_h)$ for some functions  $\tilde{\Phi}_{h^-}^3,\tilde{\Phi}_{h^+}^3$.  From $\tilde{\Sigma}_{h^-}=\tilde{\Psi}_{h^-}^2(\tilde{\bM}_{1:h-1}),\tilde{\Sigma}_{h^+}=\tilde{\Psi}_{h^+}^2(\tilde{\bM}_{1:h})$, we have 
    \begin{align*}
        \tilde{\bTheta}_{h^-}=\tilde{\Psi}_{h^-}^1(\tilde{\bTheta}_{(h-1)^+},\tilde{\bM}_{1:h-1},\tilde{\bZ}_h^b,\tilde{\gamma}_{h-1}^a),
        \tilde{\bTheta}_{h^+}=\tilde{\Psi}_{h^+}^1(\tilde{\bTheta}_{h^-},\tilde{\bM}_{1:h},\tilde{\bZ}_h^a),
    \end{align*}
    for some functions $\tilde{\Psi}_{h^-}^1,\tilde{\Psi}_{h^+}^1$. Furthermore, since $\tilde{\cD}$ satisfies the SI-CIB condition, $\tilde{\bTheta}_{h^-}$ does not depend on $\tilde{\gamma}_{h-1}^a$, and $\tilde{\Psi}_{h^-}^1,\tilde{\Psi}_{h^+}^1$ do not depend on the strategies $(\tilde{g}_{1:h}^m,\tilde{g}_{1:h}^a)$. Finally, since $\tilde{\bC}_{h^-}=(\cup_{t=1}^{h}\tilde{\bZ}_t^b)  \cup (\cup_{t=1}^{h-1}\tilde{\bZ}_t^a)$, and $\tilde{\bC}_{h^+}=(\cup_{t=1}^{h}\tilde{\bZ}_t^b)  \cup (\cup_{t=1}^{h}\tilde{\bZ}_t^a)$, we know that $
    \tilde{\bTheta}_{h^-}=\tilde{\Psi}_{h^-}^3(\tilde{\bC}_{h^-},\tilde{\bM}_{1:h-1}), \tilde{\bTheta}_{h^+}=\tilde{\Psi}_{h^+}^3(\tilde{\bC}_{h^+},\tilde{\bM}_{1:h}),$ for some functions $\tilde{\Psi}_{h^-}^3,\tilde{\Psi}_{h^+}^3$ that do not depend on the strategies  $\tilde{g}_{1:h}^m,\tilde{g}_{1:h}^a$. This completes the proof. 
\end{proof}

For $\tilde{\cD}$, the common-information coordinator states are $(\tilde{\bTheta}_{h^-},\tilde{\bM}_{1:h-1})$ before sharing and $(\tilde{\bTheta}_{h^+},\tilde{\bM}_{1:h})$ after sharing. Lemma~\ref{lemma: closed_loop_belief evolution}, SI-CIB, and the coordinator-policy correspondence give Algorithm~\ref{main algorithm}. We assume that all displayed Bellman minima have choices defining admissible strategies.

\subsection{Algorithm to solve PN JCCO with closed-loop communication strategies}

Expand $\cD$ into $\tilde{\cD}$ by Equation   \eqref{eq: closed_loop_expansion}. Under the preceding assumption, Theorem~\ref{theorem: full_version_closed_loop satisfying SI-CIB and Gaussian} and Algorithm~\ref{main algorithm} yield a team-optimal strategy of $\tilde{\cD}$, which Lemma~\ref{lemma: closed_loop_expansion_equivalence} maps to one of $\cD$.

\begin{algorithm}[!h]
    \caption{Dynamic Programming for SI-CIB JCCO with Closed-loop Communication Strategies}
    \label{main algorithm}
    \begin{algorithmic}[1]
    \REQUIRE JCCO $\tilde{\cD}$ satisfying the SI-CIB condition and Assumptions \ref{ass: evolution rule}, \ref{ass: useless action}, \ref{ass: non-degeneracy}, and \ref{ass: limit_communication}.
    \FOR{each $\tilde{M}_{1:H}\in\tilde{\cM}_{1:H}$, and each realization $\tilde{\Theta}_{(H+1)^-}=\tilde{X}_{H+1}\in \tilde{\cX}$}
    \STATE $\tilde{V}_{(H+1)^-}(\tilde{\Theta}_{(H+1)^-},\tilde{M}_{1:H})\leftarrow \tilde{\Theta}_{(H+1)^-}^\top \tilde{Q}_{H+1}^1\tilde{\Theta}_{(H+1)^-}$
    \ENDFOR
    \FOR{$h=H$ to 1}
    \FOR{each $\tilde{M}_{1:h}\in\tilde{\cM}_{1:h}$, and each realization $\tilde{\Theta}_{h^+}\in \tilde{\cX}\times\tilde{\cP}_{h^+}(\tilde{M}_{1:h})$}
    \STATE Choose $\tilde{\gamma}_h^{a,\ast}\in \argmin_{\tilde{\gamma}_h^a\in\tilde{\Gamma}_h^a(\tilde{M}_{1:h})}$\\
    $\qquad\EE[\tilde{\bX}_h^\top \tilde{Q}_h^1\tilde{\bX}_h+\tilde{\bU}_h^\top \tilde{Q}_h^2\tilde{\bU}_h+\tilde{V}_{(h+1)^-}(\tilde{\bTheta}_{(h+1)^-},\tilde{M}_{1:h})\given \tilde{\bTheta}_{h^+}=\tilde{\Theta}_{h^+},\tilde{\bM}_{1:h}=\tilde{M}_{1:h},\tilde{\gamma}_h^a]$ 
    \STATE $\tilde{V}_{h^+}(\tilde{\Theta}_{h^+},\tilde{M}_{1:h})\leftarrow$\\
    $\qquad \EE[\tilde{\bX}_h^\top \tilde{Q}_h^1\tilde{\bX}_h+\tilde{\bU}_h^\top \tilde{Q}_h^2\tilde{\bU}_h+\tilde{V}_{(h+1)^-}(\tilde{\bTheta}_{(h+1)^-},\tilde{M}_{1:h})\given \tilde{\bTheta}_{h^+}=\tilde{\Theta}_{h^+},\tilde{\bM}_{1:h}=\tilde{M}_{1:h}, \tilde{\gamma}_h^{a,\ast}]$ 
    \FOR{each $\tilde{C}_{h^+}$ such that $\tilde{\Psi}_{h^+}^3(\tilde{C}_{h^+},\tilde{M}_{1:h})=\tilde{\Theta}_{h^+}$}
    \FOR{$i\in[n]$}
    \STATE $\tilde{g}_{i,h}^{a,\ast}(\tilde{C}_{h^+},\cdot,\tilde{M}_{1:h})\leftarrow\tilde{\gamma}_{i,h}^{a,\ast}(\cdot)$
    \ENDFOR
    \ENDFOR
    \ENDFOR
    \FOR{each $\tilde{M}_{1:h-1}\in\tilde{\cM}_{1:h-1}$, and each realization $\tilde{\Theta}_{h^-}\in \tilde{\cX}\times\tilde{\cP}_{h^-}(\tilde{M}_{1:h-1})$}
    \STATE Choose $\tilde{M}_h^\ast\in \argmin_{\tilde{M}_h\in \tilde{\cM}_h}\EE[\tilde{\cK}_h(\tilde{M}_h)+\tilde{V}_{h^+}(\tilde{\bTheta}_{h^+}, \tilde{M}_{1:h})\given \tilde{\bTheta}_{h^-}=\tilde{\Theta}_{h^-}, \tilde{\bM}_{1:h-1}=\tilde{M}_{1:h-1}, \tilde{\bM}_h=\tilde{M}_h]$ 
    \STATE $\tilde{V}_{h^-}(\tilde{\Theta}_{h^-}, \tilde{M}_{1:h-1})\leftarrow \EE[\tilde{\cK}_h(\tilde{M}_h^\ast)+\tilde{V}_{h^+}(\tilde{\bTheta}_{h^+}, (\tilde{M}_{1:h-1},\tilde{M}_h^\ast))\given \tilde{\bTheta}_{h^-}=\tilde{\Theta}_{h^-},\tilde{\bM}_{1:h-1}=\tilde{M}_{1:h-1}, \tilde{\bM}_h=\tilde{M}_h^\ast]$ 
    \FOR{each $\tilde{C}_{h^-}$ such that $\tilde{\Psi}_{h^-}^3(\tilde{C}_{h^-},\tilde{M}_{1:h-1})=\tilde{\Theta}_{h^-}$}
    \FOR{$i\in[n]$}
    \STATE $\tilde{g}_{i,h}^{m,\ast}(\tilde{C}_{h^-},\tilde{M}_{1:h-1})\leftarrow \tilde{M}_{i,h}^\ast$
    \ENDFOR
    \ENDFOR
    \ENDFOR
    \ENDFOR
    \RETURN $(\tilde{g}^{m,\ast}_{1:H},\tilde{g}^{a,\ast}_{1:H})$ 
    \end{algorithmic}
\end{algorithm}

\end{document}